\setlist[enumerate]{nosep}
\newtheorem{theorem}{Theorem}
\newtheorem{definition}{Definition} 
\newtheorem{proposition}{Proposition} 
\newtheorem{lemma}{Lemma}
\newtheorem{proof}{Proof}
\newtheorem{corollary}{Corollary}
\newcommand*{\dif}{\mathop{}\!\mathrm{d}}
\newtheorem{remark}{Remark}
\begin{document}
%
\title{On the Properties of Kullback-Leibler Divergence Between Multivariate Gaussian Distributions}

\author{Yufeng~Zhang, Wanwei~Liu, Zhenbang~Chen, Ji~Wang, Kenli~Li
	\IEEEcompsocitemizethanks{
		\IEEEcompsocthanksitem Yufeng Zhang and Kenli Li are with the College of Computer Science and Electronic Engineering, Hunan University, Changsha, China.\protect\\
		E-mail: yufengzhang@hnu.edu.cn, lkl@hnu.edu.cn
		\IEEEcompsocthanksitem Wanwei Liu, Zhenbang Chen and Ji Wang are with the College of Computer, National University of Defense Technology, Changsha, China.
		Ji Wang is also with State Key
Laboratory of High Performance
Computing, National University of
Defense Technology.\protect\\
		E-mail: wwliu@nudt.edu.cn, zbchen@nudt.edu.cn, wj@nudt.edu.cn
		
		\IEEEcompsocthanksitem Kenli Li and Ji Wang are the corresponding authors.
	}
	\thanks{Manuscript received xx xx, 2022; revised xx xx, 2022.}
}

%
%

\markboth{Journal of \LaTeX\ Class Files,~Vol.~14, No.~8, August~2015}%
{Shell \MakeLowercase{\textit{et al.}}: Bare Demo of IEEEtran.cls for Computer Society Journals}
%



\IEEEtitleabstractindextext{%
\begin{abstract}
Kullback-Leibler (KL) divergence is one of the most important divergence measures between probability distributions. 
In this paper, we prove several properties of KL divergence between multivariate Gaussian distributions. First, for any two $n$-dimensional Gaussian distributions $\mathcal{N}_1$ and $\mathcal{N}_2$, we give the supremum of $KL(\mathcal{N}_1||\mathcal{N}_2)$ when $KL(\mathcal{N}_2||\mathcal{N}_1)\leq \varepsilon\ (\varepsilon>0)$. For small $\varepsilon$, we show that the supremum is $\varepsilon + 2\varepsilon^{1.5} + O(\varepsilon^2)$.  This quantifies the approximate symmetry of small KL divergence between Gaussians. We also find the infimum of $KL(\mathcal{N}_1||\mathcal{N}_2)$ when $KL(\mathcal{N}_2||\mathcal{N}_1)\geq M\ (M>0)$. We give the conditions when the supremum and infimum can be attained. Second, for any three $n$-dimensional Gaussians $\mathcal{N}_1$, $\mathcal{N}_2$, and $\mathcal{N}_3$, we find an upper bound of $KL(\mathcal{N}_1||\mathcal{N}_3)$ if $KL(\mathcal{N}_1||\mathcal{N}_2)\leq \varepsilon_1$ and $KL(\mathcal{N}_2||\mathcal{N}_3)\leq \varepsilon_2$ for $\varepsilon_1,\varepsilon_2\ge 0$. For small $\varepsilon_1$ and $\varepsilon_2$, we show the upper bound is $3\varepsilon_1+3\varepsilon_2+2\sqrt{\varepsilon_1\varepsilon_2}+o(\varepsilon_1)+o(\varepsilon_2)$. This reveals that KL divergence between Gaussians follows a relaxed triangle inequality. Importantly, all the bounds in the theorems presented in this paper are independent of the dimension $n$. Finally, We discuss the applications of our theorems in explaining counterintuitive phenomenon of flow-based model, deriving deep anomaly detection algorithm, and extending one-step robustness guarantee to multiple steps in safe reinforcement learning. 
\end{abstract}

\begin{IEEEkeywords}
Kullback-Leibler divergence, statistical divergence, multivariate Gaussian distribution, mathematical optimization, Lambert $W$ function,  deep learning, flow-based model,  reinforcement learning
\end{IEEEkeywords}}

\maketitle

\IEEEdisplaynontitleabstractindextext

%
\IEEEpeerreviewmaketitle

\IEEEraisesectionheading{\section{Introduction}\label{sec:introduction}}

%
%
%
%


\IEEEPARstart{A}{} statistical divergence measures the ``distance'' between probability distributions. Let $X$ be a space of probability distributions with the same support. A statistical divergence  $D:X\times X\rightarrow \mathbb{R}^{+}$ ($\mathbb{R}^{+}$ is the set of non-negative real numbers) should satisfy (a) non-negativity: $D(p,q)\geq 0$ and (b) identity of indiscernibles: $D(p,p)=0$, where $p,q$ are probability densities. Another stronger concept,  statistical distance, also measures the distance between probability distributions. A statistical distance should satisfy two extra properties including  (c) symmetry: $D(p,q)=D(q,p)$ and (d) triangle inequality: $D(p, q) \leq D(p, g) + D(g, q)$, where $p,q$ and $g$ are probability densities.

Kullback-Leibler (KL) divergence, also referred to as relative entropy \cite{kullback1997information}, plays a key role in many fields including machine learning \cite{PRML,goodfellow2016deep}, information theory \cite{cover2012elements}, and statistics \cite{pardo2018statistical}, \textit{etc}. 
The KL divergence between two continuous probability densities $p(x)$ and $q(x)$ is defined as
\begin{align}
	KL(p(x)||q(x))=\int p(x)\log \dfrac{p(x)}{q(x)}\dif x
\end{align}

KL divergence is not a proper distance \cite{kullback1997information}. Firstly, KL divergence is not symmetric. It might happen that the forward KL divergence\footnote{ Here we can choose to call $KL(p||q)$ or $KL(q||p)$ as forward KL divergence. The  terminologies ``forward'' and ``reverse'' is just for convenience.} $KL(p||q)$ is very small but the reverse KL divergence  $KL^*(p||q)=KL(q||p)$ is very large.  Secondly, KL divergence dose not satisfy the triangle inequality. This brings obstacles in applying KL divergence in many circumstances.

KL divergence is one member of more generalized divergence families including $f$-divergence (also called $\phi$-divergence) \cite{ali1966general}, Bregman divergence \cite{bregman1967relaxation},  and R\'{e}nyi divergence \cite{renyi1961measures}. 
For example, the widely used $f$-divergence includes many commonly used measures including KL divergence, Jensen-Shannon divergence, and squared Hellinger distance \cite{pardo2018statistical}. Many $f$-divergence are not proper distance metrics. 
KL divergence also has a deep connection with other information measures. For example, the second derivative of KL divergence is Fisher information metric. By taking the second-order Taylor expansion, KL divergence between two nearly distributions can be approximated with fisher information matrix \cite{kullback1997information}. Furthermore, forward and reverse KL divergence have the same second derivatives at the point where two distributions are equal. Therefore, KL divergence is locally approximately symmetric when two distributions are close to each other.

Meanwhile, Gaussian distribution is one of the most important distributions and central to statistics. It is also pervasive in many fields including machine learning and information theory. 
The probability density function of an $n$-dimensional Gaussian distribution is given by 
\begin{equation}
	\begin{aligned}
		\mathcal{N}(\bm{\mu},\bm{\Sigma})=\dfrac{1}{(2\pi)^{n/2}\vert \bm{\Sigma}\vert^{1/2}}\exp \left(-\dfrac{1}{2}(\bm{x}-\bm{\mu})^\top\bm{\Sigma}^{-1}(\bm{x}-\bm{\mu})\right)
	\end{aligned}
\end{equation}
Here $\bm{\mu}\in \mathbb{R}^n$ is the mean and $\bm{\Sigma}\in \bm{S}^n_{++}$ is the covariance matrix, where $\bm{S}^n_{++}$ is the space of symmetric positive definite $n\times n$ matrices.
Gaussian distribution constitutes the basis for more complicated distributions. For example, the mixture of Gaussians, namely Gaussian Mixture Model (GMM) has a wide range of applications due to its power of approximation \cite{PRML}.

The KL divergence between two $n$-dimensional Gaussians $\mathcal{N}_1(\bm{\mu}_1,\bm{\Sigma}_1)$ and $\mathcal{N}_2(\bm{\mu}_2,\bm{\Sigma}_2)$  has the following closed form \cite{pardo2018statistical}
\begin{equation}
	\small
	\begin{aligned}\label{equ:KL_two_Gaussian}
		&KL(\mathcal{N}_1(\bm{\mu}_1,\bm{\Sigma}_1)||\mathcal{N}_2(\bm{\mu}_2,\bm{\Sigma}_2)) \\
		=&\dfrac{1}{2}\left(\log \dfrac{|\bm{\bm{\Sigma}_2}|}{|\bm{\Sigma}_1|}+\mathop{\mathrm{Tr}}(\bm{\bm{\Sigma}_2}^{-1}\bm{\Sigma}_1)+(\bm{\mu}_2-\bm{\mu}_1)^\top\bm{\Sigma}_2^{-1}(\bm{\mu}_2-\bm{\mu}_1)-n\right)
	\end{aligned}
\end{equation}
where the logarithm is taken to base $e$ and $\mathop{\mathrm{Tr}}$ is the trace of matrix. 
Due to the good form of Gaussians, one may expect that KL divergence between Gaussians may have some good properties. 
However, as like many other distributions, KL divergence between Gaussians is not symmetric and does not satisfy the triangle inequality either.

The concept of KL divergence has been proposed for seventy years \cite{kullback1997information}. 
It is surprising that the properties of KL divergence between Gaussians have not been investigated thoroughly.

In this paper, we investigate the following research problems.

\begin{enumerate}
	\item 
	For any two Gaussians $\mathcal{N}_1(\bm{\mu}_1,\bm{\Sigma}_1)$ and $\mathcal{N}_2(\bm{\mu}_2,\bm{\Sigma}_2)$, when forward KL divergence $KL(\mathcal{N}_1||\mathcal{N}_2)$ is bounded by a small number $\varepsilon$, what is the supremum of reverse KL divergence $KL(\mathcal{N}_2||\mathcal{N}_1)$? Although KL divergence is locally approximately symmetric, we want to step further in the investigation on such approximate symmetry in a Gaussian case. 
	
	\item For any two Gaussians $\mathcal{N}_1(\bm{\mu}_1,\bm{\Sigma}_1)$ and $\mathcal{N}_2(\bm{\mu}_2,\bm{\Sigma}_2)$, when forward KL divergence $KL(\mathcal{N}_1||\mathcal{N}_2)$ is not smaller than a number $M$, what is the infimum of reverse KL divergence $KL(\mathcal{N}_2||\mathcal{N}_1)$? This problem is dual to the first problem. 
	
	\item Does the KL divergence between Gaussians follow some property similar to the triangle inequality? Precisely, for any three Gaussians $\mathcal{N}_i(\bm{\mu}_i,\bm{\Sigma}_i)\ (i\in {1,2,3})$, when $KL(\mathcal{N}_1||\mathcal{N}_2)$ and $KL(\mathcal{N}_2||\mathcal{N}_3)$ are bounded by small numbers $\varepsilon_1, \varepsilon_2$, respectively, how large $KL(\mathcal{N}_1||\mathcal{N}_3)$ can be?
\end{enumerate}

Note that the third research problem is different from the several existing general Pythagoras theorems satisfied by KL divergence \cite{cover2012elements, cluster-bregman-div, functional_bregman_divergence, Renyi-and-KL2014}. In the existing general Pythagoras theorems, the bounds are dependent on the given distributions. In this paper, we want to obtain bounds that are independent of the parameters of Gaussians and hold for any Gaussian distributions. We will discuss this point in Section \ref{sec:quasi_triangle}.


As far as we know, we are the first to propose and investigate the above research problems.
These research problems are motivated by our research on deep anomaly detection with flow-based model \cite{dinh2014nice,dinh2016realnvp,kingma2018glow}. Flow-based model is capable of provide explicit likelihood for input.
Researchers have found that flow-based model may assign higher likelihoods for out-of-distribution (OOD) data than in-distribution (ID) data. For example, Glow \cite{kingma2018glow} trained on CIFAR-10 assigns higher likelihoods for SVHN. However, we can not sample OOD data from the model with prior although they are assigned higher likelihoods by the model. This counterintuitive phenomenon has not been explained satisfactorily. In our research, we find that the KL divergence between several Gaussian(-like) distributions are vital in explaining the behavior of flow-based model. This inspires us to conduct research on the properties of KL divergence between Gaussians. 
In this context, the parameters of distributions are learned from data or dependent on the given inputs. It is impossible to identify the parameters or the KL divergence before the model is trained. We only know that some bound is guaranteed. Therefore, the existing general Pythagoras theorems are not applicable due to their dependence on the parameters of distributions.
The theorems proved in this paper provide a solid foundation for explaining above phenomenon as well as our anomaly detection method using flow-based model. Our theorems can also be used as general conclusions in related fields including machine learning, information theory and statistics. For example, our theorems have been used as key support in safe reinforcement learning framework \cite{liu2022constrainedSafeRL} after we post our last version of this manuscript on Arxiv \cite{onThePropertiesOfKL2021Yufeng}. We will elaborate these applications in Section \ref{sec:application}.

\textbf{Contributions.} The contributions of this paper are as follows. 

Given any three $n$-dimensional Gaussians $\mathcal{N}_1, \mathcal{N}_2$ and $\mathcal{N}_3$, 
\begin{enumerate}
	\item We prove that when $KL(\mathcal{N}_2||\mathcal{N}_1)\leq \varepsilon$ for $\varepsilon>0$ the supremum of $KL(\mathcal{N}_1||\mathcal{N}_2)$ is $\frac{1}{2}\left(
	(-W_{0}(-e^{-(1+2\varepsilon)}))^{-1}+\log(-W_{0}(-e^{-(1+2\varepsilon)})) -1 \right)$,   where $W_0$ is the principal branch of Lambert $W$ function. 
	We give the condition when the supremum can be attained. For small $\varepsilon$, we show the supremum is $\varepsilon + 2\varepsilon^{1.5} + O(\varepsilon^2)$. 
	This quantifies the approximate symmetry of small KL divergence between Gaussians. 
	
	\item We find the infimum of $KL(\mathcal{N}_1||\mathcal{N}_2)$ if $KL(\mathcal{N}_2||\mathcal{N}_1)\geq M$ for $M>0$. We give two proofs for this result. The first proof has the similar structure with that for the above supremum. The second proof is based on the proof for the supremum. We also give the condition when the infimum can be attained.
	
	\item We find an upper bound of $KL(\mathcal{N}_1||\mathcal{N}_3)$ if $KL(\mathcal{N}_1||\mathcal{N}_2)\leq \varepsilon_1$ and $KL(\mathcal{N}_2||\mathcal{N}_3)\leq \varepsilon_2$ for $\varepsilon_1,\varepsilon_2\ge 0$. For small $\varepsilon_1$ and $\varepsilon_2$, we show the upper bound is $3\varepsilon_1+3\varepsilon_2+2\sqrt{\varepsilon_1\varepsilon_2}+o(\varepsilon_1)+o(\varepsilon_2)$ . This indicates that KL divergence between Gaussians follows a relaxed triangle inequality.
	
	\item All the bounds in our theorems are independent of the dimension $n$. This  is a critical property especially in contexts where dimensionality has a fundamental impact. 
	
	\item We show several applications of the theorems proved in this paper including explaining counterintuitive phenomenon in flow-based model, deriving anomaly detection algorithm, and extending robustness guarantee in safe reinforcement learning.
\end{enumerate}

The remaining part of this paper is organized as follows. In Section \ref{sec:def_lemma_notation} we prepare lemmas that will be used in all theorems. In Section \ref{sec:quasi_symmetry_kl} we investigate the supremum (infimum) of reverse KL divergence between Gaussians when forward KL divergence is bounded.
In Section \ref{sec:quasi_triangle} we investigate the relaxed triangle inequality of KL divergence between Gaussians. 
In Section \ref{sec:application} we discuss the applications of the theorems proved in this paper.
In Section \ref{sec:related-work} we discuss related work.
Finally, we conclude in Section \ref{sec:conclusion}.

\section{Lemmas and Notations} \label{sec:def_lemma_notation}

%
%
%

Before presenting our results, we introduce the famous transcendental function, the Lambert $W$ function, which occurs almost everywhere in this paper.
\begin{definition}{\textbf{Lambert $W$ Function\cite{lamberWfunction, corless1996lambertw}}.}\label{def:lamber_w}
	The inverse function of function $y=xe^x$ is called Lambert $W$ function $y=W(x)$.  
\end{definition}

When $x\in \mathbb{R}$, $W$ is a multivalued function with two branches $W_0, W_{-1}$, where $W_0$ is the principal branch (also called branch 0) and  $W_{-1}$ is the branch $-1$. The derivative of $W$ is 
\begin{gather}\label{equ:deriv_W}
	W'(x)=\dfrac{1}{x+e^{W(x)}}=\dfrac{W(x)}{x(1+W(x))}\ (x\neq 0, -e^{-1})
\end{gather}

Function $f(x)=x-\log x$ lies in the core of our problems. 
In the following, we prove some useful lemmas related to $f(x)$. 
\begin{lemma}\label{thm:lemma_1_d_fx}
	Given function $f(x)=x-\log x$ $(x\in \mathbb{R}^{++})$ ($\mathbb{R}^{++}$ is the set of positive real numbers), the following propositions hold.
	\begin{enumerate}[label=(\alph*), ref=\ref{thm:lemma_1_d_fx}\alph*]
		\item \label{equ:f_convex_minimum}
		$f(x)$ is strictly convex and takes the minimum value $1$ at $x=1$. 
		\item $f(x)>f(1/x)\ \text{for}\ x>1$ and $f(x)<f(1/x)\ \text{for}\ 0<x<1$.
		\label{prp:f_x_invs_x_1_d}
		\item \label{thm:reverse_f_def}
		The inverse function of $f$ is $f^{-1}(x)=-W(-e^{-x})\ (x\geq 1)$.
		
		\item \label{thm:solution_f_x}
		The solutions of equation $x-\log x=1+t\ (t\geq 0)$ are $w_1(t)  =  -W_{0}(-e^{-(1+t)})\in (0,1]$ and $w_2(t) =  -W_{-1}(-e^{-(1+t)})\in [1,+\infty)$.
		It is easy to know $w_1(0)=w_2(0)=1$. We treat $w_1(t),w_2(t)$ as functions of $t$.
		\item \label{thm:deriv_w1_w2}
		The derivatives of $w_1(t)$ and $w_2(t)$ are
		\begin{align}
			w_1'(t)= \dfrac{-w_1(t)}{1-w_1(t)}=\dfrac{W_{0}(-e^{-(1+t)})}{W_{0}(-e^{-(1+t)})+1}\label{equ:deriv_w1}<0\\
			w_2'(t) = \dfrac{-w_2(t)}{1-w_2(t)}=\dfrac{W_{1}(-e^{-(1+t)})}{W_{1}(-e^{-(1+t)})+1}\label{equ:deriv_w2}>0
		\end{align}
		
		\item \label{thm:inequality_solution_fx}
		For $t>0$, $f(w_1(t))<f(\frac{1}{w_1(t)}),
		f(\frac{1}{w_2(t)})<f(w_2(t))$.
		
		\item \label{prp:sup_f_x_invers_1_d}
		If $f(x)\leq 1+t\ (t\geq 0)$, then $w_1(t)\leq x \leq w_2(t)$ and 
		\begin{align} 
			S(t)=\sup\limits_{t\geq 0\atop f(x)\leq 1+t
			} f\big(\dfrac{1}{x}\big)
			=f(\dfrac{1}{w_1(t)})\label{equ:sup_f_x_invs}
		\end{align}
		
		\item \label{prp:inf_f_x_invers_1_d}
		If $f(x)\geq  1+t\ (t\geq 0)$, then $0<x\leq w_1(t) \vee x\geq w_2(t)$
		and
		\begin{align}
			I(t)=\inf\limits_{t\geq 0 \atop f(x)\geq 1+t} f(\dfrac{1}{x})=f(\dfrac{1}{w_2(t)})	
		\end{align}

		
		\item \label{thm:deriv_w2_inv_w2_compare}
		For $t\geq 0$, $f'(w_2(t))\leq -f'(\frac{1}{w_2(t)})$.
		
		\item \label{thm:f_w1w1_w2w2_plugin}
		For $t_1, t_2\geq 0$,
		\begin{align}
			f(w_1(t_1)w_2(t_2))
			=t_1+t_2+2+w_1(t_1)w_1(t_2)-w_1(t_1)-w_1(t_2)\\
			f(w_2(t_1)w_2(t_2))
			=t_1+t_2+2+w_2(t_1)w_2(t_2)-w_2(t_1)-w_2(t_2)\label{equ:plugin_fw2w2}
		\end{align}

	\end{enumerate}
\end{lemma}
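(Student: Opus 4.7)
The plan is to build everything from an analysis of the single function $f(x) = x - \log x$ on $(0, \infty)$ and the behavior of its level sets under the involution $x \mapsto 1/x$, invoking the Lambert $W$ representation only where an explicit closed form is required. I would first dispatch (a) by computing $f'(x) = 1 - 1/x$ and $f''(x) = 1/x^2 > 0$, which yields strict convexity and the minimum $f(1) = 1$. For (b), I would examine $g(x) := f(x) - f(1/x) = x - 1/x - 2\log x$; a direct differentiation gives $g'(x) = (1 - 1/x)^2 \geq 0$, so $g$ is nondecreasing on $(0,\infty)$, and the two inequalities fall out of $g(1) = 0$.

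For (c) and (d), I would rewrite $y = x - \log x$ as $-x\, e^{-x} = -e^{-y}$; the substitution $u = -x$ turns this into $u e^{u} = -e^{-y}$, so by Definition \ref{def:lamber_w} we get $u = W(-e^{-y})$ and therefore $x = -W(-e^{-y})$. Since $f$ is U-shaped and the argument $-e^{-(1+t)}$ lies in $[-e^{-1}, 0)$ where $W$ has exactly two real branches, the two preimages correspond to $W_0$ (giving $w_1(t) \in (0,1]$) and $W_{-1}$ (giving $w_2(t) \in [1,\infty)$). Part (e) then follows by implicit differentiation of $w_i(t) - \log w_i(t) = 1 + t$, which gives $w_i'(t) = w_i(t)/(w_i(t)-1)$; the Lambert $W$ form is obtained by plugging in the identity (\ref{equ:deriv_W}), and the signs of $w_i(t) - 1$ determine the monotonicity.

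Parts (f), (g), and (h) all consist of transporting a level set of $f$ through $x \mapsto 1/x$ and then reapplying the monotonicity and convexity of $f$. For (f), we have $w_1(t) < 1 < w_2(t)$ when $t > 0$, so (b) applies directly at these points. For (g), the sub-level set $\{x : f(x) \leq 1+t\}$ is exactly $[w_1(t), w_2(t)]$ by convexity and (d); its image under inversion is $[1/w_2(t), 1/w_1(t)]$, and because $f$ is convex with minimum at $1$, its supremum on this interval is attained at an endpoint; (f) identifies $1/w_1(t)$ as the endpoint farther from $1$, hence as the maximizer. Part (h) is the dual statement on the super-level set $(0, w_1(t)] \cup [w_2(t), \infty)$: after inversion one minimizes $f$ on two intervals each abutting $1$, and (f) again selects the correct side. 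For (i) I would simply note that $f'(w_2(t)) + f'(1/w_2(t)) = 2 - w_2(t) - 1/w_2(t) \leq 0$ by AM--GM applied to $w_2(t) > 0$.

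Finally, (j) is purely algebraic: expanding $f(w_i(t_1) w_j(t_2)) = w_i(t_1) w_j(t_2) - \log w_i(t_1) - \log w_j(t_2)$ and substituting $\log w_k(t_k) = w_k(t_k) - 1 - t_k$ from the defining equation in (d), then collecting terms, yields the stated identities for both the $w_1 w_1$ and $w_2 w_2$ products (and, by the same substitution, any mixed product if intended). The main obstacle is not depth but bookkeeping: one must keep track of which branch of $W$ corresponds to which $w_i$, correctly match the endpoints of the inverted interval $[1/w_2(t), 1/w_1(t)]$ to the correct extremum of $f$ in (g) and (h), and remain careful about strict versus non-strict inequalities at $t = 0$ (where $w_1 = w_2 = 1$). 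Once the Lambert $W$ bijections in (c)--(d) are in hand, every remaining item reduces to a short calculus or algebra check.
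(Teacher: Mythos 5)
Your proposal is correct and follows essentially the same route as the paper's own proof: derivative computations for (a)--(b), the algebraic rearrangement $(-y)e^{-y}=-e^{-x}$ to Lambert $W$ for (c)--(d), differentiation of the defining equation for (e), transporting the level/sub-level sets of $f$ through $x\mapsto 1/x$ and comparing endpoints via (b) for (f)--(h), an elementary inequality for (i) (your AM--GM step is equivalent to the paper's $(w_2-1)/w_2\le w_2-1$), and the substitution $\log w_k(t_k)=w_k(t_k)-1-t_k$ for (j). The only nit is in (g): the maximizing endpoint is identified by the value comparison $f(1/w_1(t))>f(w_1(t))=1+t=f(w_2(t))>f(1/w_2(t))$ coming from (f), not by which endpoint is ``farther from $1$'' (convexity alone does not order values across the minimum), but since you invoke (f) the argument goes through.
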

\begin{proof}
	The details of the proof are shown in Appendix \ref{sec:app_proof_thm:lemma_1_d_fx}.

		$\hfill\square$ 
\end{proof}


The notations used in this paper are summarized in Table \ref{tbl:notations}. 
\begin{table} [ht]
	\small
	\vspace{-0pt}
	\caption{Notations.} 
	\label{tbl:notations}  
	\begin{center}  
		\begin{tabular}{cc}  
			\toprule[1pt]
			$f(x)$  & $x-\log x\ (x\in \mathbb{R}^{++})$ \\
			$W(x)$ & the Lambert $W$ function\\
			$W_0(x)$ & the principal branch (branch 0) of $W(x)$\\
			$W_{-1}(x)$ & the branch $-1$ of $W(x)$\\
			$w_1(t)$ & the smaller solution of  $f(x)=1+t\ (t\geq 0)$\\
			$w_2(t)$ & the larger solution of  $f(x)=1+t\ (t\geq 0)$\\
			$\bar{\bm{f}}(x_1,\dots,x_n)$ & $\sum_{i=1}^{n}f(x_i)$\\
			$\lambda$ & the eigenvalue of matrix\\
			$\lambda^*$ & the largest eigenvalue of matrix\\
			$\lambda_*$ & the least eigenvalue of matrix\\
			$f_l(x)$ & $f(1-x)-1\ (0\leq x<1)$\\
			$f_r(x)$ & $f(x+1)-1\ (x\geq 0)$\\
			$g_l(\varepsilon)$ & $f_l^{-1}(\varepsilon)$, the inverse function of $f_l$\\
			$g_r(\varepsilon)$ & $f_r^{-1}(\varepsilon)$, the inverse function of $f_r$\\
			$\mathcal{N}(0,I)$ & standard Gaussian distribution\\
			\bottomrule[1pt] 
		\end{tabular}  
	\end{center}  
\end{table}

\section{Bounds of Forward and Reverse KL Divergence Between Gaussians}\label{sec:quasi_symmetry_kl}
In this section, we give the supremum of reverse KL divergence when forward KL divergence is less than or equal to a positive number $\varepsilon$.  We also show that the supremum is small if $\varepsilon$ is small. These conclusions quantify the approximate symmetry of small KL divergence between Gaussians. We also give the infimum of reverse KL divergence when forward divergence is greater than or equal to a positive number $M$. Furthermore, we give the conditions when the supremum and infimum can be attained.


\subsection{Supremum of Reverse KL Divergence Between Gaussians}\label{sec:small_kl_quasi_symmetry}

We want to know how large the reverse KL divergence can be when forward KL divergence is bounded by a  number $\varepsilon$. The following Theorem \ref{thm:duality_small_KL_general} gives the supremum of reverse KL divergence.

\begin{theorem}\label{thm:duality_small_KL_general}
	For any two $n$-dimensional Gaussian distributions $\mathcal{N}(\bm{\mu}_1,\bm{\Sigma}_1)$ and $\mathcal{N}(\bm{\mu}_2,\bm{\Sigma}_2)$, 
	if $KL(\mathcal{N}(\bm{\mu}_1,\bm{\Sigma}_1)||\mathcal{N}(\bm{\mu}_2,\bm{\Sigma}_2))\leq \varepsilon (\varepsilon\ge 0)$, then
	\begin{equation}\nonumber
		\begin{aligned}
			&KL(\mathcal{N}(\bm{\mu}_2,\bm{\Sigma}_2)||\mathcal{N}(\bm{\mu}_1,\bm{\Sigma}_1))\\
			\leq & \dfrac{1}{2}\left(
			\dfrac{1}{-W_{0}(-e^{-(1+2\varepsilon)})}-\log \dfrac{1}{-W_{0}(-e^{-(1+2\varepsilon)})} -1 \right)
		\end{aligned}
	\end{equation}
	The supremum is attained when the following two conditions hold.
	\begin{enumerate}[(1)]
		\item There exists only one eigenvalue $\lambda_j$ of $B_2^{-1}\bm{\Sigma}_1(B_2^{-1})^\top$ or $B_1^{-1}\bm{\Sigma}_2(B_1^{-1})^\top$ equal to $-W_{0}(-e^{-(1+2\varepsilon)})$ and all other eigenvalues $\lambda_i$ $(i\neq j)$ are equal to $1$, where $B_{1}=P_{1}D_{1}^{1/2}$, $P_{1}$ is an orthogonal matrix whose columns are the eigenvectors of $\bm{\Sigma}_{1}$, $D_{1}=diag(\lambda_1,\dots,\lambda_n)$ whose diagonal elements are the corresponding eigenvalues, $B_2$ is defined in the same way as $B_1$ except on $\bm{\Sigma}_2$.  
		\item $\bm{\mu}_1=\bm{\mu}_2$.
	\end{enumerate}  
\end{theorem}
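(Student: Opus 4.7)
The plan is to reduce the supremum over pairs of Gaussians to a one-dimensional convex scalar problem via simultaneous diagonalization and a per-coordinate decoupling.

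First, I will use the affine invariance of KL divergence to normalize $\mathcal{N}_2=\mathcal{N}(\bm{0},I)$ (apply $x\mapsto B_2^{-1}(x-\bm{\mu}_2)$ with $B_2B_2^\top=\bm{\Sigma}_2$), and then rotate so that the transformed $\bm{\Sigma}_1$ is diagonal, $\bm{\Sigma}_1=\mathrm{diag}(\lambda_1,\ldots,\lambda_n)$, with each $\lambda_i>0$ equal to the $i$-th eigenvalue of $B_2^{-1}\bm{\Sigma}_1(B_2^{-1})^\top$. Denoting the transformed mean difference by $\bm{y}=(y_1,\ldots,y_n)$, the closed form \eqref{equ:KL_two_Gaussian} becomes
\begin{equation*}
KL(\mathcal{N}_1||\mathcal{N}_2)=\tfrac{1}{2}\sum_{i=1}^n\bigl[f(\lambda_i)-1+y_i^2\bigr],\quad KL(\mathcal{N}_2||\mathcal{N}_1)=\tfrac{1}{2}\sum_{i=1}^n\bigl[f(1/\lambda_i)-1+y_i^2/\lambda_i\bigr].
\end{equation*}
Setting $a_i:=f(\lambda_i)-1+y_i^2\ge 0$ and $b_i:=f(1/\lambda_i)-1+y_i^2/\lambda_i\ge 0$, the problem becomes: maximize $\sum_i b_i$ subject to $\sum_i a_i\le 2\varepsilon$, over free $\lambda_i>0$ and $y_i\in\mathbb{R}$.

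Second, since both the constraint and the objective are separable, I will perform a per-coordinate maximization. Fix $a_i=c$ and set $s:=y_i^2\in[0,c]$. By Lemma~\ref{thm:solution_f_x}, $\lambda_i\in\{w_1(c-s),w_2(c-s)\}$. On the branch $\lambda_i=w_2(c-s)\ge 1$, Lemma~\ref{prp:f_x_invs_x_1_d} gives $f(1/\lambda_i)-1\le c-s$, so $b_i\le c$. On the branch $\lambda_i=w_1(c-s)=:u\le 1$, a direct computation yields $b_i(s)=(1+s)/u+\log u-1$, and using $du/ds=u/(1-u)$ from Lemma~\ref{thm:deriv_w1_w2} I obtain, after cancellation, the clean identity
\begin{equation*}
\frac{db_i}{ds}=\frac{-s}{u(1-u)}\le 0,
\end{equation*}
so $b_i$ is maximized at $s=0$, giving $b_i\le h(c):=f(1/w_1(c))-1$. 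A short calculation using $u-\log u=1+c$ also shows $h(c)\ge c$ with equality only at $c=0$, so the $w_1$-branch strictly dominates.

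Third, I will reduce to the scalar problem of maximizing $\sum_i h(c_i)$ over $c_i\ge 0$ with $\sum_i c_i\le 2\varepsilon$. Chain-ruling with Lemma~\ref{thm:deriv_w1_w2} gives $h'(c)=1/w_1(c)$, which is strictly increasing because $w_1$ is strictly decreasing; hence $h$ is strictly convex on $[0,\infty)$ with $h(0)=0$, and therefore strictly superadditive. The unique maximum is attained by placing the entire budget on one index, say $c_j=2\varepsilon$ and $c_i=0$ for $i\ne j$, yielding the upper bound $\tfrac{1}{2}h(2\varepsilon)=\tfrac{1}{2}\bigl(f(1/w_1(2\varepsilon))-1\bigr)$, which coincides with the stated formula after substituting $w_1(2\varepsilon)=-W_0(-e^{-(1+2\varepsilon)})$ via Lemma~\ref{thm:reverse_f_def}. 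Equality tracking then produces the two conditions: the vanishing constraints $c_i=0$ force $\lambda_i=1$ and $y_i=0$ for $i\ne j$; the per-coordinate step forces $y_j=0$ and $\lambda_j=w_1(2\varepsilon)$; collectively, $\bm{\mu}_1=\bm{\mu}_2$ and $B_2^{-1}\bm{\Sigma}_1(B_2^{-1})^\top$ has exactly one eigenvalue equal to $-W_0(-e^{-(1+2\varepsilon)})$ with all others equal to $1$ (and symmetrically for the companion matrix, which has the reciprocal eigenvalue). The main obstacle, I expect, is the per-coordinate analysis: the fact that shifting any ``budget'' into the mean term is strictly suboptimal rests on the crisp cancellation $db_i/ds=-s/[u(1-u)]$, which is not a priori visible and relies sensitively on the implicit-derivative identity $w_1'(t)=-w_1/(1-w_1)$ from Lemma~\ref{thm:deriv_w1_w2}. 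Once past this step, the convex-analytic aggregation and the Lambert-$W$ rewriting are routine.
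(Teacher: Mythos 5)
Your proposal is correct, and I verified the two identities it hinges on: with $u=w_1(c-s)$ one indeed gets $\frac{db_i}{ds}=\frac{1}{u}+\bigl(\frac{1}{u}-\frac{1+s}{u^2}\bigr)\frac{u}{1-u}=\frac{-s}{u(1-u)}\le 0$, and $h'(c)=\bigl(\frac{1}{u}-\frac{1}{u^2}\bigr)\bigl(-\frac{u}{1-u}\bigr)=\frac{1}{w_1(c)}$, so $h$ is strictly convex, increasing, with $h(0)=0$, hence superadditive, and the bound $\tfrac12 h(2\varepsilon)$ together with the stated attainment conditions follows. Your route shares the paper's skeleton (affine reduction to a standard Gaussian, the function $f(x)=x-\log x$ and its Lambert-$W$ inverse, concentration of a budget by convexity), but the handling of the mean is genuinely different. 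The paper never separates the mean coordinate-wise: it splits the total budget $2\varepsilon$ globally into a covariance share $\varepsilon_1$ (with $\sum_i f(\lambda_i)\le n+\varepsilon_1$, concentrated onto one eigenvalue via Lemma \ref{thm:lemma_sup_fx_invs_nd}, i.e.\ convexity of $\Delta(\varepsilon)=f(1/w_1(\varepsilon))-f(w_1(\varepsilon))$) and a mean share $2\varepsilon-\varepsilon_1$ (handled by the coupled bound $\bm{\mu}^{\top}\bm{\Sigma}^{-1}\bm{\mu}\le \lambda'^{*}\bm{\mu}^{\top}\bm{\mu}\le (2\varepsilon-\varepsilon_1)/w_1(\varepsilon_1)$), and then closes with a second, scalar optimization showing $U(\varepsilon_1)$ is nondecreasing so $\varepsilon_1=2\varepsilon$. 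You instead diagonalize so that both divergences, including their mean terms, decompose per coordinate, let each coordinate's budget $c_i$ split internally between mean ($s$) and eigenvalue ($c_i-s$), and show via the cancellation $db_i/ds=-s/[u(1-u)]$ that any mean share is strictly wasted; a single convexity argument on $h(c)=\Delta(c)+c$ then replaces both of the paper's allocation steps. What your approach buys is a unified, one-pass argument that makes the optimality of $\bm{\mu}_1=\bm{\mu}_2$ transparent per coordinate; what the paper's buys is that it avoids needing the mean to separate coordinate-wise, at the cost of the extra $U(\varepsilon_1)$ stage. Both arguments give the same supremum and the same equality conditions (your parenthetical that the companion matrix $B_1^{-1}\bm{\Sigma}_2(B_1^{-1})^\top$ carries the reciprocal eigenvalue is also right, since the two matrices are $MM^\top$ and $(M^\top M)^{-1}$ for $M=B_2^{-1}B_1$).
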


\noindent\textbf{Overview of proof of Theorem \ref{thm:duality_small_KL_general}}.

Theorem \ref{thm:duality_small_KL_general} can be seen as the following optimization problem $\bm{P_1}$.
\begin{align}
	\text{maximize}\ & KL(\mathcal{N}(\bm{\mu}_2,\bm{\Sigma}_2)||\mathcal{N}(\bm{\mu}_1,\bm{\Sigma}_1)) \\
	\text{\textit{s.t.}}\ & KL(\mathcal{N}(\bm{\mu}_1,\bm{\Sigma}_1)||\mathcal{N}(\bm{\mu}_2,\bm{\Sigma}_2))\leq \varepsilon
\end{align}
Our aim is to solve  problem $\bm{P_1}$ analytically.
The proof consists of the following several steps. 
\begin{enumerate}
	\item \textit{Invertible linear transformation}. We use a linear transformation to turn one of $\mathcal{N}_1$ and $\mathcal{N}_2$ into standard Gaussian. Since diffeomorphism preserves KL divergence \cite{nielsen2020elementary}, both the objective function and the constraints in $\bm{P_1}$ can be simplified. 
	\item \textit{Reducing to new optimization problem}. We reduce $\bm{P_1}$ to the following core problem  $\bm{P_2}$.
	\begin{align}
		\text{maximize}\ &\bar{\bm{f}}(\dfrac{1}{x_1},\dots,\dfrac{1}{x_n})\\
		\text{\textit{s.t.}}\ &\bar{\bm{f}}(x_1,\dots,x_n)\leq n+\varepsilon'
	\end{align}
	where $\bar{\bm{f}}(x_1,\dots,x_n)=\sum_{i=1}^{n}f(x_i)=\sum_{i=1}^{n}x_i-\log x_i\ (x_i\in (0,\infty))$.
	\item \textit{Investigating $f(x)$}. $f(x)$ lies in the core of the problem. We have proven several properties of $f(x)$. The inverse function of $f(x)$ is $f^{-1}=-W(-e^{-x})\ (x\geq 1)$. This allows us to conduct further analysis in all other parts of this paper. Another foundamental property is the relation between $f(x)$ and $f(\frac{1}{x})$, which provides a base for subsequent steps.
	
	\item \textit{Concentrating $\varepsilon'$}. In problem $\bm{P_2}$, the supremum of $\bar{\bm{f}}(\frac{1}{x_1},\dots,\frac{1}{x_n})$ is affected by the domain of each dimension, which is in turn determined by how $\varepsilon'$ is allocated to these dimensions. We call $(\varepsilon_1,\cdots,\varepsilon_n)$ where $\sum_{1}^{n}\varepsilon_i=\varepsilon$ as an \textit{allocation}.
	We prove that $\bar{\bm{f}}(\frac{1}{x_1},\dots,\frac{1}{x_n})$ takes its maximum when $\varepsilon'$ is allocated to only one dimension (\textit{i.e.}, an ``extreme'' allocation). In other words, there exists one $\varepsilon_j=\varepsilon$ and $\varepsilon_i=0(i\neq j)$. The key idea is to prove the convexity of function $\Delta(\varepsilon) = f(\frac{1}{w_1(\varepsilon)})-f(w_1(\varepsilon))$.
\end{enumerate}


We put the key steps of proof of  Theorem \ref{thm:duality_small_KL_general} into Lemma \ref{thm:lemma_sup_fx_invs_nd} and Lemma \ref{thm:duality_small_KL}. After that, we give the main proof.

\begin{lemma}\label{thm:lemma_sup_fx_invs_nd}
	Given $n$-ary function $\bar{\bm{f}}(\bm{x})=\bar{\bm{f}}(x_1,\dots,x_n)=\sum_{i=1}^{n}x_i-\log x_i\ (x_i\in \mathbb{R}^{++})$,
	if
	$\bar{\bm{f}}(x_1,\dots,x_n)\leq n+\varepsilon (\varepsilon>0)$, then 
	\begin{align}\label{equ:n_ary_f_inverse_bound}			
		&\sup \bar{\bm{f}}(\dfrac{1}{x_1},\dots,\dfrac{1}{x_n})\nonumber\\
		=&\dfrac{1}{-W_{0}(-e^{-(1+\varepsilon)})}-\log \dfrac{1}{-W_{0}(-e^{-(1+\varepsilon)})} +n -1
	\end{align}
	The supremum is attained when there exists only one $j$ such that $f(x_j)=1+\varepsilon$ and $f(x_i)=1$ for $i\neq j$.
\end{lemma}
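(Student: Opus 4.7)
The plan is to reduce the $n$-dimensional optimization to a one-dimensional ``budget allocation'' problem and then show that concentrating the entire budget $\varepsilon$ into a single coordinate is optimal. Since $f(x_i)\ge 1$ with equality iff $x_i=1$ (Lemma \ref{equ:f_convex_minimum}), I would introduce slack variables $\varepsilon_i := f(x_i)-1 \ge 0$; the constraint $\bar{\bm{f}}(\bm{x}) \le n+\varepsilon$ is then equivalent to $\sum_i \varepsilon_i \le \varepsilon$. Because both the objective and the constraint are separable across coordinates, the suprema can be interchanged:
\begin{equation*}
\sup \bar{\bm{f}}\!\left(\tfrac{1}{x_1},\dots,\tfrac{1}{x_n}\right) = \sup_{\varepsilon_i\ge 0,\ \sum_i \varepsilon_i \le \varepsilon}\ \sum_{i=1}^{n}\ \sup_{f(x_i)\le 1+\varepsilon_i} f\!\left(\tfrac{1}{x_i}\right),
\end{equation*}
and Lemma \ref{prp:sup_f_x_invers_1_d} evaluates each inner supremum as $f(1/w_1(\varepsilon_i))$.

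Next I would define $g(\varepsilon) := f(1/w_1(\varepsilon)) - 1$, so $g(0)=0$ (since $w_1(0)=1$), and prove that $g$ is strictly convex and strictly increasing on $[0,\infty)$. Using $f'(y)=1-1/y$ together with the derivative formula $w_1'(\varepsilon) = -w_1(\varepsilon)/(1-w_1(\varepsilon))$ from Lemma \ref{thm:deriv_w1_w2}, the chain rule collapses cleanly:
\begin{equation*}
g'(\varepsilon) = f'\!\left(\tfrac{1}{w_1(\varepsilon)}\right) \cdot \left(-\tfrac{w_1'(\varepsilon)}{w_1(\varepsilon)^2}\right) = (1-w_1)\cdot \frac{1}{w_1(1-w_1)} = \frac{1}{w_1(\varepsilon)}.
\end{equation*}
Because $w_1$ is strictly decreasing with $w_1(0)=1$ (Lemma \ref{thm:deriv_w1_w2}), $g'(\varepsilon) = 1/w_1(\varepsilon)$ is positive and strictly increasing, which gives the claimed monotonicity and convexity.

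Finally, convexity together with $g(0)=0$ yields superadditivity $g(a)+g(b)\le g(a+b)$ on $[0,\infty)$ (write $a$ as the convex combination $\tfrac{a}{a+b}(a+b)+\tfrac{b}{a+b}\cdot 0$ and similarly for $b$, then add), and iterating gives $\sum_i g(\varepsilon_i) \le g(\sum_i \varepsilon_i) \le g(\varepsilon)$ (monotonicity supplies the last inequality). Hence the reduced problem is maximized by the extreme allocation $\varepsilon_j=\varepsilon$, $\varepsilon_i=0$ $(i\neq j)$, producing
\begin{equation*}
\sup \sum_{i=1}^{n} f\!\left(\tfrac{1}{x_i}\right) = n - 1 + f\!\left(\tfrac{1}{w_1(\varepsilon)}\right),
\end{equation*}
and substituting $w_1(\varepsilon) = -W_{0}(-e^{-(1+\varepsilon)})$ from Lemma \ref{thm:solution_f_x} delivers the stated closed form. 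Attainment is witnessed by $x_j = w_1(\varepsilon)$ and $x_i=1$ for $i\neq j$.

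The main obstacle I anticipate is establishing the convexity of $g$: the algebra is short, but it depends crucially on the clean cancellation $g'(\varepsilon) = 1/w_1(\varepsilon)$ afforded by Lemma \ref{thm:deriv_w1_w2}; without such an identity, one would have to fall back on Lagrange stationarity, which would only locate interior critical points and not rule out extreme allocations. A secondary but routine point is the justification of the supremum interchange, which follows immediately from the separability of the objective and constraint.
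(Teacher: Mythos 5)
Your proposal is correct and takes essentially the same route as the paper: reduce to a per-coordinate budget allocation by separability, evaluate each inner supremum as $f(1/w_1(\varepsilon_i))$ via Lemma \ref{prp:sup_f_x_invers_1_d}, and show concentration of the budget into one coordinate is optimal by a convexity argument resting on the cancellation $w_1'(\varepsilon)=-w_1(\varepsilon)/(1-w_1(\varepsilon))$. Your auxiliary function $g(\varepsilon)=f(1/w_1(\varepsilon))-1$ differs from the paper's $\Delta(\varepsilon)=f(1/w_1(\varepsilon))-f(w_1(\varepsilon))$ only by the linear term $\varepsilon$ (indeed $g'=1/w_1$ versus $\Delta'=1/w_1-1$), so your direct superadditivity step is the same convexity argument in a slightly more streamlined form.
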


\begin{proof}
	We want to solve the following optimization problem analytically. 
	\begin{align}
		\text{maximize}\ & \bar{\bm{f}}(\dfrac{1}{x_1},\dots,\dfrac{1}{x_n})\\
		\text{\textit{s.t.}}\ & \bar{\bm{f}}(x_1,\dots,x_n)\leq n+\varepsilon
	\end{align}
	Since $f(x)\geq 1$, the constraint $\bar{\bm{f}}(x_1,\dots,x_n)=\sum_{i=1}^{n}f(x_i)=\sum_{i=1}^{n}x_i-\log x_i\leq n+\varepsilon$
	can be replaced by the following constraints
	\begin{align}
		\left(\bigwedge_{i=1}^{n} f(x_i)=x_i-\log x_i\leq 1+\varepsilon_i\right) \wedge
		\left(\bigwedge_{i=1}^{n}\varepsilon_i\geq 0\right) \wedge \sum_{i=1}^{n}\varepsilon_i\leq \varepsilon
	\end{align}
	Given fixed $\varepsilon_1,\dots,\varepsilon_n$ such that $\bigwedge_{i=1}^{n}\varepsilon_i\geq 0 \wedge\sum_{i=1}^{n}\varepsilon_i\leq \varepsilon$, we define 
	\begin{align}
		\bar{\bm{S}}(\varepsilon_1,\dots,\varepsilon_n)
		=&\sup\limits_{ \bigwedge_{i=1}^{n}f(x_i)\leq 1+\varepsilon_i} \bar{\bm{f}}(\dfrac{1}{x_1},\dots,\dfrac{1}{x_n})\nonumber\\
		=&\sum_{i=1}^{n} \sup\limits_{ f(x_i)\leq 1+\varepsilon_i} f(\dfrac{1}{x_i})
		= \sum_{i=1}^{n}S(\varepsilon_i)	\label{equ:sup_all_d_factor_multi_sup}
	\end{align}
	So we have 
	\begin{align}\label{equ:sup_factor}
		\sup \bar{\bm{f}}(\dfrac{1}{x_1},\dots,\dfrac{1}{x_n}) = \sup\limits_{\bigwedge_{i=1}^{n}\varepsilon_i\geq 0 \atop \sum_{i=1}^{n}\varepsilon_i\leq \varepsilon } \bar{\bm{S}}(\varepsilon_1,\dots,\varepsilon_n)
	\end{align}	
	It is easy to know that $\bar{\bm{S}}(\varepsilon_1,\dots,\varepsilon_n)$ is continuous and strictly increasing with $\varepsilon_1,\dots,\varepsilon_n$. So the condition $\sum_{i=1}^{n}\varepsilon_i\leq \varepsilon$ in Equation \eqref{equ:sup_factor} can be changed to $\sum_{i=1}^{n}\varepsilon_i=\varepsilon$.
	
	The remaining proof consists of two steps. In step 1, we find $\bar{\bm{S}}(\varepsilon_1,\dots,\varepsilon_n)$ for fixed $\varepsilon_1,\dots,\varepsilon_n$. In step 2, we find $\sup \bar{\bm{S}}(\varepsilon_1,\dots,\varepsilon_n)$ for any  $\varepsilon_1,\dots,\varepsilon_n$ satisfying $\bigwedge_{i=1}^{n}\varepsilon_i\geq 0 \wedge \sum_{i=1}^{n}\varepsilon_i= \varepsilon$.
	
	\textbf{Step 1}: 
	According to Lemma \ref{prp:sup_f_x_invers_1_d}, for fixed $\varepsilon_i$ we get 
	\begin{align}\label{equ:sup_1_d_f_inverse}
		S(\varepsilon_i)=\sup\limits_{f(x)\leq 1+\varepsilon_i
		} f(\dfrac{1}{x})=f(\dfrac{1}{w_1(\varepsilon_i)})
	\end{align}
	Plugging Equation \eqref{equ:sup_1_d_f_inverse} into Equation \eqref{equ:sup_all_d_factor_multi_sup}, we get 
	\begin{gather}\label{equ:def_supre_fixed_epsilon}
		\bar{\bm{S}}(\varepsilon_1,\dots,\varepsilon_n) = \sum_{i=1}^{n}f(\dfrac{1}{w_1(\varepsilon_i)})
	\end{gather}
	
	\textbf{Step 2}: We define function 
	\begin{align}\label{equ:def_delta_epsilon}
		\Delta(\varepsilon)  = f(\dfrac{1}{w_1(\varepsilon)})-f(w_1(\varepsilon))
		= \dfrac{1}{w_1(\varepsilon)}-w_1(\varepsilon)+2\log w_1(\varepsilon) 
	\end{align}
	Now we prove 
	\begin{align}\label{equ:delta_te_leq_t_delta_e}
		\Delta(t\varepsilon)\leq t\Delta(\varepsilon)\ (0\leq t<1)
	\end{align} 
	When $\varepsilon=0$, it is trivial to verify that $\Delta(0)=0$. In the following we show that  $\Delta(\varepsilon)$ is strictly increasing and strictly  convex.
	It is easy to know $\frac{\dif \Delta(\varepsilon)}{\dif w_1}  = -\frac{1}{w_1^2}+\frac{2}{w_1}-1$.
	Combining Lemma \ref{thm:deriv_w1_w2}, the derivative of $\Delta(\varepsilon)$ is 
	\begin{align}\nonumber
		\dfrac{\dif \Delta(\varepsilon)}{\dif \varepsilon} = &
		\dfrac{\dif \Delta(\varepsilon)}{\dif w_1}\times \dfrac{\dif w_1(\varepsilon)}{\dif \varepsilon}\nonumber\\
		= &\left(-\dfrac{1}{w_1(\varepsilon)^2}+\dfrac{2}{w_1(\varepsilon)}-1\right)\times \dfrac{-w_1(\varepsilon)}{1-w_1(\varepsilon)}
		= \dfrac{1}{w_1(\varepsilon)}-1\nonumber
	\end{align}
	%
	%
	The second order derivative of $\Delta(\varepsilon)$ is	
	\begin{align}
		\dfrac{\dif^2 \Delta(\varepsilon)}{\dif \varepsilon^2}\nonumber
		&= -\dfrac{1}{w_1(\varepsilon)^2}\dfrac{-w_1(\varepsilon)}{1-w_1(\varepsilon)} = \dfrac{1}{w_1(\varepsilon)(1-w_1(\varepsilon))}
	\end{align}
	Since $w_1(\varepsilon)\in (0,1)$ for $\varepsilon>0$, it is easy to know  $\frac{\dif \Delta(\varepsilon)}{\dif \varepsilon} > 0, \frac{\dif^2 \Delta(\varepsilon)}{\dif \varepsilon^2} > 0$ for $\varepsilon>0$.		
	This indicates that $\Delta(\varepsilon)$ is strictly increasing and strictly  convex on $(0,+\infty)$.
	Thus, for any $\varepsilon', \varepsilon''>0$, we have $\Delta((1-t)\varepsilon'+t\varepsilon'')<(1-t)\Delta(\varepsilon')+t\Delta(\varepsilon'')$ for any $0<t<1$. Remember that we have known $\Delta(0)=0$. Since $\Delta(\varepsilon)$ is continuous, 
	it is easy to know 
	\begin{align}
		\Delta(t\varepsilon'') =&\lim\limits_{\varepsilon'\to 0}\Delta((1-t)\varepsilon'+t\varepsilon'')\nonumber\\
		\leq  & \lim\limits_{\varepsilon'\to 0}(1-t)\Delta(\varepsilon')+t\Delta(\varepsilon'')
		=  t\Delta(\varepsilon'')
	\end{align}
	Thus, we can obtain Equation \eqref{equ:delta_te_leq_t_delta_e}.
	
	Therefore, for any $\varepsilon_1,\dots,\varepsilon_n$ satisfying $\bigwedge_{i=1}^{n}\varepsilon_i\geq 0 \wedge \sum_{i=1}^{n}\varepsilon_i= \varepsilon$, 
	we have
	\begin{align}
		\bar{\bm{\Delta}}(\varepsilon_1,\dots,\varepsilon_n)  
		= & \sum_{i=1}^{n}f(\dfrac{1}{w_1(\varepsilon_i)})-f(w_1(\varepsilon_i))
		= 	\sum_{i=1}^{n}\Delta(\varepsilon_i)	\nonumber \\
		=  &\sum_{i=1}^{n}\Delta(\dfrac{\varepsilon_i}{\varepsilon}\varepsilon)	
		\leq 	 \sum_{i=1}^{n}\dfrac{\varepsilon_i}{\varepsilon}\Delta(\varepsilon)
		=	 \Delta(\varepsilon) \label{equ:bound_Delta}
	\end{align}
	Inequality \eqref{equ:bound_Delta} is tight when there exists only one $j$ such that $\varepsilon_j=\varepsilon$ and $\varepsilon_i=0$ for all $i\neq j$.
	This means that for any $\varepsilon_1,\dots,\varepsilon_n$ satisfying $\bigwedge_{i=1}^{n}\varepsilon_i\geq 0 \wedge \sum_{i=1}^{n}\varepsilon_i= \varepsilon$, the following inequality holds.
	\begin{DispWithArrows}
		&\bar{\bm{S}}(\varepsilon_1,\dots,\varepsilon_n) \Arrow{ \eqref{equ:def_supre_fixed_epsilon}}\nonumber\\
		=& \sum_{i=1}^{n}f(\dfrac{1}{w_1(\varepsilon_i)})\Arrow{ \eqref{equ:bound_Delta}} \nonumber \\
		\leq &  \Delta(\varepsilon)+\sum_{i=1}^{n}f(w_1(\varepsilon_i)) \Arrow{$\overset{\text{Lemma \ref{thm:solution_f_x}}}{\text{\eqref{equ:def_delta_epsilon}}}$} \nonumber \\
		\leq &   \dfrac{1}{w_1(\varepsilon)}-\log\dfrac{1}{w_1(\varepsilon)}-(w_1(\varepsilon)-\log w_1(\varepsilon)) \nonumber\\
		& + \sum_{i=1}^{n}(1+\varepsilon_i) \nonumber  \\
		=& \dfrac{1}{w_1(\varepsilon)}-\log\dfrac{1}{w_1(\varepsilon)} - (1+\varepsilon) + n + \varepsilon \nonumber\\
		= &\dfrac{1}{w_1(\varepsilon)}-\log\dfrac{1}{w_1(\varepsilon)} + n - 1 \nonumber\\
		=& \dfrac{1}{-W_{0}(-e^{-(1+\varepsilon)})}-\log \dfrac{1}{-W_{0}(-e^{-(1+\varepsilon)})} +n -1		
	\end{DispWithArrows}
	Finally, we have 
	\begin{align}
		&\sup \bar{\bm{f}}(\dfrac{1}{x_1},\dots,\dfrac{1}{x_n})
		= \sup\limits_{\bigwedge_{i=1}^{n}\varepsilon_i\geq 0\atop \sum_{i=1}^{n}\varepsilon_i\leq \varepsilon} \bar{\bm{S}}(\varepsilon_1,\dots,\varepsilon_n) \nonumber\\
		= & \dfrac{1}{-W_{0}(-e^{-(1+\varepsilon)})}-\log \dfrac{1}{-W_{0}(-e^{-(1+\varepsilon)})} +n -1	
	\end{align}
	$\bar{\bm{f}}(1/x_1,\dots,1/x_n)$ reaches its supremum when there exists only one $j$ such that $f(x_j)=1+\varepsilon$ and $f(x_i)=1$ for $i\neq j$. 
	
	$\hfill\square$ 
\end{proof}

In the following Lemma \ref{thm:duality_small_KL}, we deal with the case when one Gaussian is standard. Then we extend Lemma \ref{thm:duality_small_KL} to general case.

\begin{lemma}\label{thm:duality_small_KL}
	Let 
	$\mathcal{N}(0,I)$ be standard Gaussian, $\varepsilon$ be a positive number.
	For any $n$-dimensional Gaussian distribution $\mathcal{N}(\bm{\mu},\bm{\Sigma})$,  
	\begin{enumerate}[label=(\alph*), ref=\ref{thm:duality_small_KL}\alph*]
		\setlength{\itemsep}{0pt}
		\setlength{\parskip}{0pt}
		\item
		\label{thm:forward_to_backward_KL_small_indendent_n}
		If $KL(\mathcal{N}(\bm{\mu},\bm{\Sigma})||\mathcal{N}(0,I))\leq \varepsilon$, then 
		\begin{equation}\nonumber
			\begin{aligned}
				&KL(\mathcal{N}(0,I)||\mathcal{N}(\bm{\mu},\bm{\Sigma}))\\
				\leq & \dfrac{1}{2}\left(
				\dfrac{1}{-W_{0}(-e^{-(1+2\varepsilon)})}-\log \dfrac{1}{-W_{0}(-e^{-(1+2\varepsilon)})} -1 \right)
			\end{aligned}
		\end{equation}
		\item \label{thm:backward_to_forward_KL_small_indendent_n}
		If $KL(\mathcal{N}(0,I)||\mathcal{N}(\bm{\mu},\bm{\Sigma}))\leq \varepsilon$, then
		\begin{equation}\nonumber
			\begin{aligned}
				&KL(\mathcal{N}(\bm{\mu},\bm{\Sigma})||\mathcal{N}(0,I))\\
				\leq & \dfrac{1}{2}\left(
				\dfrac{1}{-W_{0}(-e^{-(1+2\varepsilon)})}-\log \dfrac{1}{-W_{0}(-e^{-(1+2\varepsilon)})} -1 \right)
			\end{aligned}
		\end{equation}
		
	\end{enumerate}
\end{lemma}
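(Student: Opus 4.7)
The plan is to combine an orthogonal diagonalization of $\bm{\Sigma}$ with a reduction to the mean-free case, and then invoke Lemma \ref{thm:lemma_sup_fx_invs_nd}. I would handle parts (a) and (b) in parallel, as their structure is symmetric under swapping $\lambda$ with $1/\lambda$.

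First, because KL divergence is invariant under invertible linear transformations, I diagonalize $\bm{\Sigma}=PDP^{\top}$ with $P$ orthogonal and $D=\mathrm{diag}(\lambda_1,\dots,\lambda_n)$. Under the map $\bm{x}\mapsto P^{\top}\bm{x}$, the distribution $\mathcal{N}(0,I)$ is unchanged (since $P^{\top}IP=I$), while $\mathcal{N}(\bm{\mu},\bm{\Sigma})$ becomes $\mathcal{N}(\bm{\nu},D)$ with $\bm{\nu}=P^{\top}\bm{\mu}$. Substituting into Equation~(\ref{equ:KL_two_Gaussian}) and using $f(x)=x-\log x$, both divergences split into per-coordinate sums:
\begin{align*}
2\,KL(\mathcal{N}(\bm{\nu},D)\,\|\,\mathcal{N}(0,I)) &= \sum_{i=1}^{n}\bigl(f(\lambda_i)-1\bigr) + \|\bm{\nu}\|^2,\\
2\,KL(\mathcal{N}(0,I)\,\|\,\mathcal{N}(\bm{\nu},D)) &= \sum_{i=1}^{n}\bigl(f(1/\lambda_i)-1\bigr) + \sum_{i=1}^{n}\nu_i^2/\lambda_i.
\end{align*}

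Second, I would reduce to the $\bm{\nu}=0$ case. For part (a), the objective is to upper bound the second line given that the first is at most $2\varepsilon$; since both decompose coordinatewise, it suffices to analyze the one-dimensional sub-problem: for a slack $t\ge 0$, maximize $f(1/\lambda)-1+\nu^2/\lambda$ over $\lambda>0$ and $\nu^2\ge 0$ subject to $f(\lambda)-1+\nu^2\le t$. A KKT computation shows any maximizer must satisfy $\nu=0$: setting the multiplier for the KL constraint equal to $1/\lambda$ (from stationarity in $\nu^2$) and plugging into stationarity in $\lambda$ forces $\nu^2=0$. The coordinate-wise maximum is then $f(1/w_1(t))$ attained at $\lambda=w_1(t)$ by Lemma~\ref{prp:sup_f_x_invers_1_d}. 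An analogous calculation for part (b), swapping the roles of $f(\lambda)$ and $f(1/\lambda)$ in the objective and constraint, again forces $\nu=0$. With $\bm{\nu}=0$ in hand, I would invoke Lemma~\ref{thm:lemma_sup_fx_invs_nd} with $\varepsilon'=2\varepsilon$: for part (a), the constraint $\bar{\bm{f}}(\lambda_1,\dots,\lambda_n)\le n+2\varepsilon$ yields directly $\bar{\bm{f}}(1/\lambda_1,\dots,1/\lambda_n)\le 1/(-W_0(-e^{-(1+2\varepsilon)}))-\log(1/(-W_0(-e^{-(1+2\varepsilon)})))+n-1$, and subtracting $n$ and dividing by $2$ delivers the claimed bound; for part (b), the substitution $x_i=1/\lambda_i$ turns the constraint $\bar{\bm{f}}(1/\lambda_i)\le n+2\varepsilon$ into $\bar{\bm{f}}(x_i)\le n+2\varepsilon$, and the quantity to bound $\bar{\bm{f}}(\lambda_i)=\bar{\bm{f}}(1/x_i)$ is controlled by the same formula.

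The main obstacle is the reduction to $\bm{\nu}=0$. The Lagrangian calculation is short but must be carried out carefully at the boundary $\nu^2=0$ (where its non-negativity multiplier is active) and at $\lambda=1$ (where $f$ attains its minimum and $f'$ vanishes). A cleaner non-variational alternative is to compare any pair $(\lambda,\nu)$ with $(\lambda',0)$ where $\lambda'\in(0,1]$ solves $f(\lambda')=f(\lambda)+\nu^2$ on the smaller-than-one branch, and verify monotonically that this substitution does not decrease the reverse KL contribution in each coordinate; this bypasses multipliers but requires separate estimates on the branches $\lambda\le 1$ and $\lambda\ge 1$ to check that the relevant inverse map $1/\lambda'$ indeed increases $f(1/\cdot)$ by at least $\nu^2/\lambda$.
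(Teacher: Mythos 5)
Your proposal is correct and arrives at the same bound, but it disposes of the mean term by a genuinely different mechanism than the paper. The paper never argues coordinatewise about $\bm{\mu}$: it splits the total budget $2\varepsilon$ into a covariance share $\varepsilon_1$ and a mean share $2\varepsilon-\varepsilon_1$ (Inequalities \eqref{equ:small_fKL_item1_e1}--\eqref{equ:small_fKL_item2_e}), applies Lemma \ref{thm:lemma_sup_fx_invs_nd} to the eigenvalue part, uses only the crude bound $\bm{\mu}^{\top}\bm{\Sigma}^{-1}\bm{\mu}\leq \lambda'^{*}\,\bm{\mu}^{\top}\bm{\mu}\leq (2\varepsilon-\varepsilon_1)/w_1(\varepsilon_1)$, and then shows the resulting one-parameter bound $U(\varepsilon_1)$ in \eqref{equ:bound_bKL} is nondecreasing via \eqref{equ:deri_supre}, so the worst case is $\varepsilon_1=2\varepsilon$, i.e.\ $\bm{\mu}=0$; part (b) is handled the same way in the appendix. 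You instead eliminate the mean per coordinate \emph{before} invoking Lemma \ref{thm:lemma_sup_fx_invs_nd}, and your KKT sketch does yield the right conclusion; moreover, the boundary delicacies you flag can be bypassed entirely by the elimination you hint at: for a fixed per-coordinate slack $t$ the constraint is active at any maximizer (the objective is strictly increasing in $\nu^2$), and substituting $\nu^2=t+1-f(\lambda)$ gives a one-variable function whose derivative is $\bigl(f(\lambda)-(1+t)\bigr)/\lambda^2\leq 0$ on $[w_1(t),w_2(t)]$, so the maximum sits at $\lambda=w_1(t)$, $\nu=0$, with value $f(1/w_1(t))-1$ (consistent with Lemma \ref{prp:sup_f_x_invers_1_d}); for part (b) the analogous substitution gives derivative $t+1-f(1/\lambda)\geq 0$, maximum at $\lambda=1/w_1(t)$, $\nu=0$, with the same value, which also confirms your claimed comparison $f(1/\lambda')\geq f(1/\lambda)+\nu^2/\lambda$. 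What your route buys is the per-coordinate extremal structure (and hence the attainment condition $\bm{\mu}=0$, one extreme eigenvalue) directly, at the cost of a small variational argument in each coordinate; the paper's route is shorter at this step, needing only monotonicity of $U(\varepsilon_1)$, and defers all coordinatewise work to Lemma \ref{thm:lemma_sup_fx_invs_nd}. Both arguments rest on that lemma for the concentration of the budget into one eigenvalue and thus for the dimension-independence of the bound, so the overall architecture is shared even though the mean-elimination step differs.
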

\begin{proof} 
	(a) 
	According to the definition of KL divergence, we have 
	\begin{align}
		KL(\mathcal{N}(\bm{\mu},\bm{\Sigma})||\mathcal{N}(0,I)) 
		= & \dfrac{1}{2} \left(-\log|\bm{\Sigma}|+\mathop{\mathrm{Tr}}(\bm{\Sigma})+\bm{\mu}^{\top}\bm{\mu}-n\right)\nonumber \\
		KL(\mathcal{N}(0,I)||\mathcal{N}(\bm{\mu},\bm{\Sigma}))
		= & \dfrac{1}{2} \left(\log|\bm{\Sigma}|+\mathop{\mathrm{Tr}}(\bm{\Sigma}^{-1})+\bm{\mu}^{\top}\bm{\Sigma}^{-1}\bm{\mu}-n\right)  \nonumber
	\end{align}
	where $n$ is the dimension of the distribution.
	The positive definite matrix $\bm{\Sigma}$ has factorization $\bm{\Sigma}=PDP^{\top}$ where $P$ is an orthogonal matrix whose columns are the eigenvectors of $\bm{\Sigma}$, $D=diag(\lambda_1,\dots,\lambda_n)$ ($\lambda_i>0$) whose diagonal elements are the corresponding eigenvalues. 
	We also have
	\begin{gather}
		|\bm{\Sigma}|=|P||D||P^{\top}|=|D|=\prod_{i=1}^n \lambda_i \label{equ:det_to_multi} \\ 
		\log |\bm{\Sigma}| = \sum_{i=1}^n\log \lambda_i, 
		-\log |\bm{\Sigma}| = \sum_{i=1}^n\log \dfrac{1}{\lambda_i}\\
		\mathop{\mathrm{Tr}}(\bm{\Sigma})=\mathop{\mathrm{Tr}}(PDP^{\top})=\mathop{\mathrm{Tr}}(P^{\top}PD)=\mathop{\mathrm{Tr}}(D)=\sum_{i=1}^n\lambda_i\\
		\mathop{\mathrm{Tr}}(\bm{\Sigma}^{-1})=\sum_{i=1}^n\lambda'_i=\sum_{i=1}^n\dfrac{1}{\lambda_i} \label{equ:tr_sigma}
	\end{gather}
	where $\lambda'_i=1/\lambda_i$ are eigenvalues of $\bm{\Sigma}^{-1}$. 
	
	If $KL(\mathcal{N}(\bm{\mu},\bm{\Sigma})||\mathcal{N}(0,I))\leq \varepsilon$, we have $-\log|\bm{\Sigma}|+\mathop{\mathrm{Tr}}(\bm{\Sigma})+\bm{\mu}^{\top}\bm{\mu}-n \leq  2\varepsilon$.
	This condition is equal to the following conditions
	\begin{align}
		-\log|\bm{\Sigma}|+\mathop{\mathrm{Tr}}(\bm{\Sigma})=\sum_{i=1}^n \lambda_i-\log \lambda_i&\leq n+\varepsilon_1 \label{equ:small_fKL_item1_e1}\\
		\bm{\mu}^{\top}\bm{\mu}&\leq 2\varepsilon-\varepsilon_1 \label{equ:small_fKL_item2_e}\\
		0\leq \varepsilon_1 & \leq 2\varepsilon
	\end{align}
	In the following, we find the maximum of $\log|\bm{\Sigma}|+\mathop{\mathrm{Tr}}(\bm{\Sigma}^{-1})$ and $\bm{\mu}^{\top}\bm{\Sigma}^{-1}\bm{\mu}$, respectively.
	From Equation \eqref{equ:small_fKL_item1_e1}, we have
	\begin{align}
		\sum_{i=1}^n \lambda_i-\log \lambda_i \leq  n+\varepsilon_1 \label{equ:bound_fKL_left_e1}
	\end{align}
	Applying Lemma \ref{thm:lemma_sup_fx_invs_nd} on Inequality \eqref{equ:bound_fKL_left_e1}, we can obtain
	\begin{align}\label{equ:small_fKL_to_bKL_left_item_bound}
		& \sum_{i=1}^n \dfrac{1}{\lambda_i}-\log \dfrac{1}{\lambda_i}
		= \log|\bm{\Sigma}|+\mathop{\mathrm{Tr}}(\bm{\Sigma}^{-1}) \nonumber\\
		\leq &   \dfrac{1}{-W_{0}(-e^{-(1+\varepsilon_1)})}-\log \dfrac{1}{-W_{0}(-e^{-(1+\varepsilon_1)})} +n -1 
	\end{align}
	Moreover, since $f(x)=x-\log x$ takes the minimum value $f(1)=1$ at $x=1$, it is easy to know 
	$\lambda_i-\log \lambda_i\leq 1+\varepsilon_1$ from Inequality \eqref{equ:bound_fKL_left_e1}.
	According to Lemma \ref{prp:sup_f_x_invers_1_d}, we know
	\begin{align}
		w_1(\varepsilon_1) \leq \lambda_i\leq w_2(\varepsilon_1),\ 
		\dfrac{1}{w_2(\varepsilon_1)}\leq \lambda'_i=\dfrac{1}{\lambda_i}\leq \dfrac{1}{w_1(\varepsilon_1)} \label{equ:lambda_inverse_bound}
	\end{align}
	We also have $\bm{\mu}^{\top}\bm{\Sigma}^{-1}\bm{\mu}\leq \lambda'^*\bm{\mu}^{\top}\bm{\mu}$ where $\lambda'^*$ is the maximum eigenvalue of $\bm{\Sigma}^{-1}$.
	Combining Equation \eqref{equ:small_fKL_item2_e} and \eqref{equ:lambda_inverse_bound}, we can know
	\begin{equation}\label{equ:bound_mumu_fKL_to_bKL}
		\bm{\mu}^{\top}\bm{\Sigma}^{-1}\bm{\mu}\leq \lambda'^*(2\varepsilon-\varepsilon_1) \leq\dfrac{2\varepsilon-\varepsilon_1}{w_1(\varepsilon_1)}
	\end{equation} 
	Now note that Inequalities \eqref{equ:small_fKL_to_bKL_left_item_bound} and \eqref{equ:bound_mumu_fKL_to_bKL} are tight simultaneously when there exists one $\lambda_j=w_1(\varepsilon_1)$ and all other $\lambda_i=1$ for $i\neq j$.
	Thus, we can add the two sides of Inequalities \eqref{equ:small_fKL_to_bKL_left_item_bound} and \eqref{equ:bound_mumu_fKL_to_bKL} and get 
	\begin{align}
		&KL(\mathcal{N}(0,I)||\mathcal{N}(\bm{\mu},\bm{\Sigma}))\nonumber\\
		= & \dfrac{1}{2} \left(\log|\bm{\Sigma}|+\mathop{\mathrm{Tr}}(\bm{\Sigma}^{-1})+\bm{\mu}^{\top}\bm{\Sigma}^{-1}\bm{\mu}-n\right) \nonumber\\
		\leq & \dfrac{1}{2}\left(\dfrac{1}{-W_{0}(-e^{-(1+\varepsilon_1)})}-\log \dfrac{1}{-W_{0}(-e^{-(1+\varepsilon_1)})}+n-1 \right. \nonumber\\
		&\left. + \dfrac{2\varepsilon-\varepsilon_1}{w_1(\varepsilon_1)} -n\right) \nonumber\\
		= & \dfrac{1}{2}\left(
		\dfrac{1+2\varepsilon-\varepsilon_1}{w_1(\varepsilon_1)}-\log \dfrac{1}{w_1(\varepsilon_1)} -1 \right) \label{equ:bound_bKL}\\
		=& U(\varepsilon_1)\ (0\leq \varepsilon_1\leq 2\varepsilon)  \nonumber
	\end{align}
	Notice that the derivative of $U(\varepsilon_1)$ is
	\begin{equation}\label{equ:deri_supre}
		\begin{aligned}
			U'(\varepsilon_1)
			=&\dfrac{1}{2}\left(\dfrac{w_1(\varepsilon_1)+2\varepsilon-\varepsilon_1}{w_1(\varepsilon_1)(1-w_1(\varepsilon_1))} - \dfrac{1}{1-w_1(\varepsilon_1)} \right) \\
			= &\dfrac{1}{2}\dfrac{2\varepsilon-\varepsilon_1}{w_1(\varepsilon_1)(1-w_1(\varepsilon_1))}		
		\end{aligned}
	\end{equation}
	Since $w_1(\varepsilon_1)\in (0,1)$ for $\varepsilon_1>0$ and $0\leq \varepsilon_1\leq 2\varepsilon$, we can know $U'(\varepsilon_1) \geq 0$ for $\varepsilon_1>0$. Thus, $U(\varepsilon_1)$ takes the maximum value at $\varepsilon_1=2\varepsilon$.
	Finally, we have 
	\begin{equation}\label{equ:supremum_f_to_b_kl}
		\begin{aligned}
			& KL(\mathcal{N}(0,I)||\mathcal{N}(\bm{\mu},\bm{\Sigma})) \\
			\leq & U(2\varepsilon)
			=  \dfrac{1}{2}\left(
			\dfrac{1}{-W_{0}(-e^{-(1+2\varepsilon)})}-\log \dfrac{1}{-W_{0}(-e^{-(1+2\varepsilon)})} -1 \right) 
		\end{aligned}
	\end{equation}
	Inequality \eqref{equ:supremum_f_to_b_kl} is tight only when there exists one $\lambda_j=-W_{0}(-e^{-(1+2\varepsilon)})$ and all other $\lambda_i=1$ for $i\neq j$, and $|\bm{\mu}|=0$.
	
	We can see that when $\varepsilon$ is small, the right hand side of Equation \eqref{equ:supremum_f_to_b_kl} is also small.
	
	(b) The proof of Theorem \ref{thm:backward_to_forward_KL_small_indendent_n} is similar. See Appendix \ref{sec:proof_thm:backward_to_forward_KL_small_indendent_n} for the details.

	$\hfill\square$ 
\end{proof}

In the following, we extend Lemma \ref{thm:duality_small_KL} to general Gaussians. 
Before our generalized theorem, we recall the following proposition which states that diffeomorphism preserves KL divergence ($f$-divergence) \cite{nielsen2020elementary}.
\begin{proposition} \label{thm:preserve_KL}
	(See \cite{nielsen2020elementary}) Let $\bm{z}=f(\bm{x})$ be a diffeomorphism, $X_1\sim p_X$ and $X_2\sim q_X$ be two random variables and $Z_1=f(X_1)\sim p_Z$, $Z_2=f(X_2)\sim q_Z$. 
	Then $KL(p_X||q_X)=KL(p_Z||q_Z)$.
	
\end{proposition}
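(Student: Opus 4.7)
The plan is to prove the invariance of KL divergence under diffeomorphisms by directly applying the change-of-variables formula for probability densities and showing that the Jacobian factors cancel inside the logarithm while the outer integral reduces, after substitution, to the original KL divergence.

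First, I would invoke the density-transformation rule: since $\bm{z} = f(\bm{x})$ is a diffeomorphism with inverse $f^{-1}$, the densities of $Z_1$ and $Z_2$ on the image space are $p_Z(\bm{z}) = p_X(f^{-1}(\bm{z})) \, |\det J_{f^{-1}}(\bm{z})|$ and $q_Z(\bm{z}) = q_X(f^{-1}(\bm{z})) \, |\det J_{f^{-1}}(\bm{z})|$. The key observation is that the Jacobian determinant $|\det J_{f^{-1}}(\bm{z})|$ appears as the same multiplicative factor in both numerator and denominator, so it cancels in the ratio, giving $p_Z(\bm{z})/q_Z(\bm{z}) = p_X(f^{-1}(\bm{z}))/q_X(f^{-1}(\bm{z}))$. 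This is the step that makes the whole argument work, and it is the main conceptual point of the proposition.

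Next, I would substitute these expressions into the definition $KL(p_Z||q_Z) = \int p_Z(\bm{z}) \log(p_Z(\bm{z})/q_Z(\bm{z})) \, \dif \bm{z}$. After the cancellation inside the logarithm, the integrand is $p_X(f^{-1}(\bm{z})) \, |\det J_{f^{-1}}(\bm{z})| \, \log(p_X(f^{-1}(\bm{z}))/q_X(f^{-1}(\bm{z})))$. I would then apply the standard change of variables $\bm{x} = f^{-1}(\bm{z})$, under which $\dif \bm{z} = |\det J_f(\bm{x})| \, \dif \bm{x}$ and, by the inverse function theorem, $|\det J_{f^{-1}}(\bm{z})| \cdot |\det J_f(\bm{x})| = 1$. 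The expression collapses to $\int p_X(\bm{x}) \log(p_X(\bm{x})/q_X(\bm{x})) \, \dif \bm{x} = KL(p_X||q_X)$, which is the desired equality.

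The only delicate point I expect is verifying that the change-of-variables formula is applied over the correct domains, namely that the integral over the image $f(\mathrm{supp}\, p_X)$ transforms back to an integral over $\mathrm{supp}\, p_X$. Since $f$ is a diffeomorphism, and in particular a smooth bijection with smooth inverse, this is automatic and requires no additional hypothesis beyond the standard multivariate substitution rule; there are no integrability or measure-theoretic subtleties to address here. The argument is essentially a one-line computation once the transformed densities are written out, so the proposal is really just a clean recording of a well-known fact, stated here so that later proofs can reduce an arbitrary Gaussian pair to the case involving a standard Gaussian via an affine (hence diffeomorphic) whitening transformation.
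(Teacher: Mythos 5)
Your argument is correct and is exactly the standard change-of-variables proof of this fact; the paper itself does not prove the proposition but simply cites it from the reference, where the same Jacobian-cancellation computation is the proof. Nothing is missing: the cancellation of $|\det J_{f^{-1}}|$ inside the logarithm together with the substitution $\bm{x}=f^{-1}(\bm{z})$ is the whole content, and it correctly supports the later use of affine whitening transformations to reduce to a standard Gaussian.
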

\vspace{10pt}
\noindent \textbf{Main Proof of Theorem \ref{thm:duality_small_KL_general}}\\
\begin{proof}
With the help of Proposition \ref{thm:preserve_KL}, it is not hard to extend Lemma \ref{thm:duality_small_KL} to general Gaussians. The key idea is to use an invertible linear transformation to transform one Gaussian to standard Gaussian, and then apply Lemma \ref{thm:duality_small_KL}. Please see Appendix \ref{sec:app_proof_thm:duality_small_KL_general} for details.	
		
		$\hfill\square$	
\end{proof}

To investigate the bound in Theorem \ref{thm:duality_small_KL_general} further, we can expand Lambert $W$ function using the series presented in \cite{2016PrincetonCompanion, corless1996lambertw} for small $\varepsilon$.
This is expressed by the following Theorem.

\begin{theorem}\label{thm:symmetry_KL_simplify}
	For any two $n$-dimensional Gaussian distributions $\mathcal{N}(\bm{\mu}_1,\bm{\Sigma}_1)$, $\mathcal{N}(\bm{\mu}_2,\bm{\Sigma}_2)$,  and a small positive number $\varepsilon$,  
	\label{cor:forward_to_backward_KL_small_general_simplify}
	if $KL(\mathcal{N}(\bm{\mu}_1,\bm{\Sigma}_1)||\mathcal{N}(\bm{\mu}_2,\bm{\Sigma}_2))\leq \varepsilon$, then
	\begin{equation}
		\begin{aligned}
			KL(\mathcal{N}(\bm{\mu}_2,\bm{\Sigma}_2)||\mathcal{N}(\bm{\mu}_1,\bm{\Sigma}_1)) \leq \varepsilon + 2\varepsilon^{1.5} + O(\varepsilon^2)
		\end{aligned}
	\end{equation}
\end{theorem}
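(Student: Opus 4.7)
The starting point is Theorem~\ref{thm:duality_small_KL_general}, which already pins down the exact supremum of $KL(\mathcal{N}_{2}||\mathcal{N}_{1})$ under the constraint $KL(\mathcal{N}_{1}||\mathcal{N}_{2})\le\varepsilon$. Hence nothing probabilistic is left to do: the task reduces to an asymptotic expansion, as $\varepsilon\downarrow 0$, of
\[
B(\varepsilon)\;=\;\tfrac{1}{2}\!\left(\tfrac{1}{v}-\log\tfrac{1}{v}-1\right)\;=\;\tfrac{1}{2}\bigl(f(1/v)-1\bigr),\qquad v\;=\;-W_{0}\!\bigl(-e^{-(1+2\varepsilon)}\bigr).
\]
By Lemma~\ref{thm:solution_f_x} this $v$ is exactly $w_{1}(2\varepsilon)\in(0,1]$, the smaller root of $f(v)=v-\log v=1+2\varepsilon$. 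Working with this implicit characterisation of $v$ lets me avoid ever handling $W_{0}$ directly.

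The plan is to set $\delta=1-v\in[0,1)$, expand, and invert. From $-\log(1-\delta)=\sum_{k\ge 1}\delta^{k}/k$ the defining equation becomes
\[
\sum_{k\ge 2}\frac{\delta^{k}}{k}\;=\;2\varepsilon,
\]
which inverts at leading order to $\delta=2\sqrt{\varepsilon}+O(\varepsilon)$. Combining the same logarithmic expansion with $1/(1-\delta)=\sum_{k\ge 0}\delta^{k}$ gives
\[
f(1/v)-1\;=\;\frac{1}{v}+\log v-1\;=\;\sum_{k\ge 2}\frac{k-1}{k}\,\delta^{k},
\]
and subtracting the previous display term by term cancels the common $\delta^{k}/k$ pieces and leaves only the ``excess''
\[
f(1/v)-1\;=\;2\varepsilon+\sum_{k\ge 3}\frac{k-2}{k}\,\delta^{k}\;=\;2\varepsilon+\frac{\delta^{3}}{3}+O(\delta^{4}).
\]
Substituting $\delta=2\sqrt{\varepsilon}+O(\varepsilon)$ yields $\delta^{3}/3=(8/3)\,\varepsilon^{3/2}+O(\varepsilon^{2})$, so $B(\varepsilon)=\varepsilon+\tfrac{4}{3}\varepsilon^{3/2}+O(\varepsilon^{2})$, and since $4/3<2$ this is in particular $\le \varepsilon+2\varepsilon^{3/2}+O(\varepsilon^{2})$, which is the stated bound.

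The only delicate point is justifying the $O(\cdot)$ symbols uniformly in a neighbourhood of $\varepsilon=0$. Since $\delta\in[0,1)$, the tails $\sum_{k\ge 4}(k-2)\delta^{k}/k$ and $\sum_{k\ge 3}\delta^{k}/k$ are each dominated by geometric series in $\delta$, and the crude bound $\delta\le 2\sqrt{\varepsilon}$ (immediate from $\delta^{2}/2\le\sum_{k\ge 2}\delta^{k}/k=2\varepsilon$) converts them into $O(\varepsilon^{2})$. If one prefers the route hinted at in the paragraph preceding the statement, the identical conclusion follows by inserting the Puiseux series $W_{0}(-e^{-1}+z)=-1+\sqrt{2ez}-2ez/3+O(z^{3/2})$ from \cite{corless1996lambertw} with $z=e^{-1}(1-e^{-2\varepsilon})=2\varepsilon/e+O(\varepsilon^{2})$ and grinding; this provides an independent cross-check of the leading coefficient and makes the remainder bound manifest.
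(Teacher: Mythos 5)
Your proposal is correct, and it takes a genuinely different (and in one respect sharper) route than the paper's. The paper proves Theorem~\ref{thm:symmetry_KL_simplify} by quoting the Puiseux expansion $W_{0}(-e^{-(1+2\varepsilon)})=-1+2\sqrt{\varepsilon}-\tfrac{4}{3}\varepsilon+\tfrac{2}{9}\varepsilon^{1.5}+O(\varepsilon^{2})$ from \cite{2016PrincetonCompanion,corless1996lambertw}, Taylor-expanding $\log\bigl(-W_{0}(-e^{-(1+2\varepsilon)})\bigr)$, and combining the pieces over a common denominator. You bypass the Lambert-$W$ series entirely: with $v=w_1(2\varepsilon)$ and $\delta=1-v$, the exact identities $\sum_{k\ge 2}\delta^{k}/k=2\varepsilon$ and $f(1/v)-1=\sum_{k\ge 2}\tfrac{k-1}{k}\delta^{k}$ subtract to give $f(1/v)-1=2\varepsilon+\sum_{k\ge 3}\tfrac{k-2}{k}\delta^{k}$, and the crude bound $\delta\le 2\sqrt{\varepsilon}$ controls the tail; this is self-contained, makes the $O(\varepsilon^{2})$ remainders explicit, and yields the supremum in Theorem~\ref{thm:duality_small_KL_general} as $\varepsilon+\tfrac{4}{3}\varepsilon^{1.5}+O(\varepsilon^{2})$, from which the stated bound follows since $\tfrac{4}{3}<2$. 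Your computation in fact exposes a small algebraic slip in the paper's last step: there, $\tfrac{1}{2}\cdot\frac{2\varepsilon-\frac{4}{3}\varepsilon^{1.5}+O(\varepsilon^{2})}{1-2\sqrt{\varepsilon}+\frac{4}{3}\varepsilon-\cdots}$ is rewritten as $\varepsilon+\frac{2\varepsilon^{1.5}+O(\varepsilon^{2})}{1-2\sqrt{\varepsilon}+\cdots}$, whereas the correct remainder numerator is $\varepsilon\delta-\tfrac{2}{3}\varepsilon^{1.5}+O(\varepsilon^{2})=\tfrac{4}{3}\varepsilon^{1.5}+O(\varepsilon^{2})$, so the sharp asymptotics of the supremum is $\varepsilon+\tfrac{4}{3}\varepsilon^{1.5}+O(\varepsilon^{2})$ rather than $\varepsilon+2\varepsilon^{1.5}+O(\varepsilon^{2})$. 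The theorem as an inequality is unaffected (and your argument proves it a fortiori), but the coefficient $2$ is not tight as a description of the supremum.
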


\begin{proof}
	Please see Appendix \ref{sec:appendix_proof_symmetry_kl_simplify} for the details of the proof.

	$\hfill\square$ 
\end{proof}

Theorem \ref{thm:duality_small_KL_general} holds for any two Gaussians $\mathcal{N}(\bm{\mu}_1,\bm{\Sigma}_1)$ and $\mathcal{N}(\bm{\mu}_2,\bm{\Sigma}_2)$. According to the proof of Theorem \ref{thm:duality_small_KL_general} (Lemma \ref{thm:duality_small_KL}), one of $\mathcal{N}(\bm{\mu}_1,\bm{\Sigma}_1)$ and $\mathcal{N}(\bm{\mu}_2,\bm{\Sigma}_2)$ can be fixed. It is not hard to extend Lemma \ref{thm:duality_small_KL} to case where the fixed one Gaussian is not standard. We can apply linear transformation (see Equation \eqref{equ:linear_tranform_for_symmetry}) as what we have done in the main proof of Theorem \ref{thm:duality_small_KL_general} (see Appendix \ref{sec:app_proof_thm:duality_small_KL_general}). Therefore, we have the following corollary.
\begin{corollary}
	Theorem \ref{thm:duality_small_KL_general} and Theorem \ref{thm:symmetry_KL_simplify} hold when one of $\mathcal{N}(\bm{\mu}_1,\bm{\Sigma}_1)$ and $\mathcal{N}(\bm{\mu}_2,\bm{\Sigma}_2)$ is fixed.
\end{corollary}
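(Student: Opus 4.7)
The plan is to leverage the invariance of KL divergence under diffeomorphisms (Proposition \ref{thm:preserve_KL}) to reduce the fixed-one-Gaussian case directly to Lemma \ref{thm:duality_small_KL}, where one Gaussian is specifically fixed to be $\mathcal{N}(0,I)$. The content of the corollary is essentially a restatement of the main theorem from the perspective of ``one Gaussian is prescribed, the other is free,'' so no new optimization argument is required: the whole job is bookkeeping about how the affine change of variables transports the hypothesis, the conclusion, and the attainability conditions.

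First I would fix $\mathcal{N}(\bm{\mu}_1,\bm{\Sigma}_1)$ (the other direction is symmetric) and build the affine diffeomorphism $T(\bm{x}) = B_1^{-1}(\bm{x} - \bm{\mu}_1)$, where $B_1 = P_1 D_1^{1/2}$ is the same square root of $\bm{\Sigma}_1$ used in the main proof of Theorem \ref{thm:duality_small_KL_general}. Under $T$, $\mathcal{N}(\bm{\mu}_1,\bm{\Sigma}_1)$ pushes forward to $\mathcal{N}(0,I)$, and $\mathcal{N}(\bm{\mu}_2,\bm{\Sigma}_2)$ pushes forward to $\mathcal{N}(\bm{\mu}', \bm{\Sigma}')$ with $\bm{\mu}' = B_1^{-1}(\bm{\mu}_2 - \bm{\mu}_1)$ and $\bm{\Sigma}' = B_1^{-1}\bm{\Sigma}_2(B_1^{-1})^{\top}$. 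By Proposition \ref{thm:preserve_KL},
\begin{align*}
KL(\mathcal{N}(\bm{\mu}_1,\bm{\Sigma}_1) \,\|\, \mathcal{N}(\bm{\mu}_2,\bm{\Sigma}_2)) &= KL(\mathcal{N}(0,I) \,\|\, \mathcal{N}(\bm{\mu}', \bm{\Sigma}')),\\
KL(\mathcal{N}(\bm{\mu}_2,\bm{\Sigma}_2) \,\|\, \mathcal{N}(\bm{\mu}_1,\bm{\Sigma}_1)) &= KL(\mathcal{N}(\bm{\mu}', \bm{\Sigma}') \,\|\, \mathcal{N}(0,I)).
\end{align*}

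Second, I would invoke Lemma \ref{thm:backward_to_forward_KL_small_indendent_n}: the hypothesis $KL(\mathcal{N}(\bm{\mu}_1,\bm{\Sigma}_1)\,\|\,\mathcal{N}(\bm{\mu}_2,\bm{\Sigma}_2)) \le \varepsilon$ transports to $KL(\mathcal{N}(0,I)\,\|\,\mathcal{N}(\bm{\mu}',\bm{\Sigma}')) \le \varepsilon$, and the lemma then supplies the desired upper bound on $KL(\mathcal{N}(\bm{\mu}',\bm{\Sigma}')\,\|\,\mathcal{N}(0,I))$, which by the second identity above is exactly the bound asserted by Theorem \ref{thm:duality_small_KL_general}. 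For attainability, the extreme configuration in the transformed coordinates has $\bm{\mu}' = 0$ and a single eigenvalue of $\bm{\Sigma}' = B_1^{-1}\bm{\Sigma}_2(B_1^{-1})^{\top}$ equal to $-W_0(-e^{-(1+2\varepsilon)})$ with the rest equal to $1$; pulling back through $T^{-1}$ yields $\bm{\mu}_2 = \bm{\mu}_1$ and the eigenvalue condition on $B_1^{-1}\bm{\Sigma}_2(B_1^{-1})^{\top}$ stated in Theorem \ref{thm:duality_small_KL_general}. Such a $\mathcal{N}_2$ exists for any fixed $\mathcal{N}_1$, which is the substantive content of the corollary: the supremum remains attainable even after the one Gaussian is prescribed. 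The case where $\mathcal{N}(\bm{\mu}_2,\bm{\Sigma}_2)$ is fixed is handled identically, using Lemma \ref{thm:forward_to_backward_KL_small_indendent_n} and the transformation built from $B_2$.

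Third, Theorem \ref{thm:symmetry_KL_simplify} follows automatically, since its conclusion is purely a Taylor-type expansion of the closed-form bound of Theorem \ref{thm:duality_small_KL_general} in $\varepsilon$ via the principal branch of Lambert $W$, and neither the bound nor the expansion depends on the parameters of the distributions. I do not anticipate a genuine obstacle: the only point that needs care is checking that $T$ is a bona fide diffeomorphism so that Proposition \ref{thm:preserve_KL} applies, which is immediate from the invertibility of $B_1$ (equivalently, the positive-definiteness of $\bm{\Sigma}_1$). The corollary is therefore a short bookkeeping exercise rather than a new theorem.
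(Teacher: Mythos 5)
Your proposal is correct and follows essentially the same route as the paper: the paper also justifies this corollary by applying the invertible linear transformation of Equation \eqref{equ:linear_tranform_for_symmetry} to send the fixed Gaussian to $\mathcal{N}(0,I)$, invoking Lemma \ref{thm:duality_small_KL} via Proposition \ref{thm:preserve_KL}, and noting that the attainability conditions and the series expansion of Theorem \ref{thm:symmetry_KL_simplify} are unaffected. No gaps to report.
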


\begin{remark}
	The supremum in Theorem \ref{thm:duality_small_KL_general} has the following properties.
	\begin{enumerate}
		\item  The supremum is small (zero) when $\varepsilon$ is small (zero). Figure \ref{fig:sup_KL} shows some values of the supremum of KL divergence. 
		\item  The supremum increases rapidly when $\varepsilon>2$ due to the rapid increase of term $\frac{1}{-W_0(-e^{-(1+2\varepsilon)})}$.
		\item 
		It is hard to reach the supremum in typical applications (\textit{e.g.}, in machine learning practice) due to the strict conditions.
		
		\item The bound is independent of the dimension $n$. This is a critical property in high-dimensional problems.
	\end{enumerate}
\end{remark}
\begin{figure}[h]
	\centering	\includegraphics[width=7cm]{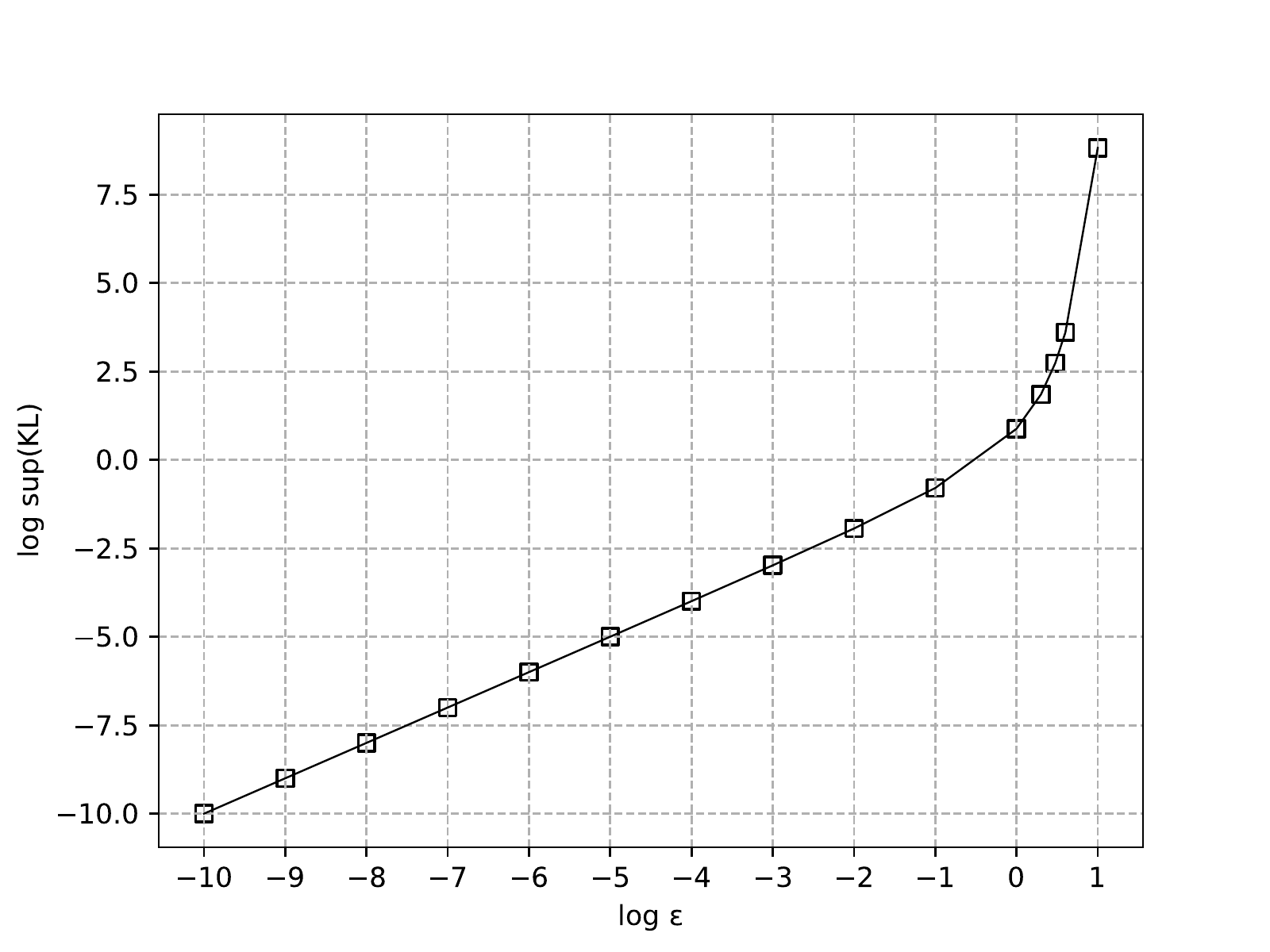}
	\caption{Values of supremum of KL divergence shown on a logarithmic scale.}
	\label{fig:sup_KL}
\end{figure}

\subsection{Infimum of Reverse KL Divergence Between Gaussians}\label{sec:large_kl_quasi_symmetry}

We also want to know how small the reverse KL divergence could be when forward KL divergence is not less than a given number. 
In this subsection, we give the infimum of $KL(\mathcal{N}_2||\mathcal{N}_1)$ when $KL(\mathcal{N}_1||\mathcal{N}_2)\geq M$ ($M>0$). The main result is shown in Theorem  \ref{thm:duality_big_KL_general}.

\begin{theorem}\label{thm:duality_big_KL_general}
	For any two $n$-dimensional Gaussian distributionss $\mathcal{N}(\bm{\mu}_1,\bm{\Sigma}_1)$ and $\mathcal{N}(\bm{\mu}_2,\bm{\Sigma}_2)$, if $KL(\mathcal{N}(\bm{\mu}_1,\bm{\Sigma}_1)||\mathcal{N}(\bm{\mu}_2,\bm{\Sigma}_2))\geq M(M>0)$, then  
	\begin{equation}
		\small
		\begin{aligned}
			&KL(\mathcal{N}(\bm{\mu}_2,\bm{\Sigma}_2)||\mathcal{N}(\bm{\mu}_1,\bm{\Sigma}_1)) \\
			\geq & \dfrac{1}{2}\left(\dfrac{1}{-W_{-1}(-e^{-(1+2M)})}-\log \dfrac{1}{-W_{-1}(-e^{-(1+2M)})} -1\right)
		\end{aligned}
	\end{equation}
	The infimum is attained when the following two conditions hold.
	\begin{enumerate}[(1)]
		\item There exists only one eigenvalue $\lambda_j$ of $B_2^{-1}\bm{\Sigma}_1(B_2^{-1})^\top$ or $B_1^{-1}\bm{\Sigma}_2(B_1^{-1})^\top$ equal to $-W_{-1}(-e^{-(1+2M)})$ and all other eigenvalues $\lambda_i$ $(i\neq j)$ are equal to $1$, where $B_{1}=P_{1}D_{1}^{1/2}$, $P_{1}$ is an orthogonal matrix whose columns are the eigenvectors of $\bm{\Sigma}_{1}$, $D_{1}=diag(\lambda_1,\dots,\lambda_n)$ whose diagonal elements are the corresponding eigenvalues, $B_2$ is defined in the same way as $B_1$ except on $\bm{\Sigma}_2$.  
		\item $\bm{\mu}_1=\bm{\mu}_2$.
	\end{enumerate}  
\end{theorem}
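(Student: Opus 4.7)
My plan is to mirror the proof of Theorem \ref{thm:duality_small_KL_general}, with the roles of $w_1$ and $w_2$ interchanged and convexity replaced by concavity; I will also sketch a shorter alternative proof by duality.

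First, by Proposition \ref{thm:preserve_KL} I would apply an invertible linear transformation that turns $\mathcal{N}(\bm{\mu}_2, \bm{\Sigma}_2)$ into $\mathcal{N}(0, I)$, exactly as in the main proof of Theorem \ref{thm:duality_small_KL_general}. This reduces the claim to the standardized statement: if $KL(\mathcal{N}(\bm{\mu}, \bm{\Sigma})||\mathcal{N}(0, I)) \geq M$, then $KL(\mathcal{N}(0, I)||\mathcal{N}(\bm{\mu}, \bm{\Sigma})) \geq \tfrac{1}{2}(f(1/w_2(2M)) - 1)$. Writing $\bm{\Sigma} = PDP^\top$ with $D = \mathrm{diag}(\lambda_1, \ldots, \lambda_n)$ as in the proof of Lemma \ref{thm:duality_small_KL}, the hypothesis becomes $\sum_i f(\lambda_i) + \bm{\mu}^\top\bm{\mu} \geq n + 2M$ and the quantity to be lower-bounded becomes $\sum_i f(1/\lambda_i) + \bm{\mu}^\top\bm{\Sigma}^{-1}\bm{\mu} - n$ (up to the factor $\tfrac{1}{2}$).

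Second, I would prove the infimum counterpart of Lemma \ref{thm:lemma_sup_fx_invs_nd}: if $\sum_i f(x_i) \geq n + \varepsilon$, then $\inf \sum_i f(1/x_i) = f(1/w_2(\varepsilon)) + n - 1$, attained when one coordinate equals $w_2(\varepsilon)$ and the rest equal $1$. Setting $M_i := f(x_i) - 1 \geq 0$ with $\sum M_i \geq \varepsilon$, Lemma \ref{prp:inf_f_x_invers_1_d} gives $f(1/x_i) \geq f(1/w_2(M_i))$, with equality at $x_i = w_2(M_i) \geq 1$, so the problem reduces to minimizing $\sum h(M_i)$ subject to $\sum M_i \geq \varepsilon$, where $h(M) := f(1/w_2(M)) - 1$. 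Using Lemma \ref{thm:deriv_w1_w2} and $w_2'(M) = w_2(M)/(w_2(M)-1)$ one computes
\begin{equation*}
	h'(M) = \frac{w_2(M) - 1}{w_2(M)^2}\, w_2'(M) = \frac{1}{w_2(M)} > 0, \qquad h''(M) = -\frac{w_2'(M)}{w_2(M)^2} < 0,
\end{equation*}
so $h$ is strictly increasing and strictly concave on $[0, \infty)$ with $h(0) = 0$. Concavity together with $h(0) = 0$ yields subadditivity $h(\sum M_i) \leq \sum h(M_i)$, and monotonicity yields $\sum h(M_i) \geq h(\sum M_i) \geq h(\varepsilon)$, with equality exactly when one $M_j = \varepsilon$ and the rest vanish. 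This is the concentration statement needed, and it is the \emph{main obstacle} of the direct approach, playing the role of the convexity of $\Delta$ in the proof of Lemma \ref{thm:lemma_sup_fx_invs_nd}.

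Third, to absorb the mean, I would parametrize the constraint by $M_1 \in [0, 2M]$ as $\sum_i f(\lambda_i) - n \geq M_1$ together with $\bm{\mu}^\top\bm{\mu} \geq 2M - M_1$. The eigenvalue step gives $\sum_i f(1/\lambda_i) - n \geq h(M_1)$, with equality forcing one $\lambda_j = w_2(M_1)$ and the rest equal to $1$, hence $\lambda^* = w_2(M_1)$ at the joint minimizer. The minimum of $\bm{\mu}^\top\bm{\Sigma}^{-1}\bm{\mu}$ under $\bm{\mu}^\top\bm{\mu} \geq 2M - M_1$ is $(2M - M_1)/\lambda^*$, attained when $\bm{\mu}$ points along the top eigenvector. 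The combined lower bound
\begin{equation*}
	g(M_1) = h(M_1) + \frac{2M - M_1}{w_2(M_1)}, \qquad g'(M_1) = \frac{1}{w_2(M_1)} - \frac{1}{w_2(M_1)} - \frac{(2M - M_1)\, w_2'(M_1)}{w_2(M_1)^2} \leq 0
\end{equation*}
is nonincreasing on $[0, 2M]$ (and $g(M_1) = h(M_1)$ is increasing for $M_1 \geq 2M$), so $g$ is minimized at $M_1 = 2M$, i.e., $\bm{\mu} = 0$. Substituting $w_2(2M) = -W_{-1}(-e^{-(1+2M)})$ into $\tfrac{1}{2}h(2M)$ yields the formula in the theorem, and pulling the attainment pattern back through the linear transformation gives the stated conditions. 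Finally, I would record a one-line alternative proof: letting $U(\varepsilon) = \tfrac{1}{2}(f(1/w_1(2\varepsilon)) - 1)$ denote the supremum function from Theorem \ref{thm:duality_small_KL_general}, the identity $1/w_1(2m) \geq 1$ forces $1/w_1(2m) = w_2(2U(m))$ in the equation $f(1/w_1(2m)) = 1 + 2U(m)$, giving $m = \tfrac{1}{2}(f(1/w_2(2U(m))) - 1)$, so the claimed bound is precisely $U^{-1}(M)$; applying Theorem \ref{thm:duality_small_KL_general} in contrapositive form then delivers the infimum at once.
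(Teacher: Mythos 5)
Your overall plan reproduces the paper's two proofs: the direct route is the Appendix proof (Lemma \ref{thm:lemma_inf_fx_inv_nd} together with Lemma \ref{thm:duality_big_KL}), and your closing ``one-line alternative'' is precisely the paper's main-text proof, which inverts the supremum function of Theorem \ref{thm:duality_small_KL_general} via $1/w_1(2m)=w_2(2M)$ and a contradiction argument; that sketch is sound. Your Step 2 is also correct and is an equivalent repackaging of the paper's argument: concavity and monotonicity of $h(M)=f(1/w_2(M))-1$ with $h(0)=0$ give exactly what the paper gets from convexity of $\Delta(M)=f(w_2(M))-f(1/w_2(M))=M-h(M)$, and your derivative computations check out.

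Step 3 of your direct proof, however, has a genuine gap. You impose the inequality constraints $\sum_i f(\lambda_i)-n\ge M_1$ and $\bm{\mu}^\top\bm{\mu}\ge 2M-M_1$, and then justify the mean-term bound $\bm{\mu}^\top\bm{\Sigma}^{-1}\bm{\mu}\ge (2M-M_1)/w_2(M_1)$ by asserting ``$\lambda^*=w_2(M_1)$ at the joint minimizer,'' reading $\lambda^*$ off the equality case of the eigenvalue bound. This is circular: a minimizer of the sum need not minimize the eigenvalue term, and for a lower-bound argument you need $\lambda^*\le w_2(M_1)$ \emph{uniformly} over the feasible set, which fails under ``$\ge M_1$'' (one eigenvalue can be arbitrarily large). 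Concretely, $g(M_1)=h(M_1)+(2M-M_1)/w_2(M_1)$ is not a valid lower bound for your relaxed problem: take $n=1$, $M_1=0$, $\lambda=1+2M$, $\bm{\mu}^2=2M$; both constraints hold, yet $f(1/\lambda)-1+\bm{\mu}^2/\lambda=\log(1+2M)<2M=g(0)$. The paper avoids this by splitting with equalities, $\sum_i f(\lambda_i)=n+M_1$ and $\bm{\mu}^\top\bm{\mu}=2M-M_1$ (Equations \eqref{equ:item_1_large} and \eqref{equ:item_2_large}): since each $f(\lambda_i)\ge 1$, the equality forces $f(\lambda_i)\le 1+M_1$ for every $i$, hence $\lambda^*\le w_2(M_1)$ for all feasible $\bm{\Sigma}$, and only then can the two term-wise bounds be added; the hypothesis $KL\ge M$ is recovered at the end from monotonicity of the bound in $M$. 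Alternatively, keep your inequalities but set $M_1=\min\bigl(\sum_i f(\lambda_i)-n,\,2M\bigr)$, handling $M_1=2M$ separately by dropping the mean term. With that repair the direct proof goes through; and since your duality paragraph already coincides with the paper's second proof, the stated bound (and, with the attainment conditions pulled back through the linear transformation, the equality conditions) follows.
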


\noindent\textbf{Proof of Theorem \ref{thm:duality_big_KL_general}}\\
Intuitively, the problems in  Theorem \ref{thm:duality_small_KL_general} and \ref{thm:duality_big_KL_general} should has a tight relation. In this paper, we give two proofs of Theorem \ref{thm:duality_big_KL_general}.
The first proof has the similar structure as that of Theorem \ref{thm:duality_small_KL_general}, except that Theorem \ref{thm:duality_big_KL_general} needs $W_{-1}$. We put the first proof in Appendix \ref{app:big_KL_general_proof1}.
The second proof can be drawn from Theorem \ref{thm:duality_small_KL_general} directly by analyzing the supremum. We put the second proof below. These two proofs can verify each other.

\begin{proof}
	In the following, we give the second proof which is drawn from Theorem \ref{thm:duality_small_KL_general}.
	
	Suppose $KL(\mathcal{N}(\bm{\mu}_1,\bm{\Sigma}_1)||\mathcal{N}(\bm{\mu}_2,\bm{\Sigma}_2))\leq t\ (t>0)$, according to Theorem \ref{thm:duality_small_KL_general}, we know 
	\begin{align}
		&KL(\mathcal{N}(\bm{\mu}_2,\bm{\Sigma}_2)||\mathcal{N}(\bm{\mu}_1,\bm{\Sigma}_1))\nonumber \\
		\leq  & \dfrac{1}{2}\left(
		\dfrac{1}{-W_{0}(-e^{-(1+2t)})}-\log \dfrac{1}{-W_{0}(-e^{-(1+2t)})} -1 \right)  \\
		= & \dfrac{1}{2}\left(
		\dfrac{1}{w_1(2t)}-\log \dfrac{1}{w_1(2t)} -1 \right)\\
		= & \bar{S}(t)
	\end{align}
	Since $\dfrac{1}{w_1(2t)}$ is strictly  increasing with $t$, $\bar{S}(t)$ is continuous and strictly  increasing with $t$.
	Besides, the range of function $\bar{S}(t)$ for $(t>0)$ is $(0,+\infty)$. 
	
	Given positive number $M$, according to Theorem \ref{thm:duality_small_KL_general}, there exists  $\mathcal{N}(\bm{\mu}_1,\bm{\Sigma}_1)$, $\mathcal{N}(\bm{\mu}_2,\bm{\Sigma}_2)$ and $m$ such that 
	\begin{align}
		\bar{S}(m)=&M\\
		KL(\mathcal{N}(\bm{\mu}_1,\bm{\Sigma}_1)||\mathcal{N}(\bm{\mu}_2,\bm{\Sigma}_2))=&M\\
		KL(\mathcal{N}(\bm{\mu}_2,\bm{\Sigma}_2)||\mathcal{N}(\bm{\mu}_1,\bm{\Sigma}_1))=&m
	\end{align}
	Thus, given the precondition $KL(\mathcal{N}(\bm{\mu}_1,\bm{\Sigma}_1)||\mathcal{N}(\bm{\mu}_2,\bm{\Sigma}_2))\geq M$, we can know that 
	\begin{align}
		\inf KL(\mathcal{N}(\bm{\mu}_2,\bm{\Sigma}_2)||\mathcal{N}(\bm{\mu}_1,\bm{\Sigma}_1)) \leq m
	\end{align}
	In the following, we show 
	\begin{align}\label{equ:inf_KL_in_concise_proof}
		\inf KL(\mathcal{N}(\bm{\mu}_2,\bm{\Sigma}_2)||\mathcal{N}(\bm{\mu}_1,\bm{\Sigma}_1)) = m
	\end{align}	
	must holds.	Otherwise, there exists an $m'<m$ and $\mathcal{N}(\bm{\mu}_1,\bm{\Sigma}_1)$, $\mathcal{N}(\bm{\mu}_2,\bm{\Sigma}_2)$ such that 
	\begin{align}
		KL(\mathcal{N}(\bm{\mu}_1,\bm{\Sigma}_1)||\mathcal{N}(\bm{\mu}_2,\bm{\Sigma}_2))\geq M\\
		KL(\mathcal{N}(\bm{\mu}_2,\bm{\Sigma}_2)||\mathcal{N}(\bm{\mu}_1,\bm{\Sigma}_1))=m' \label{equ:KL_n2_n1_m'}
	\end{align}
	Applying Theorem \ref{thm:duality_small_KL_general} on Equation \eqref{equ:KL_n2_n1_m'}, it is easy to know 
	\begin{align}
		\sup KL(\mathcal{N}(\bm{\mu}_1,\bm{\Sigma}_1)||\mathcal{N}(\bm{\mu}_2,\bm{\Sigma}_2))= \bar{S}(m')
	\end{align}
	This contradicts with the precondition $KL(\mathcal{N}(\bm{\mu}_1,\bm{\Sigma}_1)||\mathcal{N}(\bm{\mu}_2,\bm{\Sigma}_2))\geq M$ because $\bar{S}(m')<\bar{S}(m)= M$. Thus, Equation \eqref{equ:inf_KL_in_concise_proof} holds.
	
	Now we can solve $m$ from $\bar{S}(m)=M$ as follows.
	\begin{align}
		&\dfrac{1}{2}\left(\dfrac{1}{-W_{0}(-e^{-(1+2m)})}-\log \dfrac{1}{-W_{0}(-e^{-(1+2m)})} -1 \right) = M \nonumber\\
		\Leftrightarrow & \dfrac{1}{-W_{0}(-e^{-(1+2m)})}-\log \dfrac{1}{-W_{0}(-e^{-(1+2m)})}=1+2M  \nonumber\\ 
		\Leftrightarrow & \dfrac{1}{-W_{0}(-e^{-(1+2m)})}= -W_{-1}(-e^{-(1+2M)})  \nonumber\\
		\Leftrightarrow & \dfrac{1}{-W_{-1}(-e^{-(1+2M)})} = -W_{0}(-e^{-(1+2m)})  \nonumber\\
		\Leftrightarrow & \dfrac{1}{-W_{-1}(-e^{-(1+2M)})} -\log \dfrac{1}{-W_{-1}(-e^{-(1+2M)})} = 1+2m  \nonumber\\
		\Leftrightarrow &  m=\dfrac{1}{2}\left(\dfrac{1}{-W_{-1}(-e^{-(1+2M)})}-\log \dfrac{1}{-W_{-1}(-e^{-(1+2M)})} -1 \right) \label{equ:solve_m_from_M}
	\end{align}
	where the third and fifth equations follow from Lemma \ref{thm:solution_f_x}.
	Plugging Equation \eqref{equ:solve_m_from_M} into \eqref{equ:inf_KL_in_concise_proof}, we can prove Theorem \ref{thm:duality_big_KL_general}.
	
	$\hfill\square$
\end{proof}

\begin{remark}
	The bound in Theorem \ref{thm:duality_big_KL_general} has the similar form with that in Theorem \ref{thm:duality_small_KL_general}. In fact, Theorem \ref{thm:duality_small_KL_general} and Theorem \ref{thm:duality_big_KL_general} forms a duality. Firstly, these two theorems can be proved independently in the similar way. Secondly, these two theorems can be derived from each other. 
\end{remark}


\section{Relaxed Triangle Inequality}\label{sec:quasi_triangle}

Until now, we have quantified the approximate symmetry of KL divergence between Gaussians. A natural question is how large can $KL(\mathcal{N}_1||\mathcal{N}_3)$ be when $KL(\mathcal{N}_1||\mathcal{N}_2)$ and $KL(\mathcal{N}_2)||\mathcal{N}_3)$ are small for three Gaussians $\mathcal{N}_1$, $\mathcal{N}_2$, and $\mathcal{N}_3$.
In this section, we give a bound of $KL(\mathcal{N}_1||\mathcal{N}_3)$ that is also independent of the dimension $n$. 
Proving the relaxed triangle inequality is more difficult.
The main result is presented in Theorem \ref{thm:triangle_n1_n2_n3}. 
We put the key steps of proof of Theorem \ref{thm:triangle_n1_n2_n3} in Lemma \ref{thm:simple_relation_ab} $\sim$ \ref{thm:kl_triangle_fw2w2_extreme_distribute_e_max} and Lemma \ref{thm:triangle_n1_n2_standard}.

\begin{theorem}\label{thm:triangle_n1_n2_n3}
	For any three $n$-dimensional Gaussians $\mathcal{N}(\bm{\mu}_i,\bm{\Sigma}_i)(i\in\{1,2,3\})$  
	such that $KL(\mathcal{N}(\bm{\mu}_1,\bm{\Sigma}_1)||\mathcal{N}(\bm{\mu}_2,\bm{\Sigma}_2))\leq \varepsilon_1$ and 
	$ KL(\mathcal{N}(\bm{\mu}_2,\bm{\Sigma}_2)||\mathcal{N}(\bm{\mu}_3,\bm{\Sigma}_3))\leq \varepsilon_2$ for $\varepsilon_1, \varepsilon_2\ge 0$, then 
	\begin{align}
		&KL((\mathcal{N}(\bm{\mu}_1,\bm{\Sigma}_1)||\bm{\Sigma}(\bm{\mu}_3,\bm{\Sigma}_3))\nonumber\\
		<&  \varepsilon_1 + \varepsilon_2 + \dfrac{1}{2}\left(W_{-1}(-e^{-(1+2\varepsilon_1)})W_{-1}(-e^{-(1+2\varepsilon_2)}) \vphantom{\left(\sqrt{2\varepsilon_1}+\sqrt{\dfrac{2\varepsilon_2}{-W_{0}(-e^{-(1+2\varepsilon_2)})}}\right)^2}\right. \\ 
		 & +W_{-1}(-e^{-(1+2\varepsilon_1)})+W_{-1}(-e^{-(1+2\varepsilon_2)})+1 \\ 
		 &\left. -W_{-1}(-e^{-(1+2\varepsilon_2)})\left(\sqrt{2\varepsilon_1}+\sqrt{\dfrac{2\varepsilon_2}{-W_{0}(-e^{-(1+2\varepsilon_2)})}}\right)^2 \right)
	\end{align}
\end{theorem}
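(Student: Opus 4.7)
The plan is to reduce, as in the proof of Theorem~\ref{thm:duality_small_KL_general}, to the case $\mathcal{N}_2 = \mathcal{N}(0,I)$ by applying the invertible linear change of variables $\bm{z} = B_2^{-1}(\bm{x} - \bm{\mu}_2)$ with $\bm{\Sigma}_2 = B_2 B_2^{\top}$; Proposition~\ref{thm:preserve_KL} then preserves all three KL divergences simultaneously. After this normalization I split
\begin{equation*}
2KL(\mathcal{N}_1||\mathcal{N}_3) = \underbrace{\log|\bm{\Sigma}_3| - \log|\bm{\Sigma}_1| + \mathrm{Tr}(\bm{\Sigma}_3^{-1}\bm{\Sigma}_1) - n}_{\text{covariance part}} + \underbrace{(\bm{\mu}_1 - \bm{\mu}_3)^{\top} \bm{\Sigma}_3^{-1}(\bm{\mu}_1 - \bm{\mu}_3)}_{\text{mean part}}
\end{equation*}
and introduce the budget splits $A_1 = \bm{\mu}_1^{\top}\bm{\mu}_1$, $B_1 = \mathrm{Tr}(\bm{\Sigma}_1) - \log|\bm{\Sigma}_1| - n$ with $A_1 + B_1 \leq 2\varepsilon_1$, and $A_2 = \bm{\mu}_3^{\top}\bm{\Sigma}_3^{-1}\bm{\mu}_3$, $B_2 = \mathrm{Tr}(\bm{\Sigma}_3^{-1}) + \log|\bm{\Sigma}_3| - n$ with $A_2 + B_2 \leq 2\varepsilon_2$. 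Lemma~\ref{prp:sup_f_x_invers_1_d} then confines every eigenvalue of $\bm{\Sigma}_1$ to $[w_1(B_1), w_2(B_1)]$ and every eigenvalue of $\bm{\Sigma}_3^{-1}$ to $[w_1(B_2), w_2(B_2)]$.

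For the mean part, $\|\bm{\mu}_1\| \leq \sqrt{A_1}$ is immediate and $\|\bm{\mu}_3\| \leq \sqrt{A_2/w_1(B_2)}$ follows from $\lambda_{\min}(\bm{\Sigma}_3^{-1}) \geq w_1(B_2)$; combined with $\lambda_{\max}(\bm{\Sigma}_3^{-1}) \leq w_2(B_2)$ and the Euclidean triangle inequality, this gives $(\bm{\mu}_1-\bm{\mu}_3)^{\top}\bm{\Sigma}_3^{-1}(\bm{\mu}_1-\bm{\mu}_3) \leq w_2(B_2)\bigl(\sqrt{A_1} + \sqrt{A_2/w_1(B_2)}\bigr)^2$. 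For the covariance part I invoke von Neumann's trace inequality to write $\mathrm{Tr}(\bm{\Sigma}_3^{-1}\bm{\Sigma}_1) \leq \sum_i \gamma_{(i)}\alpha_{(i)}$ with eigenvalues of $\bm{\Sigma}_3^{-1}$ and $\bm{\Sigma}_1$ sorted in the same order; folding the log-determinants into this bound collapses the covariance part into the clean expression $\sum_i\bigl(f(\gamma_{(i)}\alpha_{(i)}) - 1\bigr)$. The key technical step is to maximize this sum subject to $\sum_i f(\alpha_i) \leq n + B_1$ and $\sum_i f(\gamma_i) \leq n + B_2$. Mirroring the concentration argument of Lemma~\ref{thm:lemma_sup_fx_invs_nd}, the maximum should be attained by the aligned extreme allocation $\alpha_1 = w_2(B_1)$, $\gamma_1 = w_2(B_2)$ with all other eigenvalues equal to $1$; invoking Lemma~\ref{thm:f_w1w1_w2w2_plugin} then collapses this to $f(w_2(B_1)w_2(B_2)) - 1 = B_1 + B_2 + (w_2(B_1) - 1)(w_2(B_2) - 1)$, which is exactly what the paper's Lemmas~\ref{thm:simple_relation_ab}--\ref{thm:kl_triangle_fw2w2_extreme_distribute_e_max} are designed to establish.

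The last step is to combine the two bounds. Because every surviving term on the right is monotone non-decreasing in each of $A_1, B_1, A_2, B_2$ (using monotonicity of $w_2$ and $1/w_1$), a safe upper bound is obtained by substituting $A_1 = B_1 = 2\varepsilon_1$ and $A_2 = B_2 = 2\varepsilon_2$ — an over-allocation of the budget that nonetheless dominates every feasible configuration. Dividing by $2$ and rewriting $w_2(2\varepsilon) = -W_{-1}(-e^{-(1+2\varepsilon)})$, $w_1(2\varepsilon) = -W_0(-e^{-(1+2\varepsilon)})$ produces exactly the stated formula, and the strict ``$<$'' holds because the mean and covariance inequalities cannot both saturate the full $\varepsilon_i$ budget simultaneously. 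The main obstacle is this two-parameter concentration lemma for $\sum_i(f(\gamma_i\alpha_i) - 1)$: unlike the univariate convex function $\Delta(\varepsilon)$ used in Lemma~\ref{thm:lemma_sup_fx_invs_nd}, the objective here depends jointly on two aligned allocations $(B_{1,i}, B_{2,i})$, and I expect the proof to require an exchange-type argument — forcing the extreme $\alpha$ and $\gamma$ coordinates to share the same index — layered on top of a bivariate convexity analysis analogous to the one establishing Equation~\eqref{equ:delta_te_leq_t_delta_e}.
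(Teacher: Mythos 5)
Your proposal follows essentially the same route as the paper's proof: normalize $\mathcal{N}_2$ to $\mathcal{N}(0,I)$ by an invertible linear map (Proposition \ref{thm:preserve_KL}), split the target KL into covariance and mean parts, relax by granting each part the full budget $2\varepsilon_i$, bound the trace term with the sorted-eigenvalue inequality of Lemma \ref{thm:trace_AB_supremum}, collapse the covariance part to $\sum_i\bigl(f(\gamma_{(i)}\alpha_{(i)})-1\bigr)$ and concentrate the budget on a single aligned coordinate via Lemmas \ref{thm:sup_fxy} and \ref{thm:kl_triangle_fw2w2_extreme_distribute_e_max}, then bound the mean part exactly as in Lemma \ref{thm:triangle_n1_n2_standard}, with Lemma \ref{thm:f_w1w1_w2w2_plugin} giving the stated closed form. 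The only point where you diverge is your conjecture about how the key concentration step would be established: the paper's Lemma \ref{thm:kl_triangle_fw2w2_extreme_distribute_e_max} is not proved by an exchange argument layered on a bivariate convexity analysis analogous to $\Delta(\varepsilon)$, but by an iterative coordinate-ascent-style construction of allocations whose optimality conditions are only expressed implicitly through Lambert-$W$ relations (precisely because the transcendental inverse of $f$ blocks the direct convexity route); since you invoke that lemma as stated rather than reprove it, this difference does not open a gap in your argument.
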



\noindent\textbf{Overview of proof of Theorem \ref{thm:triangle_n1_n2_n3}}\\
We want to solve the following optimzation problem $\bm{P}_3$ analytically. 
\begin{align}
	\text{maximize}\ &KL(\mathcal{N}(\bm{\mu}_1,\bm{\Sigma}_1)||\mathcal{N}(\bm{\Sigma}(\bm{\mu}_3,\bm{\Sigma}_3))\nonumber\\
	\text{\textit{s.t.}}\ & KL(\mathcal{N}(\bm{\mu}_1,\bm{\Sigma}_1)||\mathcal{N}(\bm{\mu}_2,\bm{\Sigma}_2))\leq \varepsilon_1\nonumber\\
	&  KL(\mathcal{N}(\bm{\mu}_2,\bm{\Sigma}_2)||\mathcal{N}(\bm{\mu}_3,\bm{\Sigma}_3))\leq \varepsilon_2 \nonumber
\end{align}
Unfortunately, it is hard to find the supremum due to the complexity caused by Lambert $W$ function. 
So we relax the constraints to simplify the problem.
Our proof consists of the following several steps.
\begin{enumerate}
	\item \textit{Invertible linear transformation}. The first step is similar to that of Theorem \ref{thm:duality_small_KL_general}. We use a linear transformation on $\mathcal{N}_1$, $\mathcal{N}_2$, and $\mathcal{N}_3$ to simply the problem. After transformation, $\mathcal{N}_2$ is converted to standard Gaussian. 
	\item \textit{Relaxing constraints}. In this step, we relax the constraints to get a simpler problem, which is in turn reduced to the following core problem $\bm{P}_4$.
	\begin{align}
		\text{maximize} \ & \sum_{i=1}^{n}\lambda_{1,[i]}\lambda{'}_{2,[i]}- \log \lambda_{1,[i]}\lambda{'}_{2,[i]}\label{equ:obj_reduced_core_triangle_problem}\\
		\text{\textit{s.t.}} \ & \lambda_{1,[i]}-\log \lambda_{1,[i]}= 1+\varepsilon_{1,[i]}\ (1\leq i\leq n) \nonumber\\
		& \bigwedge\limits_{i=1}^n\varepsilon_{1,[i]}\geq 0\wedge \sum\limits_{i=1}^{n}\varepsilon_{1,[i]}=2\varepsilon_1\nonumber\\
		& \lambda{'}_{2,[i]}-\log \lambda{'}_{2,[i]}= 1+\varepsilon_{2,[i]}\ (1\leq i\leq n)\nonumber \\
		& \bigwedge\limits_{i=1}^n\varepsilon_{2,[i]}\geq 0\wedge\sum\limits_{i=1}^{n}\varepsilon_{2,[i]}=2\varepsilon_2\nonumber
	\end{align}
	where $\lambda_{1,[i]}, \lambda'_{2,[i]}$ are the eigenvalues of $\bm{\Sigma}_1, \bm{\Sigma_2^{-1}}$ arranged in decreasing order, respectively, and $\varepsilon_{1,[i]},\varepsilon_{2,[i]}$ are  arranged in decreasing order too.
	
	\item \label{text:overview_proof_key_lemma}\textit{Concentrating $\varepsilon_1$ and $\varepsilon_2$}. The objective function \eqref{equ:obj_reduced_core_triangle_problem} is determined by  how 
	$\varepsilon_{1}$ and $\varepsilon_{2}$ are allocated to $(\varepsilon_{1,[1]},\cdots,\varepsilon_{1,[n]})$ and $(\varepsilon_{2,[1]},\cdots,\varepsilon_{2,[n]})$.
	We prove that an ``extreme allocation'' can make the objective function maximized. In other words, Equation  
	\eqref{equ:obj_reduced_core_triangle_problem} takes its maximum when $\varepsilon_{1,[1]}=\varepsilon_1$ and $\varepsilon_{2,[1]}=\varepsilon_2$. 
	We use a key Lemma \ref{thm:kl_triangle_fw2w2_extreme_distribute_e_max} to deal with the two dimensional case (\textit{i.e.}, $n=2$). Then, we use Lemma \ref{thm:trace_AB_supremum} to extend the conclusion to high dimensional problems.
	
	Lemma \ref{thm:kl_triangle_fw2w2_extreme_distribute_e_max} is the most tricky part in this paper. In the proof, concentrating $\varepsilon_1$ and $\varepsilon_2$ is much harder than that in last section for Theorem \ref{thm:duality_small_KL_general}. $f(x)=x-\log x$ is a transcendental function whose inverse function is expressed by Lambert $W$ function. This makes even a 2-dimensional case of problem $\bm{P}_4$ hard to solve. 
	We use an iterated way to prove Lemma \ref{thm:kl_triangle_fw2w2_extreme_distribute_e_max}.
	We prove that, for any fixed ``non-extreme allocation'' $(\varepsilon_{1,[1]},\varepsilon_{1,[2]})$ (\textit{i.e.}, $\varepsilon_{1,[2]}>0$), there is a ``more extreme'' allocation $(\varepsilon_{2,[1]},\varepsilon_{2,[2]})$ that can make the objective function maximized. Then we fix $(\varepsilon_{2,[1]},\varepsilon_{2,[2]})$ and find a more extreme allocation $(\varepsilon'_{1,[1]},\varepsilon'_{1,[2]})$ to lift the objective function further. Using these iterations, we can find an infinite sequence of allocations which can make the objective function reach its supremum when the allocation is an extreme one.

	
	%
	%
	%
	%
\end{enumerate}


\textbf{Notations.}
Before the proof, we define the following helper functions based on $f(x)=x-\log x$.
\begin{align}\label{equ:flfr}
	f_l(x) = f(1-x)-1\ (0\leq x<1),\ 
	f_r(x)= f(x+1)-1\ (x\geq 0)
\end{align}
The derivatives of $f_l(x), f_r(x)$ are
\begin{align}
	f'_l(x)= -f'(1-x)= \dfrac{1}{1-x}-1 \label{equ:deriv_flr},\ 
	f'_r(x)=f'(1+x) =1-\dfrac{1}{x+1}  
\end{align}
So both $f_l(x)$ and $f_r(x)$ are strictly  increasing.
We note the inverse functions of  $f_l, f_r$ as $g_l, g_r$, respectively. Combining Lemma \ref{thm:reverse_f_def}, it is not hard to verify that $g_l, g_r$ are
\begin{align}
	g_l(\varepsilon) =& f_l^{-1}(\varepsilon) = 1-w_1(\varepsilon) = 1+W_{0}(-e^{-(1+\varepsilon)})\ (\varepsilon\geq 0) \\
	g_r(\varepsilon) =& f_r^{-1}(\varepsilon) = w_2(\varepsilon)-1= -W_{-1}(-e^{-(1+\varepsilon)})-1\ (\varepsilon\geq 0) \label{equ:def_f_r_inv}
\end{align}
According to Lemma \ref{thm:deriv_w1_w2}, the derivatives of $g_l, g_r$ are 
\begin{align}
	g_l'(\varepsilon)=f_l^{-1\prime}(\varepsilon) = \dfrac{w_1(\varepsilon)}{1-w_1(\varepsilon)} =\dfrac{1-f_l^{-1}(\varepsilon)}{f_l^{-1}(\varepsilon)}= \dfrac{1}{1-w_1(\varepsilon)}-1\label{equ:deriv_f_l_inv}\\
	g_r'(\varepsilon)=f_r^{-1\prime}(\varepsilon) = \dfrac{w_2(\varepsilon)}{w_2(\varepsilon)-1} =\dfrac{f_r^{-1}(\varepsilon)+1}{f_r^{-1}(\varepsilon)}= 1+\dfrac{1}{w_2(\varepsilon)-1} \label{equ:deriv_f_r_inv}
\end{align}
Specially, since $\lim\limits_{\varepsilon\to 0}w_2(\varepsilon)=w_2(0)=1$, it is easy to know 
\begin{align}\label{equ:limits_f_r_inv_e_to_0}
	\lim\limits_{\varepsilon\to 0}g_r'(\varepsilon)= +\infty
\end{align}
In the following, we note $g_r^{\prime}(0)=+\infty$ for convenience.

Lemma \ref{thm:simple_relation_ab} gives two useful conclusions for subsequent analysis. They hold apparently.
\begin{lemma}\label{thm:simple_relation_ab}
	Let  $a, b, a^+, b^-$ be positive real numbers.
	\begin{enumerate}[label=(\alph*), ref=\ref{thm:simple_relation_ab}\alph*]
		\item \label{thm:lemma_a1_div_b1}
		if $a>b, a<a^+, b>b^-$, then $\frac{a+1}{b+1}<\frac{a^++1}{b^-+1}$.
		\item \label{thm:lemma_a_div_b_b_plus_div_a_plus}
		if $a\leq b$, then $\frac{a(b+1)}{b(a+1)}\leq 1$.
	\end{enumerate}
	
\end{lemma}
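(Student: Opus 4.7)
Both parts of Lemma \ref{thm:simple_relation_ab} are essentially bookkeeping facts about positive reals, so the strategy is simply to chain elementary monotonicity observations rather than to invoke any of the earlier machinery around $f$, $W$, or $w_1, w_2$. I would prove the two parts independently and keep the argument short.

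For part (a), the plan is to bound the target fraction from above by increasing the numerator and decreasing the denominator, then apply positivity. Concretely, from $a < a^+$ I get $a+1 < a^+ + 1$, and both sides are positive. From $b > b^-$ (with $b^- > 0$) I get $0 < b^- + 1 < b + 1$, hence $\frac{1}{b+1} < \frac{1}{b^- + 1}$. Multiplying the two strict inequalities (which is legal because all four quantities are positive) yields
\begin{equation*}
\frac{a+1}{b+1} \;<\; \frac{a^+ + 1}{b^- + 1},
\end{equation*}
which is the desired bound. The hypothesis $a > b$ is not used in the proof; it is presumably present because this is the regime in which the lemma will be applied in the proof of Theorem \ref{thm:triangle_n1_n2_n3}.

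For part (b), the plan is a one-line algebraic manipulation. Expanding both sides gives $a(b+1) = ab + a$ and $b(a+1) = ab + b$. The hypothesis $a \leq b$ immediately yields $ab + a \leq ab + b$. Since $a, b > 0$ the denominator $b(a+1)$ is strictly positive, so dividing preserves the inequality and we obtain
\begin{equation*}
\frac{a(b+1)}{b(a+1)} \;=\; \frac{ab + a}{ab + b} \;\leq\; 1.
\end{equation*}

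There is no real obstacle in either part: both reduce to the positivity of the quantities involved together with an elementary monotonicity step. The only mild subtlety is to be explicit that $b^- + 1 > 0$ in part (a), which is guaranteed by the standing assumption that all four numbers $a, b, a^+, b^-$ are positive.
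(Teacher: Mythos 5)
Your proof is correct; the paper itself gives no argument for this lemma, simply remarking that both parts ``hold apparently,'' and your elementary monotonicity and expansion steps are exactly the obvious justification intended. Your observation that the hypothesis $a>b$ in part (a) is not actually needed is also accurate.
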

%
%

\begin{lemma}\label{thm:w2_farther_from_1_than_w1}
	Given $f(x)=x-\log x$ and $\varepsilon\geq 0$, 
	then 
	$w_2(\varepsilon)-1\geq 1-w_1(\varepsilon)$ holds and the inequality is tight when $\varepsilon=0$;
\end{lemma}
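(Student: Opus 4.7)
The core idea is that although $f(x)=x-\log x$ is convex with minimum at $x=1$, it rises strictly faster on the left of $1$ than on the right, so the right root $w_2(\varepsilon)$ of $f(x)=1+\varepsilon$ must lie farther from $1$ than the left root $w_1(\varepsilon)$. I would formalize this by first establishing the auxiliary one-variable inequality
\[
f(1-t)\;\ge\;f(1+t)\qquad (0\le t<1),
\]
with equality only at $t=0$, and then pulling it back to the statement about $w_1,w_2$ using the fact that $f$ is strictly increasing on $[1,+\infty)$ (Lemma~\ref{equ:f_convex_minimum}).

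For the auxiliary inequality, let $\psi(t)=f(1-t)-f(1+t)=\log\frac{1+t}{1-t}-2t$ on $[0,1)$. Then $\psi(0)=0$ and
\[
\psi'(t)=\frac{1}{1-t}+\frac{1}{1+t}-2=\frac{2t^{2}}{1-t^{2}},
\]
which is nonnegative on $[0,1)$ and vanishes only at $t=0$. Hence $\psi(t)\ge 0$ throughout, strictly so for $t>0$.

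With this in hand, set $t=1-w_{1}(\varepsilon)\in[0,1)$. By the definition of $w_{1}$ we have $f(1-t)=f(w_{1}(\varepsilon))=1+\varepsilon$, so the auxiliary inequality yields $f(1+t)\le 1+\varepsilon=f(w_{2}(\varepsilon))$. Since $1+t\ge 1$ and $w_{2}(\varepsilon)\ge 1$ and $f$ is strictly increasing on $[1,+\infty)$ (Lemma~\ref{equ:f_convex_minimum}), this gives $1+t\le w_{2}(\varepsilon)$, i.e.\ $w_{2}(\varepsilon)-1\ge 1-w_{1}(\varepsilon)$, which is exactly the claim. The tightness at $\varepsilon=0$ is immediate because $w_{1}(0)=w_{2}(0)=1$ makes both sides zero.

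I expect no serious obstacle: the main step is the elementary estimate $\psi(t)\ge 0$, which reduces to a single sign check of $\psi'$. An alternative I considered, monitoring $h(\varepsilon)=w_{1}(\varepsilon)+w_{2}(\varepsilon)-2$ via its derivative using Lemma~\ref{thm:deriv_w1_w2}, would work as well but requires the inequality $\log u\ge 2(u-1)/(u+1)$ for $u\ge 1$ applied to $u=w_{2}/w_{1}$, which is essentially the same computation in disguise and less transparent. The route through $\psi(t)$ seems the cleanest and keeps the argument within the tools already set up in Lemma~\ref{thm:lemma_1_d_fx}.
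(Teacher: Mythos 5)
Your proof is correct, and it takes a genuinely different route from the paper. The paper works in the $\varepsilon$-variable: it sets $\Delta_w(\varepsilon)=(w_2(\varepsilon)-1)-(1-w_1(\varepsilon))$, notes $\Delta_w(0)=0$, and shows $\Delta_w'(\varepsilon)>0$ by computing $g_r'(\varepsilon)=1/f'(w_2(\varepsilon))$ and $g_l'(\varepsilon)=1/(-f'(w_1(\varepsilon)))$ and then chaining $0<f'(w_2(\varepsilon))\leq -f'(1/w_2(\varepsilon))<-f'(w_1(\varepsilon))$, which requires Lemma \ref{thm:deriv_w2_inv_w2_compare}, Lemma \ref{thm:inequality_solution_fx}, and the convexity of $f$. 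You instead prove the pointwise symmetric comparison $f(1-t)\geq f(1+t)$ on $[0,1)$ by a one-line sign check of $\psi'(t)=2t^2/(1-t^2)$, and then transfer it to the roots via $f(w_1(\varepsilon))=f(w_2(\varepsilon))=1+\varepsilon$ and the strict monotonicity of $f$ on $[1,+\infty)$ from Lemma \ref{equ:f_convex_minimum}; this is exactly the statement that $f$ grows faster to the left of its minimum than to the right, expressed additively around $x=1$ rather than through the inverse branches. What your route buys is economy: it avoids the derivatives of $w_1,w_2$ (Lemma \ref{thm:deriv_w1_w2}) and the auxiliary comparison Lemma \ref{thm:deriv_w2_inv_w2_compare} altogether, and it yields the strict inequality for $\varepsilon>0$ for free. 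What the paper's route buys is that it stays entirely within the machinery of the inverse functions $g_l,g_r$ that it has already set up and reuses later (Equations \eqref{equ:deriv_f_l_inv}--\eqref{equ:deriv_f_r_inv}), so the derivative comparison there does double duty. Your alternative remark about monitoring $w_1(\varepsilon)+w_2(\varepsilon)-2$ in $\varepsilon$ is indeed essentially the paper's argument; the version you chose is the cleaner and more self-contained of the two.
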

\begin{proof}
	The details of the proof are shown in Appendix \ref{sec:app_proof_thm:w2_farther_from_1_than_w1}.

$\hfill\square$ 
\end{proof}

\begin{lemma}\label{thm:sup_fxy}
	Given $f(x)=x-\log x$ and $\varepsilon_x, \varepsilon_y\geq 0$, 
	if $f(x)\leq 1+\varepsilon_x$ and $f(y)\leq 1+ \varepsilon_y$, then 
	\begin{align}
		f(xy)\leq f(w_2(\varepsilon_x)w_2(\varepsilon_y))
	\end{align}
\end{lemma}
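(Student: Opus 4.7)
The plan is to reduce $\sup f(xy)$ to the one-variable problem $\sup_{u} f(u)$ over the range of $u=xy$, and then show that the larger endpoint of this range yields the larger value of $f$. By Lemma \ref{prp:sup_f_x_invers_1_d}, the constraints $f(x)\le 1+\varepsilon_x$ and $f(y)\le 1+\varepsilon_y$ are equivalent to $x\in[w_1(\varepsilon_x),w_2(\varepsilon_x)]$ and $y\in[w_1(\varepsilon_y),w_2(\varepsilon_y)]$. Since all four bounds are positive, the product $xy$ ranges over $[w_1(\varepsilon_x)w_1(\varepsilon_y),\,w_2(\varepsilon_x)w_2(\varepsilon_y)]$.

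By Lemma \ref{equ:f_convex_minimum}, $f$ is strictly convex on $\mathbb{R}^{++}$, so its maximum on any closed interval is attained at one of the two endpoints. Therefore
\begin{align}
\sup f(xy) = \max\bigl(f(w_1(\varepsilon_x)w_1(\varepsilon_y)),\,f(w_2(\varepsilon_x)w_2(\varepsilon_y))\bigr),
\end{align}
and it remains to show $f(w_1(\varepsilon_x)w_1(\varepsilon_y)) \le f(w_2(\varepsilon_x)w_2(\varepsilon_y))$.

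Write $a=w_2(\varepsilon_x)$, $b=w_2(\varepsilon_y)$, $c=w_1(\varepsilon_x)$, $d=w_1(\varepsilon_y)$. By Lemma \ref{thm:solution_f_x}, $a-\log a = c-\log c$ and $b-\log b = d-\log d$, which give $a-c=\log(a/c)$ and $b-d=\log(b/d)$. Expanding,
\begin{align}
f(ab)-f(cd) \;=\; (ab-cd)-\log(ab/cd) \;=\; (ab-cd)-(a-c)-(b-d).
\end{align}
Now substitute $a=1+A$, $b=1+B$, $c=1-C$, $d=1-D$ with $A,B\ge 0$ and $C,D\in[0,1)$; the right-hand side collapses to $AB-CD$ after routine cancellation. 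By Lemma \ref{thm:w2_farther_from_1_than_w1} applied with $\varepsilon=\varepsilon_x$ and $\varepsilon=\varepsilon_y$, we have $A\ge C$ and $B\ge D$, so $AB\ge CD\ge 0$, hence $f(ab)\ge f(cd)$, which completes the proof.

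There is no real obstacle: the key observation is the algebraic identity $f(ab)-f(cd)=AB-CD$ once one exploits the defining relation $f(w_1)=f(w_2)$, after which Lemma \ref{thm:w2_farther_from_1_than_w1} finishes the job. The convexity step is standard and the endpoint-comparison step is a short calculation.
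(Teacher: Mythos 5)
Your proof is correct and follows essentially the same route as the paper's: reduce to comparing $f$ at the two endpoints of $[w_1(\varepsilon_x)w_1(\varepsilon_y),\,w_2(\varepsilon_x)w_2(\varepsilon_y)]$ via convexity, use $f(w_1)=f(w_2)$ to reduce the difference $f(w_2(\varepsilon_x)w_2(\varepsilon_y))-f(w_1(\varepsilon_x)w_1(\varepsilon_y))$ to $(w_2(\varepsilon_x)-1)(w_2(\varepsilon_y)-1)-(1-w_1(\varepsilon_x))(1-w_1(\varepsilon_y))$, and conclude with Lemma \ref{thm:w2_farther_from_1_than_w1}. Only the presentation of the algebra differs, so there is nothing to add.
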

\begin{proof}
	The details of the proof are shown in Appendix \ref{sec:app_proof_thm:sup_fxy}.

$\hfill\square$ 
\end{proof}
In Lemma \ref{thm:lemma_sup_fx_invs_nd} in the last section (and Lemma \ref{thm:lemma_inf_fx_inv_nd} in Section \ref{app:big_KL_general_proof1} in Supplementary material), we eliminate the dimension $n$ from the bound by showing the convexity of  constructed function. Unfortunately, the relaxed triangle inequality involves three Gaussians which make the analysis more complex. 
The following Lemma \ref{thm:kl_triangle_fw2w2_extreme_distribute_e_max} is the core of proof of the relaxed triangle inequality theorem. It is the most techinical part in this paper.  
We will use Lemma \ref{thm:kl_triangle_fw2w2_extreme_distribute_e_max} to make the bound in Theorem \ref{thm:triangle_n1_n2_n3} independent of the dimension $n$.

\begin{lemma}\label{thm:kl_triangle_fw2w2_extreme_distribute_e_max}
	Given $f(x)=x-\log x$, 
	let $\varepsilon_{x,1},\varepsilon_{x,2},\varepsilon_{y,1},\varepsilon_{y,2}$ be four non-negative numbers such that $\varepsilon_{x,1}\geq \varepsilon_{x,2}, \varepsilon_{y,1}\geq \varepsilon_{y,2}$. 
	Then 
	\begin{align}\label{equ:f_w2x_w2y_max_target}
		&f(w_2(\varepsilon_{x,1})w_2(\varepsilon_{y,1}))+f(w_2(\varepsilon_{x,2})w_2(\varepsilon_{y,2}))\nonumber\\
		\leq &f(w_2(\varepsilon_{x,1}+\varepsilon_{x,2})w_2(\varepsilon_{y,1}+\varepsilon_{y,2}))+1 
	\end{align}
	
\end{lemma}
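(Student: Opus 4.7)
The plan is to reduce the lemma, via a short algebraic manipulation based on Lemma \ref{thm:f_w1w1_w2w2_plugin}, to a clean product-form inequality about $g_r$, and then to dispatch that in two steps using Cauchy--Schwarz and the first-order ODE satisfied by $g_r$.

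First, I would apply Lemma \ref{thm:f_w1w1_w2w2_plugin} to each of the three $f(w_2(\cdot)\,w_2(\cdot))$ expressions in \eqref{equ:f_w2x_w2y_max_target}. All the terms linear in the $\varepsilon$'s cancel between the two sides, and after regrouping the remaining quadratic terms via $ab - a - b + 1 = (a-1)(b-1)$ and setting $g_r(\varepsilon) = w_2(\varepsilon) - 1$ as in \eqref{equ:def_f_r_inv}, the lemma becomes equivalent to the concise statement
\begin{equation*}
g_r(\varepsilon_{x,1})\,g_r(\varepsilon_{y,1}) + g_r(\varepsilon_{x,2})\,g_r(\varepsilon_{y,2}) \;\leq\; g_r(\varepsilon_{x,1}+\varepsilon_{x,2})\,g_r(\varepsilon_{y,1}+\varepsilon_{y,2}).
\end{equation*}

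Second, by the Cauchy--Schwarz inequality the left-hand side above is bounded by $\sqrt{[g_r(\varepsilon_{x,1})^2 + g_r(\varepsilon_{x,2})^2]\,[g_r(\varepsilon_{y,1})^2 + g_r(\varepsilon_{y,2})^2]}$, so it suffices to prove the one-variable squaring inequality $g_r(s)^2 + g_r(t)^2 \leq g_r(s+t)^2$ for all $s,t \geq 0$ and apply it separately to the $x$- and $y$-groups. To establish the squaring inequality I would use the ODE $g_r(\varepsilon)\,g_r'(\varepsilon) = 1 + g_r(\varepsilon)$ that follows at once from \eqref{equ:deriv_f_r_inv}, so that $(g_r^2)'(\varepsilon) = 2 + 2 g_r(\varepsilon)$. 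Integrating from $0$ with $g_r(0) = 0$ yields $g_r(\varepsilon)^2 = 2\varepsilon + 2\int_0^\varepsilon g_r(y)\,\dif y$, and subtracting the identities for $s+t$, $s$ and $t$ the difference telescopes to
\begin{equation*}
g_r(s+t)^2 - g_r(s)^2 - g_r(t)^2 \;=\; 2\int_0^t \left[g_r(s+y) - g_r(y)\right]\,\dif y,
\end{equation*}
which is non-negative because $g_r$ is monotonically increasing.

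The main obstacle is the bookkeeping in the first reduction: one has to verify carefully that the constant offset ``$+1$'' on the right-hand side of \eqref{equ:f_w2x_w2y_max_target} is exactly what remains once all the $\varepsilon_{\cdot,i}$ and $w_2(\cdot)$ contributions cancel, so that the product-form inequality emerges cleanly. A pleasant by-product of this plan is that the ordering hypotheses $\varepsilon_{x,1}\geq \varepsilon_{x,2}$ and $\varepsilon_{y,1}\geq \varepsilon_{y,2}$ are not actually needed here---Cauchy--Schwarz is symmetric in its indices and the squaring inequality holds for arbitrary non-negative $s,t$---so the lemma is slightly stronger than stated. Equality is attained precisely when $\varepsilon_{x,2} = 0$ or $\varepsilon_{y,2} = 0$, consistent with the ``extreme allocation'' interpretation discussed in the overview.
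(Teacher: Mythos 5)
Your proof is correct, and it takes a genuinely different and far shorter route than the paper's. The paper proves Lemma \ref{thm:kl_triangle_fw2w2_extreme_distribute_e_max} by an iterative coordinate-ascent construction: fixing one allocation parameter, showing strict concavity in the other, characterizing the maximizer only implicitly through equations in $g_r$, building an infinite sequence of allocations whose ratios $R_x[i],R_y[i]$ diverge, passing to the limit, and closing with a continuity/contradiction argument, plus a separate treatment of the degenerate cases. You instead note that, by Lemma \ref{thm:f_w1w1_w2w2_plugin} (equivalently $f(w_2(s)w_2(t))=s+t+1+g_r(s)g_r(t)$), all linear terms cancel and the lemma is exactly the bilinear inequality $g_r(\varepsilon_{x,1})g_r(\varepsilon_{y,1})+g_r(\varepsilon_{x,2})g_r(\varepsilon_{y,2})\le g_r(\varepsilon_{x,1}+\varepsilon_{x,2})g_r(\varepsilon_{y,1}+\varepsilon_{y,2})$; Cauchy--Schwarz then reduces this to superadditivity of $g_r^2$, which follows from $(g_r^2)'(\varepsilon)=2+2g_r(\varepsilon)$ (a consequence of \eqref{equ:deriv_f_r_inv}) by integration and monotonicity of $g_r$. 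I verified the bookkeeping: the ``$+1$'' is exactly $f(w_2(0)w_2(0))$ and the cancellation is clean; and although $g_r'$ blows up at $0$, $(g_r^2)'$ extends continuously, so the identity $g_r(\varepsilon)^2=2\varepsilon+2\int_0^\varepsilon g_r(y)\,\dif y$ and the telescoping step are legitimate. Your route buys a lot: it is a few lines instead of several pages, it dispenses with the ordering hypotheses $\varepsilon_{x,1}\ge\varepsilon_{x,2}$, $\varepsilon_{y,1}\ge\varepsilon_{y,2}$, and, since Cauchy--Schwarz and superadditivity apply to $n$ terms at once, it yields $\sum_i g_r(a_i)g_r(b_i)\le g_r(\sum_i a_i)\,g_r(\sum_i b_i)$ directly, which would replace the repeated pairwise application of the lemma in the proof of Lemma \ref{thm:triangle_n1_n2_standard} by a single step; what the paper's longer argument provides in exchange is mainly structural information about the optimizing allocations (the implicit first-order conditions), which is not needed for the final bound. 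One small slip in a side remark: equality does not hold ``precisely when $\varepsilon_{x,2}=0$ or $\varepsilon_{y,2}=0$''---for $\varepsilon_{x,2}=0$, $\varepsilon_{x,1}>0$, $\varepsilon_{y,2}>0$ the inequality is strict---but the lemma asserts no equality condition, so this does not affect the correctness of your proof.
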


\noindent\textbf{Overview of proof of Lemma \ref{thm:kl_triangle_fw2w2_extreme_distribute_e_max}}\\
In the overview of proof of Theorem \ref{thm:triangle_n1_n2_n3} in the beginning of Section \ref{sec:quasi_triangle}, we have mentioned Lemma  \ref{thm:kl_triangle_fw2w2_extreme_distribute_e_max}. 
In the left hand side of Inequality \eqref{equ:f_w2x_w2y_max_target}, $\varepsilon_{x,2}$ and $\varepsilon_{y,2}$ stay in the second term. Intuitively, we use Inequality \eqref{equ:f_w2x_w2y_max_target}  to move $\varepsilon_{x,2}, \varepsilon_{y,2}$ into the first item. It is hard to prove Inequality \eqref{equ:f_w2x_w2y_max_target} directly due to the lack of conclusions relating to Lambert $W$ function. In the proof,  We use an iterative way to absorb $\varepsilon_{x,2}, \varepsilon_{y,2}$  into the first term gradually. 

We treat $$(\varepsilon_{x,1}+\theta_x\varepsilon_{x,2},\ \varepsilon_{x,2}-\theta_x\varepsilon_{x,2})\ \text{and}\
(\varepsilon_{y,1}+\theta_y\varepsilon_{y,2},\ \varepsilon_{y,2}-\theta_y\varepsilon_{y,2})$$ 
as two allocations, where $\theta_x$ and $\theta_y$ control how $\varepsilon_{x,2}$ and $\varepsilon_{y,2}$ are allocated among the two terms. The whole proof can be seen as an variation of coordinate ascent. In each iteration, we fix one of $\theta_x$ and $\theta_y$ (\textit{i.e.}, one allocation) and make another one vary. The goal is to maximize the objective function (Equation \eqref{equ:def_S}). In this way, we will construct an infinite sequence of allocations. The procedure is much harder than a simple coordinate ascent algorithm. 
The proof mainly consists of the following four aspects.
\begin{enumerate}[itemindent=1em, label={\textbf{A\arabic*}}]
	\setlength{\itemsep}{0pt}
	\setlength{\parsep}{0pt}
	\setlength{\parskip}{0pt}
	
	\item \label{itm:A1} In each step, once we fix one allocation and make another one vary, we prove there exists one and only one supremum.
	\item \label{itm:A2} We find an equation to express above supremum implicitly.
	\item \label{itm:A3} We prove the procedure is really lifting the objective function.
	\item \label{itm:A4} We construct an infinit sequence of allocations. Then we prove the limit of the allocation sequence will make the objective function reach its supremum.
\end{enumerate}

In this procedure, the hardest part is how to find a more extreme allocation based on the last one. There is no analytical solution to express these allocations. Luckily, we find a key equation to express the property of these allocations implicitly (see Equations \eqref{equ:from_thex_posi_to_they_posi}, \eqref{equ:deduction_ratio_xi2}, \eqref{equ:deduction_ratio_yi2}). Based on our analysis on such equation, we succeed to construct a sequence of allocations and finally prove Lemma \ref{thm:kl_triangle_fw2w2_extreme_distribute_e_max}.

\begin{proof}
	Inequality \eqref{equ:f_w2x_w2y_max_target} is equal to 
	\begin{align}
		&f(w_2(\varepsilon_{x,1})w_2(\varepsilon_{y,1}))+f(w_2(\varepsilon_{x,2})w_2(\varepsilon_{y,2}))\nonumber\\
		\leq& f(w_2(\varepsilon_{x,1}+\varepsilon_{x,2})w_2(\varepsilon_{y,1}+\varepsilon_{y,2}))+f(w_2(0)w_2(0)) 
	\end{align}
	We define function 
	\begin{align}\label{equ:def_S}
		S(\theta_x, \theta_y)
		=&f(w_2(\varepsilon_{x,1}+\theta_x\varepsilon_{x,2})w_2(\varepsilon_{y,1}+\theta_y\varepsilon_{y,2}))\nonumber\\
		&+f(w_2(\varepsilon_{x,2}-\theta_x\varepsilon_{x,2})w_2(\varepsilon_{y,2}-\theta_y\varepsilon_{y,2}))
	\end{align}
	for $-\frac{\varepsilon_{x,1}}{\varepsilon_{x,2}}\leq \theta_x\leq 1,-\frac{\varepsilon_{y,1}}{\varepsilon_{y,2}}\leq \theta_y\leq 1 $.
	The domains of $\theta_x, \theta_y$ are restricted to make $w_2(\cdot)$ in the definition of $S(\theta_x,\theta_y)$ meaningful.
	Inequation \eqref{equ:f_w2x_w2y_max_target} states that $S(0,0)\leq S(1,1)$.
	
	We can prove $S(0,0)\leq S(1,1)$ in the following three cases.
	\begin{enumerate}[itemindent=2.9em, label={\textbf{Case \arabic*}}]
		\item  \label{itm:case1} $\varepsilon_{x,2}=\varepsilon_{y,2}=0$. 
		\item  \label{itm:case2} $\varepsilon_{x,2}>0,\varepsilon_{y,2}>0$. In this case, we have $\varepsilon_{x,1}\geq \varepsilon_{x,2}>0, \varepsilon_{y,1}\geq \varepsilon_{y,2}>0$. 
		\item    \label{itm:case3} only one of $\varepsilon_{x,2}$ and $\varepsilon_{y,2}$ equals to 0.
	\end{enumerate} 
	
	It is easy to verify that $S(0,0)=S(1,1)$ for \ref{itm:case1}. In the following, we discuss \ref{itm:case2} first and deal with \ref{itm:case3} at the end of the proof.

	\noindent\ref{itm:case2}:
	
	In $S(\theta_x,\theta_y)$, $\theta_x,\theta_y$ are symmetric. Without loss of generality, we choose any $0<\theta_{x,0}<1$ at the beginning.
	The following proof consists of two steps. In \textbf{Step 1}, we  prove that 
	for any fixed $0<\theta_{x,0}<1$, there exists one and only one $-\frac{\varepsilon_{y,1}}{\varepsilon_{y,2}}< \theta_{y,1}<1$ such that $S(\theta_{x,0},\theta_y)$ takes its maximum. This accomplishes aspects \ref{itm:A1} and \ref{itm:A2} in the first iteration.  In \textbf{Step 2}, we prove $S(1,1)\geq S(0,0)$. The key idea is finding a strictly   increasing sequence $\{S[i]\}$ such that $S[0]$ can be arbitrarily close to $S(0,0)$ and $\lim\limits_{i\to\infty}S[i]=S(1,1)$. \textbf{Step 2} will accomplish aspects \ref{itm:A1} $\sim$ \ref{itm:A4} in all iterations.
	

	\textbf{Step 1.}
	At the beginning, we select any $0<\theta_{x,0}<1$. 
	For brevity, we note 
	\begin{align}\label{equ:def_tiled_exy_12}
		\tilde{\varepsilon}_{x,1}[0]=\varepsilon_{x,1}+\theta_{x,0}\varepsilon_{x,2},\ 
		\tilde{\varepsilon}_{x,2}[0]=\varepsilon_{x,2}-\theta_{x,0}\varepsilon_{x,2}\\
		\tilde{\varepsilon}_{y,1}=\varepsilon_{y,1}+\theta_y\varepsilon_{y,2},\ 
		\tilde{\varepsilon}_{y,2}=\varepsilon_{y,2}-\theta_y\varepsilon_{y,2}
	\end{align}
	%
	%
	where we use $\tilde{\varepsilon}_{x,(\cdot)}[0]$ to denote the variable is computed with $\theta_{x,0}$.

	Note that $g_r(\varepsilon)$ (defined in Equation \eqref{equ:def_f_r_inv}) is strictly  increasing with $\varepsilon$. Combining  the precondition $\varepsilon_{x,1}\geq \varepsilon_{x,2}$, we can know 
	\begin{align}
		\dfrac{g_r(\tilde{\varepsilon}_{x,1}[0])}{g_r(\tilde{\varepsilon}_{x,2}[0])}=\dfrac{g_r(\varepsilon_{x,1}+\theta_{x,0}\varepsilon_{x,2})}{g_r(\varepsilon_{x,2}-\theta_{x,0}\varepsilon_{x,2})}>\dfrac{g_r(\varepsilon_{x,1})}{g_r(\varepsilon_{x,2})}\geq 1
	\end{align}
	We note this condition as $\bm{C_1}[0]$ as follows. 
	\begin{align}
		\bm{C_1}[0]:& \dfrac{g_r(\tilde{\varepsilon}_{x,1}[0])}{g_r(\tilde{\varepsilon}_{x,2}[0])}> 1 \label{equ:start_from_thetax_0_ratio_fx1x2_larger_than_1}
	\end{align}
	Now given the fixed $\theta_{x,0}$, the derivative of $S(\theta_{x,0},\theta_y)$ is
	\begin{align}
		&\dfrac{\dif S(\theta_{x,0},\theta_y)}{\dif \theta_y}\nonumber\\
		= & \varepsilon_{y,2}\left(w_2(\tilde{\varepsilon}_{x,1}[0])\dfrac{w_2(\tilde{\varepsilon}_{y,1})}{w_2(\tilde{\varepsilon}_{y,1})-1}-\dfrac{1}{w_2(\tilde{\varepsilon}_{y,1})}\dfrac{w_2(\tilde{\varepsilon}_{y,1})}{w_2(\tilde{\varepsilon}_{y,1})-1}\right)\nonumber \\
		& -\varepsilon_{y,2}\left( w_2(\tilde{\varepsilon}_{x,2}[0])\dfrac{w_2(\tilde{\varepsilon}_{y,2})}{w_2(\tilde{\varepsilon}_{y,2})-1}-\dfrac{1}{w_2(\tilde{\varepsilon}_{y,2})}\dfrac{w_2(\tilde{\varepsilon}_{y,2})}{w_2(\tilde{\varepsilon}_{y,2})-1}\right)\nonumber\\
		= & \varepsilon_{y,2}\left(\dfrac{w_2(\tilde{\varepsilon}_{x,1}[0])w_2(\tilde{\varepsilon}_{y,1})-1}{w_2(\tilde{\varepsilon}_{y,1})-1}- \dfrac{w_2(\tilde{\varepsilon}_{x,2}[0])w_2(\tilde{\varepsilon}_{y,2})-1}{w_2(\tilde{\varepsilon}_{y,2})-1}\right)\nonumber\\
		= & \varepsilon_{y,2}\left(\dfrac{w_2(\tilde{\varepsilon}_{x,1}[0])w_2(\tilde{\varepsilon}_{y,1})-w_2(\tilde{\varepsilon}_{x,1}[0])+w_2(\tilde{\varepsilon}_{x,1}[0])-1}{w_2(\tilde{\varepsilon}_{y,1})-1}\right.\nonumber\\
		& \left. - \dfrac{w_2(\tilde{\varepsilon}_{x,2}[0])w_2(\tilde{\varepsilon}_{y,2})-w_2(\tilde{\varepsilon}_{x,2}[0])+w_2(\tilde{\varepsilon}_{x,2}[0])-1}{w_2(\tilde{\varepsilon}_{y,2})-1}\right)\nonumber\\ 
		= &  \varepsilon_{y,2}\left(\left(w_2(\tilde{\varepsilon}_{x,1}[0]) + \dfrac{w_2(\tilde{\varepsilon}_{x,1}[0])-1}{w_2(\tilde{\varepsilon}_{y,1})-1}\right) \right. \nonumber\\
		& \left. -\left(w_2(\tilde{\varepsilon}_{x,2}[0])+\dfrac{w_2(\tilde{\varepsilon}_{x,2}[0])-1}{w_2(\tilde{\varepsilon}_{y,2})-1}   \right) \right)\label{equ:deriv_S_thex0_they}
	\end{align}
	The second order derivative is 
	\begin{align}
		& \dfrac{\dif^2 S(\theta_{x,0},\theta_y)}{\dif \theta_y^2}\nonumber \\
		= & \dfrac{-(w_2(\tilde{\varepsilon}_{x,1}[0])-1)}{(w_2(\tilde{\varepsilon}_{y,1})-1)^2}\dfrac{w_2(\tilde{\varepsilon}_{y,1})}{w_2(\tilde{\varepsilon}_{y,1})-1}(\varepsilon_{y,2})^2 \nonumber\\
		& -\dfrac{-(w_2(\tilde{\varepsilon}_{x,2}[0])-1)}{(w_2(\tilde{\varepsilon}_{y,2})-1)^2}\dfrac{w_2(\tilde{\varepsilon}_{y,2})}{w_2(\tilde{\varepsilon}_{y,2})-1}(-(\varepsilon_{y,2})^2) \nonumber \\
		= & -\dfrac{(w_2(\tilde{\varepsilon}_{x,1}[0])-1)w_2(\tilde{\varepsilon}_{y,1})(\varepsilon_{y,2})^2}{(w_2(\tilde{\varepsilon}_{y,1})-1)^3} \nonumber\\
		&-\dfrac{(w_2(\tilde{\varepsilon}_{x,2}[0])-1)w_2(\tilde{\varepsilon}_{y,2})(\varepsilon_{y,2})^2}{(w_2(\tilde{\varepsilon}_{y,2})-1)^3}
	\end{align}
	Since $w_2(\varepsilon)>1$ for $\varepsilon>0$, it is easy to know  $\frac{\dif^2 S(\theta_{x,0},\theta_y)}{\dif \theta_y^2}<0$ for $\theta_y<1$. Thus we get the following proposition.
	
	\begin{proposition}\label{thm:proposition_S(thetax0_y)_concave}
		$S(\theta_{x,0},\theta_y)$ is strictly  concave and has at most one maximum for $\theta_y<1$.
	\end{proposition}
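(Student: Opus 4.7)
The plan is to read off the conclusion directly from the explicit second-derivative formula that immediately precedes the proposition. Concretely, I would argue as follows. In Case 2 we have $\varepsilon_{x,1}\geq \varepsilon_{x,2}>0$ and $\varepsilon_{y,1}\geq \varepsilon_{y,2}>0$, and the fixed $\theta_{x,0}$ satisfies $0<\theta_{x,0}<1$. Hence for $\theta_y$ in the open interval $\bigl(-\varepsilon_{y,1}/\varepsilon_{y,2},\,1\bigr)$, all four quantities $\tilde{\varepsilon}_{x,1}[0],\tilde{\varepsilon}_{x,2}[0],\tilde{\varepsilon}_{y,1},\tilde{\varepsilon}_{y,2}$ are strictly positive.

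Next I would invoke Lemma \ref{thm:solution_f_x}: for every $\varepsilon>0$, $w_2(\varepsilon)>1$. Therefore in the formula
\[
\dfrac{\dif^2 S(\theta_{x,0},\theta_y)}{\dif \theta_y^2}
=-\dfrac{(w_2(\tilde{\varepsilon}_{x,1}[0])-1)\,w_2(\tilde{\varepsilon}_{y,1})\,\varepsilon_{y,2}^2}{(w_2(\tilde{\varepsilon}_{y,1})-1)^3}
-\dfrac{(w_2(\tilde{\varepsilon}_{x,2}[0])-1)\,w_2(\tilde{\varepsilon}_{y,2})\,\varepsilon_{y,2}^2}{(w_2(\tilde{\varepsilon}_{y,2})-1)^3},
\]
each factor in the two numerators is positive, and each cubed quantity in the denominators is positive, so both summands are strictly negative. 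This gives $\dfrac{\dif^2 S(\theta_{x,0},\theta_y)}{\dif \theta_y^2}<0$ throughout the interior of the admissible $\theta_y$-range.

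From strict negativity of the second derivative I would conclude strict concavity of $\theta_y\mapsto S(\theta_{x,0},\theta_y)$ on that interval, and strict concavity on an interval immediately yields at most one interior maximum (any two maxima would contradict strict concavity on the segment joining them). Extending continuously to the boundary $\theta_y=1$ (and to the left endpoint if needed) via continuity of $S(\theta_{x,0},\cdot)$ preserves the ``at most one maximum'' statement.

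The only delicate point — and it is essentially bookkeeping rather than a real obstacle — is checking that the second-derivative formula is valid on the \emph{open} interval: it would blow up at $\theta_y=1$ (where $\tilde{\varepsilon}_{y,2}=0$ and $w_2(\tilde{\varepsilon}_{y,2})-1=0$) or at the left endpoint (where $\tilde{\varepsilon}_{y,1}=0$), which is precisely why the proposition is phrased for $\theta_y<1$ and why strict concavity must be established on the interior first and then transferred to the closure by continuity of $S$ itself. Beyond this, no additional machinery is needed.
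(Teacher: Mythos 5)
Your argument is correct and is essentially the paper's own: the proposition is read off from the explicit second-derivative formula derived just before it, with strict negativity following from $w_2(\varepsilon)>1$ for $\varepsilon>0$, hence strict concavity and at most one maximum for $\theta_y<1$. Your extra care about the open interval and the boundary behavior is sound but adds nothing beyond what the paper's proof already implicitly uses.
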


	Remember that we are discussing \ref{itm:case2}, so $\varepsilon_{y,2}>0$. Now letting $\frac{\dif S(\theta_{x,0},\theta_y)}{\dif \theta_y} =0$ $($\textit{i.e.}, Equation \eqref{equ:deriv_S_thex0_they}$=0$$)$, we can obtain
	\begin{align}
		&\dfrac{\dif S(\theta_{x,0},\theta_y)}{\dif \theta_y} =0 \Leftrightarrow \nonumber\\
		& w_2(\tilde{\varepsilon}_{x,1}[0]) + \dfrac{w_2(\tilde{\varepsilon}_{x,1}[0])-1}{w_2(\tilde{\varepsilon}_{y,1})-1}
		= w_2(\tilde{\varepsilon}_{x,2}[0])+\dfrac{w_2(\tilde{\varepsilon}_{x,2}[0])-1}{w_2(\tilde{\varepsilon}_{y,2})-1} \nonumber\\  
		\label{equ:fix_thetax_find_thetay_equal}
	\end{align}
	Now, it seems that the proof is stuck because we can not solve Equation \eqref{equ:fix_thetax_find_thetay_equal} analytically. However, we succeed to go further by analyzing Equation \eqref{equ:fix_thetax_find_thetay_equal}. Our analysis starts from the following transformation in Equations \eqref{equ:start_transform_dS=0} $\sim$ \eqref{equ:from_thex_posi_to_they_posi}, which is hard to obtain but easy to verify.
	
	Using the notations of helper functions $g_r(\varepsilon)=f_r^{-1}(\varepsilon), g_r'(\varepsilon)=f_r^{-1\prime}(\varepsilon)$ in Equations  \eqref{equ:def_f_r_inv} and \eqref{equ:deriv_f_r_inv},
	we can rewrite Equation \eqref{equ:fix_thetax_find_thetay_equal} as follows.
	\begin{align}
		& \text{Equation \eqref{equ:fix_thetax_find_thetay_equal}} \nonumber\\
		\Leftrightarrow & w_2(\tilde{\varepsilon}_{x,1}[0])-1 + \dfrac{w_2(\tilde{\varepsilon}_{x,1}[0])-1}{w_2(\tilde{\varepsilon}_{y,1})-1}\nonumber\\
		&= w_2(\tilde{\varepsilon}_{x,2}[0])-1+\dfrac{w_2(\tilde{\varepsilon}_{x,2}[0])-1}{w_2(\tilde{\varepsilon}_{y,2})-1} \label{equ:start_transform_dS=0}\\
		\Leftrightarrow & (w_2(\tilde{\varepsilon}_{x,1}[0])-1)\left(1+\dfrac{1}{w_2(\tilde{\varepsilon}_{y,1})-1}\right)\nonumber\\
		&= (w_2(\tilde{\varepsilon}_{x,2}[0])-1)\left(1+\dfrac{1}{w_2(\tilde{\varepsilon}_{y,2})-1}\right)\nonumber\\
		\Leftrightarrow &g_r(\tilde{\varepsilon}_{x,1}[0]) g_r'(\tilde{\varepsilon}_{y,1})=g_r(\tilde{\varepsilon}_{x,2}[0])g_r'(\tilde{\varepsilon}_{y,2})\nonumber\\
		\Leftrightarrow & \dfrac{g_r(\tilde{\varepsilon}_{x,1}[0])}{g_r(\tilde{\varepsilon}_{x,2}[0])}=
		\dfrac{g_r'(\tilde{\varepsilon}_{y,2})}{g_r'(\tilde{\varepsilon}_{y,1})}  \label{equ:ratio_f_r_inv_x_equals_f_r_inv_der_y}\\
		\Leftrightarrow & \dfrac{g_r(\tilde{\varepsilon}_{x,1}[0])}{g_r(\tilde{\varepsilon}_{x,2}[0])}=
		\dfrac{\left(\dfrac{1}{g_r'(\tilde{\varepsilon}_{y,1})}\right)}{\left(\dfrac{1}{g_r'(\tilde{\varepsilon}_{y,2})}\right)} \nonumber\\
		\Leftrightarrow &  \dfrac{g_r(\tilde{\varepsilon}_{x,1}[0])}{g_r(\tilde{\varepsilon}_{x,2}[0])}=
		\dfrac{\left(\dfrac{g_r(\tilde{\varepsilon}_{y,1})}{g_r(\tilde{\varepsilon}_{y,1})+1}\right)}{\left(\dfrac{g_r(\tilde{\varepsilon}_{y,2})}{g_r(\tilde{\varepsilon}_{y,2})+1}\right)} \label{equ:fourth_line_derive_0_transformation}\\
		\Leftrightarrow & \dfrac{g_r(\tilde{\varepsilon}_{x,1}[0])}{g_r(\tilde{\varepsilon}_{x,2}[0])}=
		\dfrac{g_r(\tilde{\varepsilon}_{y,1})}{g_r(\tilde{\varepsilon}_{y,2})}\dfrac{g_r(\tilde{\varepsilon}_{y,2})+1}{g_r(\tilde{\varepsilon}_{y,1})+1} \label{equ:ratio_fx1fx2_left_fy1yx_right}\\
		\Leftrightarrow & \dfrac{g_r(\tilde{\varepsilon}_{y,1})}{g_r(\tilde{\varepsilon}_{y,2})}=\dfrac{g_r(\tilde{\varepsilon}_{x,1}[0])}{g_r(\tilde{\varepsilon}_{x,2}[0])}\dfrac{g_r(\tilde{\varepsilon}_{y,1})+1}{g_r(\tilde{\varepsilon}_{y,2})+1}\label{equ:from_thex_posi_to_they_posi}
	\end{align}
	where Equation \eqref{equ:fourth_line_derive_0_transformation} follows from Equation \eqref{equ:deriv_f_r_inv}. 
	
	Up to now, we transform the condition $\frac{\dif S(\theta_{x,0},\theta_y)}{\dif \theta_y} =0$ in Equation \eqref{equ:fix_thetax_find_thetay_equal} to Equation \eqref{equ:from_thex_posi_to_they_posi}. In the following, Equation \eqref{equ:from_thex_posi_to_they_posi} will be used to investigate the property of the maximum for $S(\theta_{x,0},\theta_y)$. The goal is to accomplish aspect \ref{itm:A2} of the proof.
	
	In the following \textbf{Substep 1.1}, we show that Equation \eqref{equ:fix_thetax_find_thetay_equal} must have one and only one solution. In other words, there must be one and only one point making $\frac{\dif S(\theta_{x,0},\theta_y)}{\dif \theta_y} =0$. Unfortunately, it is hard to get an analytical solution from Equation \eqref{equ:fix_thetax_find_thetay_equal} due to the complexity brought by Lambert $W$ function. Therefore, in \textbf{Substep 1.2}, we analyze Equations \eqref{equ:start_transform_dS=0} $\sim$ \eqref{equ:from_thex_posi_to_they_posi} to investigate the properties of the solution. Overall, the analysis in \textbf{Step 1} will be used as a basic step in \textbf{Step 2}.

	\textbf{	Substep 1.1.} 
	According to the definition of $g_r'(\varepsilon)$ in Equation \eqref{equ:deriv_f_r_inv}, 
	$g_r'(\varepsilon)$ is strictly decreasing with $\varepsilon$. So $g_r'(\tilde{\varepsilon}_{y,2})=g_r'(\varepsilon_{y,2}-\theta_y\varepsilon_{y,2})$ is strictly  increasing and $g_r'(\tilde{\varepsilon}_{y,1})=g_r'(\varepsilon_{y,1}+\theta_y\varepsilon_{y,2})$ is strictly decreasing with $\theta_y$. Thus, the right hand side of Equation \eqref{equ:ratio_f_r_inv_x_equals_f_r_inv_der_y} is continous and strictly  increasing with $\theta_y$. Besides, according to Equation \eqref{equ:limits_f_r_inv_e_to_0} and the definition of $\tilde{\varepsilon}_{y,1},\tilde{\varepsilon}_{y,2}$ in Equation \eqref{equ:def_tiled_exy_12} and \eqref{equ:def_tiled_exy_12}, it is easy to know 
	\begin{align}
		\lim_{\theta_y\to -\frac{\varepsilon_{y,1}}{\varepsilon_{y,2}}}\dfrac{g_r'(\tilde{\varepsilon}_{y,2})}{g_r'(\tilde{\varepsilon}_{y,1})}
		= & \lim_{\theta_y\to -\frac{\varepsilon_{y,1}}{\varepsilon_{y,2}}}\dfrac{g_r'(\varepsilon_{y,2}-\theta_y\varepsilon_{y,2})}{g_r'(\varepsilon_{y,1}+\theta_y\varepsilon_{y,2})}\nonumber\\
		= &\dfrac{g_r'(\varepsilon_{y,2}+\varepsilon_{y,1})}{g_r'(0)}
		=  \dfrac{g_r'(\varepsilon_{y,2}+\varepsilon_{y,1})}{+\infty}
		=  0\\
		\lim_{\theta_y\to 1}\dfrac{g_r'(\tilde{\varepsilon}_{y,2})}{g_r'(\tilde{\varepsilon}_{y,1})}
		= & \lim_{\theta_y\to 1}\dfrac{g_r'(\varepsilon_{y,2}-\theta_y\varepsilon_{y,2})}{g_r'(\varepsilon_{y,1}+\theta_y\varepsilon_{y,2})} \nonumber\\
		= &\dfrac{g_r'(0)}{g_r'(\varepsilon_{y,1}+\varepsilon_{y,2})}
		=\dfrac{+\infty}{g_r'(\varepsilon_{y,1}+\varepsilon_{y,2})}
		= +\infty
	\end{align}
	So the range of the right hand side of Equation \eqref{equ:ratio_f_r_inv_x_equals_f_r_inv_der_y} is $(0,+\infty)$ when $-\frac{\varepsilon_{y,1}}{\varepsilon_{y,2}}< \theta_y< 1$.
	
	Remember that we start from $0<\theta_{x,0}<1$, combining the precondition $\varepsilon_{x,1}\geq \varepsilon_{x,2}$ and the definitions of $\tilde{\varepsilon}_{x,1}[0], \tilde{\varepsilon}_{x,2}[0]$ in Equation \eqref{equ:def_tiled_exy_12} and \eqref{equ:def_tiled_exy_12}, it is easy to know that the left hand side of Equation \eqref{equ:ratio_f_r_inv_x_equals_f_r_inv_der_y} is a positive constant number. 
	Therefore, Equation \eqref{equ:ratio_f_r_inv_x_equals_f_r_inv_der_y} must has one and only one solution.
	We note such solution  as $\theta_{y,1}$.	 
	Combining with Proposition \ref{thm:proposition_S(thetax0_y)_concave}, we can know that for any fixed $0<\theta_{x,0}<1$, there exists one and only one $-\frac{\varepsilon_{y,1}}{\varepsilon_{y,2}}< \theta_{y,1}<1$ that maximize $S(\theta_{x,0},\theta_y)$.
	
	Here note that we still have no guarantee for $\theta_{y,1}>0$ currently.

	\textbf{	Substep 1.2.}
	We can investigate the property of the solution $\theta_{y,1}$ by analyzing Equation \eqref{equ:ratio_f_r_inv_x_equals_f_r_inv_der_y}, \eqref{equ:ratio_fx1fx2_left_fy1yx_right} and \eqref{equ:from_thex_posi_to_they_posi}.
	
	Firstly, we can show 
	\begin{align}
		\dfrac{g_r(\tilde{\varepsilon}_{y,1}[1])}{g_r(\tilde{\varepsilon}_{y,2}[1])}=\dfrac{g_r(\varepsilon_{y,1}+\theta_{y,1}\varepsilon_{y,2})}{g_r(\varepsilon_{y,2}-\theta_{y,1}\varepsilon_{y,2})}>1
	\end{align} by contradiction. 
	Assume to the contrary that $\frac{g_r(\tilde{\varepsilon}_{y,1}[1])}{g_r(\tilde{\varepsilon}_{y,2}[1])}\leq 1$, then $\frac{g_r(\tilde{\varepsilon}_{y,1}[1])+1}{g_r(\tilde{\varepsilon}_{y,2}[1])+1}\geq \frac{g_r(\tilde{\varepsilon}_{y,1}[1])}{g_r(\tilde{\varepsilon}_{y,2}[1])}$. Combining condition $\bm{C_1}[0]$ in Equation \eqref{equ:start_from_thetax_0_ratio_fx1x2_larger_than_1}, we can deduce 
	\begin{align}
		\dfrac{g_r(\tilde{\varepsilon}_{x,1}[0])}{g_r(\tilde{\varepsilon}_{x,2}[0])}\dfrac{g_r(\tilde{\varepsilon}_{y,1}[1])+1}{g_r(\tilde{\varepsilon}_{y,2}[1])+1}
		>\dfrac{g_r(\tilde{\varepsilon}_{y,1}[1])+1}{g_r(\tilde{\varepsilon}_{y,2}[1])+1}
		\geq \dfrac{g_r(\tilde{\varepsilon}_{y,1}[1])}{g_r(\tilde{\varepsilon}_{y,2}[1])}
	\end{align}
	This contradicts with Equation \eqref{equ:from_thex_posi_to_they_posi}. 
	Therefore,  we can obtain the following condition $\bm{C_1}[1]$.
	\begin{align}\label{equ:condition_ratio_fey1_ey2_larger_than_1_for_next_iteration}
		\bm{C_1}[1]:& \dfrac{g_r(\tilde{\varepsilon}_{y,1}[1])}{g_r(\tilde{\varepsilon}_{y,2}[1])}>1
	\end{align}
	
	Secondly, according to condition $\bm{C_1}[1]$, it is easy to know
	\begin{align}\label{equ:ratio_fey1[1]_larger_fey1[1]plus}
		\dfrac{g_r(\tilde{\varepsilon}_{y,1}[1])}{g_r(\tilde{\varepsilon}_{y,2}[1])}>\dfrac{g_r(\tilde{\varepsilon}_{y,1}[1])+1}{g_r(\tilde{\varepsilon}_{y,2}[1])+1}>1
	\end{align}
	Now combining Equation \eqref{equ:from_thex_posi_to_they_posi} and \eqref{equ:ratio_fey1[1]_larger_fey1[1]plus}, we can know the following condition\footnote{In this context, $\bm{C_2}[1]$ is stronger than $\bm{C_1}[1]$. We separate $\bm{C_1}[1]$ and $\bm{C_2}[1]$ away for clarity.} $\bm{C_2}[1]$ holds. 
	\begin{align}\label{equ:ratio_fey1_larger_fex1}
		\bm{C_2}[1]: \dfrac{g_r(\tilde{\varepsilon}_{y,1}[1])}{g_r(\tilde{\varepsilon}_{y,2}[1])}>\dfrac{g_r(\tilde{\varepsilon}_{x,1}[0])}{g_r(\tilde{\varepsilon}_{x,2}[0])}
	\end{align} 
	
	In summary, we start from  any fixed $0<\theta_{x,0}<1$ making condition $\bm{C_1}[0]$ in Equation \eqref{equ:start_from_thetax_0_ratio_fx1x2_larger_than_1} hold. Using \textbf{Step 1},  we find the only one $-\frac{\varepsilon_{y,1}}{\varepsilon_{y,2}}< \theta_{y,1}< 1$ such that $S(\theta_{x,0},\theta_y)$ takes its maximum at $\theta_{y,1}$ and conditions $\bm{C_1}[1]$ and $\bm{C_2}[1]$ hold. 

	\textbf{Step 2.} The deduction in \textbf{Step 1} can be iterated repeatedly due to the symmetry of $\theta_x$ and $\theta_y$ in $S(\theta_x, \theta_y)$. 
	For consistency, at the begining of the iterations, we can choose any $\theta_y\neq \theta_{y,1}$ as $\theta_{y,0}$. 
	
	In the following, we use notations
	\begin{align}\label{equ:def_sim_exy_12_i}
		\tilde{\varepsilon}_{x,1}[i]=\varepsilon_{x,1}+\theta_{x,i}\varepsilon_{x,2},\ 
		\tilde{\varepsilon}_{x,2}[i]=\varepsilon_{x,2}-\theta_{x,i}\varepsilon_{x,2}\nonumber\\
		\tilde{\varepsilon}_{y,1}[i]=\varepsilon_{y,1}+\theta_{y,i}\varepsilon_{y,2},\ 
		\tilde{\varepsilon}_{y,2}[i]=\varepsilon_{y,2}-\theta_{y,i}\varepsilon_{y,2}
	\end{align}
	where $\tilde{\varepsilon}_{(\cdot),k}[i]\ (k\in\{1,2\})$ is computed with $\theta_{(\cdot),i}$.
	Now we fix $\theta_{y,1}$ and make $\theta_x$ vary, then we repeat \textbf{Step 1} on $\theta_x$. Note that, in the second iteration the condition $\bm{C_1}[1]$  plays the same role as condition $\bm{C_1}[0]$ plays in the first iteration.
	Therefore, we can find a $\theta_{x,2}$ such that the following conditions hold
	\begin{align}
		&S(\theta_{x,2},\theta_{y,1})>S(\theta_{x,0},\theta_{y,1})\\
		&\dfrac{g_r(\tilde{\varepsilon}_{x,1}[2])}{g_r(\tilde{\varepsilon}_{x,2}[2])}
		=\dfrac{g_r(\tilde{\varepsilon}_{y,1}[1])}{g_r(\tilde{\varepsilon}_{y,2}[1])}\dfrac{g_r(\tilde{\varepsilon}_{x,1}[2])+1}{g_r(\tilde{\varepsilon}_{x,2}[2])+1} \nonumber\\
		&= \dfrac{g_r(\tilde{\varepsilon}_{x,1}[0])}{g_r(\tilde{\varepsilon}_{x,2}[0])}\dfrac{g_r(\tilde{\varepsilon}_{y,1}[1])+1}{g_r(\tilde{\varepsilon}_{y,2}[1])+1}\dfrac{g_r(\tilde{\varepsilon}_{x,1}[2])+1}{g_r(\tilde{\varepsilon}_{x,2}[2])+1} \label{equ:frac_fex1[2]_fex2[2]}\\
		&\bm{C_1}[2]: \dfrac{g_r(\tilde{\varepsilon}_{x,1}[2])}{g_r(\tilde{\varepsilon}_{x,2}[2])}>1\\
		&\bm{C_2}[2]:\dfrac{g_r(\tilde{\varepsilon}_{x,1}[2])}{g_r(\tilde{\varepsilon}_{x,2}[2])}>\dfrac{g_r(\tilde{\varepsilon}_{y,1}[1])}{g_r(\tilde{\varepsilon}_{y,2}[1])}
	\end{align}
	where the first equation is by Equation \eqref{equ:from_thex_posi_to_they_posi}.	
	Note that, combing conditions $\bm{C_1}[0]$, $\bm{C_1}[1]$, $\bm{C_2}[1]$, $\bm{C_1}[2]$ and Equation \eqref{equ:frac_fex1[2]_fex2[2]}, we know it is impossible that $\tilde{\varepsilon}_{x,1}[2]=\tilde{\varepsilon}_{x,1}[0]$ and $\theta_{x,2}=\theta_{x,0}$. So it is impossible $S(\theta_{x,2},\theta_{y,1})=S(\theta_{x,0},\theta_{y,1})$.
	
	We can repeat \textbf{Step 1} on $\theta_x$ and $\theta_y$ alternatively and construct a sequence $\{\theta_{x,0},\theta_{y,1},\theta_{x,2},\theta_{y,3},\dots$\} such that the following conditions hold.
	
	\noindent For $i\in \mathbb{N}$ we have 
	\begin{align}
		& S(\theta_{x,2i+2},\theta_{y,2i+1})>S(\theta_{x,2i},\theta_{y,2i+1})\\
		&\dfrac{g_r(\tilde{\varepsilon}_{x,1}[2i+2])}{g_r(\tilde{\varepsilon}_{x,2}[2i+2])}
		= \dfrac{g_r(\tilde{\varepsilon}_{y,1}[2i+1])}{g_r(\tilde{\varepsilon}_{y,2}[2i+1])}\dfrac{g_r(\tilde{\varepsilon}_{x,1}[2i+2])+1}{g_r(\tilde{\varepsilon}_{x,2}[2i+2])+1}\nonumber\\
		&= \dfrac{g_r(\tilde{\varepsilon}_{x,1}[2i])}{g_r(\tilde{\varepsilon}_{x,2}[2i])}\dfrac{g_r(\tilde{\varepsilon}_{y,1}[2i+1])+1}{g_r(\tilde{\varepsilon}_{y,2}[2i+1])+1}\dfrac{g_r(\tilde{\varepsilon}_{x,1}[2i+2])+1}{g_r(\tilde{\varepsilon}_{x,2}[2i+2])+1}\label{equ:deduction_ratio_xi2}\\
		&\bm{C_1}[2i+2]: \dfrac{g_r(\tilde{\varepsilon}_{x,1}[2i+2])}{g_r(\tilde{\varepsilon}_{x,2}[2i+2])}>1\\
		&\bm{C_2}[2i+2]:\dfrac{g_r(\tilde{\varepsilon}_{x,1}[2i+2])}{g_r(\tilde{\varepsilon}_{x,2}[2i+2])}>\dfrac{g_r(\tilde{\varepsilon}_{y,1}[2i+1])}{g_r(\tilde{\varepsilon}_{y,2}[2i+1])}\label{equ:condition_c_2_2i+2}
	\end{align}
	For $i\in \mathbb{N} \wedge i>0$ we have
	\begin{align}
		& S(\theta_{x,2i},\theta_{y,2i+1})>S(\theta_{x,2i},\theta_{y,2i-1})\\
		&\dfrac{g_r(\tilde{\varepsilon}_{y,1}[2i+1])}{g_r(\tilde{\varepsilon}_{y,2}[2i+1])}
		= \dfrac{g_r(\tilde{\varepsilon}_{x,1}[2i])}{g_r(\tilde{\varepsilon}_{x,2}[2i])}\dfrac{g_r(\tilde{\varepsilon}_{y,1}[2i+1])+1}{g_r(\tilde{\varepsilon}_{y,2}[2i+1])+1}\nonumber\\
		=& \dfrac{g_r(\tilde{\varepsilon}_{y,1}[2i-1])}{g_r(\tilde{\varepsilon}_{y,2}[2i-1])}\dfrac{g_r(\tilde{\varepsilon}_{x,1}[2i])+1}{g_r(\tilde{\varepsilon}_{x,2}[2i])+1}\dfrac{g_r(\tilde{\varepsilon}_{y,1}[2i+1])+1}{g_r(\tilde{\varepsilon}_{y,2}[2i+1])+1} \label{equ:deduction_ratio_yi2}\\
		&\bm{C_1}[2i+1]:\dfrac{g_r(\tilde{\varepsilon}_{y,1}[2i+1])}{g_r(\tilde{\varepsilon}_{y,2}[2i+1])}>1\\
		&\bm{C_2}[2i+1]:\dfrac{g_r(\tilde{\varepsilon}_{y,1}[2i+1])}{g_r(\tilde{\varepsilon}_{y,2}[2i+1])}>\dfrac{g_r(\tilde{\varepsilon}_{x,1}[2i])}{g_r(\tilde{\varepsilon}_{x,2}[2i])}\label{equ:condition_c_2_2i+1}
	\end{align}
	Now combining conditions $\bm{C_2}[2i+2]$, $\bm{C_2}[2i+1]$ and Equations \eqref{equ:def_sim_exy_12_i}, we can also obtain 
	\begin{align}
		\theta_{x,2i+2}>\theta_{x,2i},\ 
		\theta_{y,2i+3}>\theta_{y,2i+1}\ (i\in\mathbb{N})
	\end{align}		
	Up to now, we have constructed the following strictly  increasing sequences
	\begin{align}
		\Theta_x[i]&=\theta_{x,2i}\label{equ:Thetaxi}\\
		\Theta_y[i]&=\theta_{y,2i+1}\label{equ:Thetayi}\\
		R_x[i]&=\dfrac{g_r(\tilde{\varepsilon}_{x,1}[2i])}{g_r(\tilde{\varepsilon}_{x,2}[2i])}\\
		R_y[i]&=\dfrac{g_r(\tilde{\varepsilon}_{y,1}[2i+1])}{g_r(\tilde{\varepsilon}_{y,2}[2i+1])}
	\end{align}
	for $i\in \mathbb{N}$. According to conditions $\bm{C_1}[0],\bm{C_1}[1],\cdots$, it is easy to know $R_x[i]>1, R_y[i]>1$ for $i\in \mathbb{N}$. 
	Besides, we note 
	\begin{align}
		R_x^{+}[i]=\dfrac{g_r(\tilde{\varepsilon}_{x,1}[2i])+1}{g_r(\tilde{\varepsilon}_{x,2}[2i])+1}, \ 
		R_y^{+}[i]=\dfrac{g_r(\tilde{\varepsilon}_{y,1}[2i+1])+1}{g_r(\tilde{\varepsilon}_{y,2}[2i+1])+1}
	\end{align}
	for $i\in \mathbb{N}$.
	According to Lemma \ref{thm:lemma_a1_div_b1}, it is easy to know both $R_x^{+}[i], R_y^{+}[i]$ are strictly   increasing.
	Importantly, we have constructed the following strictly   increasing sequence for $i\in \mathbb{N}$.
	\begin{align}\label{equ:def_Si_seris}
		S[i]=\left\{
		\begin{aligned}
			S(\theta_{x,0},\theta_{y,0}) &,\ i=0 \\
			S(\theta_{x,i-1},\theta_{y,i}) &,\ i\%2=1\\
			S(\theta_{x,i},\theta_{y,i-1}) &,\ i\%2=0\wedge i>0
		\end{aligned}
		\right.
	\end{align}
	
	This accomplishes aspect \ref{itm:A3}.
	
	Here we note that $\bm{C_1}[i] (i\geq 0)$ plays an important role in each iteration. When $\bm{C_1}[i]$ holds, we let the derivative of $S$ equal to 0. Then we get the maximum of $S$ and make $\bm{C_2}[i] (i\geq 1)$ hold in each iteration. Importantly, $\bm{C_2}[i]$ guarantees  $R_x[i]$ and $R_y[i]$ are strictly increasing.
	
	In the following, we prove   
	\begin{align}
		\lim_{i \to +\infty}R_x[i]=+\infty,\ 
		\lim_{i \to +\infty}R_y[i]=+\infty
	\end{align}
	in order to accomplish aspect \ref{itm:A4} finally.
	Now let's observe how $R_x[i]$ increases. 
	Using the notations of $R_x[i]$ and $R_x^{+}[i]$, we rewrite Equation \eqref{equ:deduction_ratio_xi2} and get the following relation 
	\begin{align}
		R_x[i+1]&=R_x[i]R_y^{+}[i]R_x^{+}[i+1]
	\end{align}
	This indicates that 
	\begin{align}
		R_x[i+1]-R_x[i]&=R_x[i](R_y^{+}[i]R_x^{+}[i+1]-1) \label{equ:difference_Ri}
	\end{align}
	Here $R_x[i], R_y^{+}[i],R_x^{+}[i]$ are all strictly  increasing and larger than $1$. The relation in Equation \eqref{equ:difference_Ri} 
	indicates that the difference between neighbouring elements of $\{R_x[i]\}$ is strictly  increasing. This violates the  Cauchy's criterion for convergence. Thus, we can conclude  $\lim\limits_{i\to+\infty}R_x[i]=+\infty$.
	Similarly, we can also conclude $\lim\limits_{i\to+\infty}R_y[i]=+\infty$. 
	
	Now from $\Theta_x[i]<1$, we can know
	\begin{align}
		R_x[i]= & \dfrac{g_r(\tilde{\varepsilon}_{x,1}[2i])}{g_r(\tilde{\varepsilon}_{x,2}[2i])}
		=\dfrac{w_2(\varepsilon_{x,1}+\theta_{x,2i}\varepsilon_{x,2})-1}{w_2(\varepsilon_{x,2}-\theta_{x,2i}\varepsilon_{x,2})-1}\nonumber\\
		< & \dfrac{w_2(\varepsilon_{x,1}+\varepsilon_{x,2})-1}{w_2(\varepsilon_{x,2}-\theta_{x,2i}\varepsilon_{x,2})-1} \label{equ:upper_bound_ration_fex}
	\end{align}
	The numerator of the rightmost item of Equation \eqref{equ:upper_bound_ration_fex} is a constant. From  
	$\lim\limits_{i \to +\infty}R_x[i]=+\infty$, we can conclude  $\lim\limits_{i\to +\infty}w_2(\varepsilon_{x,2}-\theta_{x,2i}\varepsilon_{x,2})-1= 0$ and  $\lim\limits_{i\to +\infty}\varepsilon_{x,2}-\theta_{x,2i}\varepsilon_{x,2}=0$. Thus, we obtain
	\begin{align}\label{limit_theta_xi}
		\lim_{i\to +\infty}\Theta_x[i]=\lim_{i\to +\infty}\theta_{x,2i}=1
	\end{align}
	Similarly, we can also obtain 
	\begin{align}\label{limit_theta_yi}
		\lim_{i\to +\infty}\Theta_y[i]=\lim_{i\to +\infty}\theta_{y,2i+1}=1
	\end{align}
	Now combining Equations \eqref{equ:def_S}, \eqref{equ:Thetaxi}, \eqref{equ:Thetayi}, \eqref{equ:def_Si_seris}, \eqref{limit_theta_xi} and \eqref{limit_theta_yi}, we can know 
	\begin{align}
		&\lim_{i\to +\infty}S[i] \nonumber\\
		= & S(1,1)
		=f(w_2(\varepsilon_{x,1}+\varepsilon_{x,2})w_2(\varepsilon_{y,1}+\varepsilon_{y,2}))\nonumber\\
		&+f(w_2(\varepsilon_{x,2}-\varepsilon_{x,2})w_2(\varepsilon_{y,2}-\varepsilon_{y,2}))\nonumber\\
		=&f(w_2(\varepsilon_{x,1}+\varepsilon_{x,2})w_2(\varepsilon_{y,1}+\varepsilon_{y,2}))+1
	\end{align}
	Since $S[i]$ is strictly  increasing, we can conclude
	\begin{align}
		S(\theta_{x,0},\theta_{y,0})<f(w_2(\varepsilon_{x,1}+\varepsilon_{x,2})w_2(\varepsilon_{y,1}+\varepsilon_{y,2}))+1 \label{equ:extrem_distribute_exy_larger_than_anyneighbor}
	\end{align} 
	Remember that, we can take any $\theta_{x,0}>0$ and any $\theta_{y,0}\neq \theta_{y,1}$ as the start point of above iterations.
	This means that  Equation \eqref{equ:extrem_distribute_exy_larger_than_anyneighbor} holds for any point $(\theta_{x,0},\theta_{y,0})$ satisfying $\theta_{x,0}>0$ in the  neighborhood of $(0,0)$ on the $\theta_x\theta_y$ plane.
	
	Now we can show
	\begin{align}
		S(0,0)\leq f(w_2(\varepsilon_{x,1}+\varepsilon_{x,2})w_2(\varepsilon_{y,1}+\varepsilon_{y,2}))+1=S(1,1)
	\end{align}
	by contradiction.
	Assume to the contrary that $S(0,0)>f(w_2(\varepsilon_{x,1}+\varepsilon_{x,2})w_2(\varepsilon_{y,1}+\varepsilon_{y,2}))+1$, due to continuity of $S(\theta_x, \theta_y)$, we can find a neighbour $(\theta_x', \theta_y')$ ($\theta_x'>0$) of $(0,0)$ such that $S(\theta_x', \theta_y')>f(w_2(\varepsilon_{x,1}+\varepsilon_{x,2})w_2(\varepsilon_{y,1}+\varepsilon_{y,2}))+1$. This contradicts with Inequlity \eqref{equ:extrem_distribute_exy_larger_than_anyneighbor}.
	
	\noindent\ref{itm:case3}:
	
	Finally, we can discuss \ref{itm:case3} when one of $\varepsilon_{x,2}$ and $\varepsilon_{y,2}$ equals 0.
	Without loss of generality, we suppose that $\varepsilon_{y,2}=0$. 
	We can discuss this in two subcases.
	\begin{itemize}[-]
		\item \textbf{Subcase 3.1}: $\varepsilon_{y,1}=\varepsilon_{y,2}=0$. 
		By Lemma \ref{thm:solution_f_x} and Equation \eqref{equ:def_S},
		it is easy to know 
		\begin{align}
			S(1,1) =&f(w_2(\varepsilon_{x,1}+\varepsilon_{x,2})w_2(0))+f(w_2(0)w_2(0))\nonumber\\
			= & 1+\varepsilon_{x,1}+\varepsilon_{x,2}+1\nonumber\\
			S(0,0) = & f(w_2(\varepsilon_{x,1})w_2(0))+f(w_2(\varepsilon_{x,2})w_2(0))\nonumber\\
			=& 1+\varepsilon_{x,1}+1+\varepsilon_{x,2}\nonumber
		\end{align}
		This satisfies $S(0,0)\leq S(1,1)$.
		\item \textbf{Subcase 3.2}: $\varepsilon_{y,1}>\varepsilon_{y,2}=0$. 
		
		Here we treat $S(\theta_x, \theta_y)$ as a function of three variables $\theta_x, \theta_y, \varepsilon_{y,2}$ as follows.
		\begin{align}
			&S(\theta_x, \theta_y, \varepsilon_{y,2})\nonumber\\
			=&f(w_2(\varepsilon_{x,1}+\theta_x\varepsilon_{x,2})w_2(\varepsilon_{y,1}+\theta_y\varepsilon_{y,2}))\nonumber\\
			&+f(w_2(\varepsilon_{x,2}-\theta_x\varepsilon_{x,2})w_2(\varepsilon_{y,2}-\theta_y\varepsilon_{y,2}))
		\end{align}
		It is easy to know  $S(\theta_x, \theta_y, \varepsilon_{y,2})$ is continuous.
		Note that, we have proven $S(0,0,\varepsilon_{y,2}) \leq S(1,1,\varepsilon_{y,2})$ for any $\varepsilon_{y,2}>0$ in \ref{itm:case2}.
		Therefore, we have
		\begin{align}
			S(0, 0, 0)=\lim\limits_{\varepsilon_{y,2}\to 0}S(0, 0, \varepsilon_{y,2}) 
			\leq \lim\limits_{\varepsilon_{y,2}\to 0}S(1, 1, \varepsilon_{y,2})
			=  S(1,1,0)
		\end{align}
		so $S(0,0)\leq S(1,1)$ for $\varepsilon_{y,1}>\varepsilon_{y,2}=0$. 
	\end{itemize}	
	This concludes the proof of Lemma \ref{thm:kl_triangle_fw2w2_extreme_distribute_e_max}.
	
	$\hfill\square$ 
\end{proof}

Up to now, we have resolved the 2-dimensional case of the core of the proof. To extend to high-dimensional case, we need the following Lemma \ref{thm:trace_AB_supremum}.

\begin{lemma}\label{thm:trace_AB_supremum}
	(See \cite{lasserre1995trace}) For any two Hermition positive semidefinite $n\times n$-matrices $A,B$
	\begin{align}
		\mathop{\mathrm{Tr}}(AB)\leq \sum_{i=1}^{n}\lambda_{A,[i]}\lambda_{B,[i]}
	\end{align}
	where $\lambda_{A,[i]},\lambda_{B,[i]}$ are the eigenvalues of $A, B$ arranged in decreasing order, respectively.
\end{lemma}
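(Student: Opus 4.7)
The plan is to reduce $\mathop{\mathrm{Tr}}(AB)$ to a bilinear form in the eigenvalues weighted by a doubly stochastic matrix, and then invoke Birkhoff's theorem together with the rearrangement inequality. Concretely, since $A$ and $B$ are Hermitian positive semidefinite, the spectral theorem yields $A = U\Lambda_A U^*$ and $B = V\Lambda_B V^*$ with $U, V$ unitary and $\Lambda_A, \Lambda_B$ diagonal matrices whose entries are $\lambda_{A,[i]}$ and $\lambda_{B,[i]}$ arranged in decreasing order. Setting $W = U^*V$ (which is unitary) and using the cyclic property of the trace, I would rewrite $\mathop{\mathrm{Tr}}(AB) = \mathop{\mathrm{Tr}}(\Lambda_A W \Lambda_B W^*) = \sum_{i,j}\lambda_{A,[i]}\lambda_{B,[j]}|w_{ij}|^2$.

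Next I would observe that the matrix $P$ with entries $p_{ij} = |w_{ij}|^2$ is doubly stochastic, since each row and each column of $W$ has squared moduli summing to $1$ by unitarity. Therefore $\mathop{\mathrm{Tr}}(AB) = \sum_{i,j} p_{ij}\lambda_{A,[i]}\lambda_{B,[j]}$, and the problem reduces to maximizing the linear functional $P \mapsto \sum_{i,j} p_{ij}\lambda_{A,[i]}\lambda_{B,[j]}$ over the convex set of doubly stochastic $n\times n$ matrices. By Birkhoff's theorem, this set is the convex hull of the permutation matrices, so the maximum is attained at some permutation matrix $P_\pi$ with entries $(P_\pi)_{ij} = \delta_{j,\pi(i)}$, yielding the upper bound $\sum_i \lambda_{A,[i]}\lambda_{B,[\pi(i)]}$.

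The proof is then closed by the rearrangement inequality: since both sequences $(\lambda_{A,[i]})$ and $(\lambda_{B,[i]})$ are in decreasing order (and non-negative), the sum $\sum_i \lambda_{A,[i]}\lambda_{B,[\pi(i)]}$ is maximized over all permutations $\pi$ by the identity permutation, giving $\sum_i \lambda_{A,[i]}\lambda_{B,[i]}$. Chaining the two bounds produces exactly the inequality claimed in the lemma. The only delicate step is the reduction to doubly stochastic matrices via the identity $p_{ij} = |w_{ij}|^2$ and the appeal to Birkhoff; both the spectral decomposition at the start and the rearrangement inequality at the end are routine. An alternative, more hands-on route that avoids Birkhoff is to prove the bound $\sum_{i,j} p_{ij} a_i b_j \leq \sum_i a_i b_i$ for decreasing non-negative sequences and doubly stochastic $P$ directly by Abel summation (a weak majorization argument), but the Birkhoff-plus-rearrangement path is the most transparent.
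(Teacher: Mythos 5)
Your proof is correct. Note, however, that the paper does not prove this lemma at all: it is stated as a known result and delegated entirely to the citation of Lasserre, so there is no in-paper argument to compare against. What you have supplied is the standard self-contained derivation of the von Neumann--type trace inequality for Hermitian matrices: the identity $\mathop{\mathrm{Tr}}(AB)=\sum_{i,j}\lambda_{A,[i]}\lambda_{B,[j]}\vert w_{ij}\vert^2$ with $W=U^*V$ is right, the matrix $(\vert w_{ij}\vert^2)$ is indeed doubly stochastic by unitarity, a linear functional on the Birkhoff polytope attains its maximum at a permutation matrix, and the rearrangement inequality then sorts the permutation into the identity. Two small observations: first, your argument never actually uses positive semidefiniteness --- the rearrangement inequality and the Birkhoff maximization hold for arbitrary real eigenvalue sequences, so you have in fact proved the stronger statement for all Hermitian $A,B$, of which the lemma is the PSD special case; second, as you say, the Birkhoff step can be replaced by an Abel-summation/weak-majorization argument if one wants to avoid quoting the extreme-point structure of doubly stochastic matrices, but either route is sound. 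In short, your proposal fills in a proof the paper only cites, and does so correctly.
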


Now we present our theorem on the relaxed triangle inequality of KL divergences between Gaussians.
Firstly, we deal with the case when one of the Gaussians is standard Gaussian. Then, we generalize the conclusion to general case.
\begin{lemma}\label{thm:triangle_n1_n2_standard}
	For any two $n$-dimensional Gaussian distributionss $\mathcal{N}(\bm{\mu}_1,\bm{\Sigma}_1)$ and $\mathcal{N}(\bm{\mu}_2,\bm{\Sigma}_2)$
	such that $KL(\mathcal{N}(\bm{\mu}_1,\bm{\Sigma}_1)||\mathcal{N}(0,I))\leq \varepsilon_1$, $KL(\mathcal{N}(0,I)||\mathcal{N}(\bm{\mu}_2,\bm{\Sigma}_2))\leq \varepsilon_2\ (\varepsilon_1,\varepsilon_2\ge 0)$ , then 
	\begin{align}
		&KL((\mathcal{N}(\bm{\mu}_1,\bm{\Sigma}_1)||\mathcal{N}(\bm{\mu}_2,\bm{\Sigma}_2))\\
		<&  \varepsilon_1 + \varepsilon_2 + \dfrac{1}{2}\left(W_{-1}(-e^{-(1+2\varepsilon_1)})W_{-1}(-e^{-(1+2\varepsilon_2)})\vphantom{\left(\sqrt{2\varepsilon_1}+\sqrt{\dfrac{2\varepsilon_2}{-W_{0}(-e^{-(1+2\varepsilon_2)})}}\right)^2}\right. \nonumber\\
		&+W_{-1}(-e^{-(1+2\varepsilon_1)})   +W_{-1}(-e^{-(1+2\varepsilon_2)})+1 \nonumber\\
		&\left. -W_{-1}(-e^{-(1+2\varepsilon_2)})\left(\sqrt{2\varepsilon_1}+\sqrt{\dfrac{2\varepsilon_2}{-W_{0}(-e^{-(1+2\varepsilon_2)})}}\right)^2\right)\nonumber
	\end{align}
\end{lemma}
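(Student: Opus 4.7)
The plan is to expand both hypotheses and the target KL divergence via Equation \eqref{equ:KL_two_Gaussian}, separate the covariance (log-determinant + trace) contribution from the mean (Mahalanobis) contribution, and bound each piece using the machinery already developed. Writing
\[
2\,KL(\mathcal{N}(\bm{\mu}_1,\bm{\Sigma}_1)\|\mathcal{N}(\bm{\mu}_2,\bm{\Sigma}_2)) = -\log|\bm{\Sigma}_2^{-1}\bm{\Sigma}_1|+\mathrm{Tr}(\bm{\Sigma}_2^{-1}\bm{\Sigma}_1)+(\bm{\mu}_1-\bm{\mu}_2)^\top\bm{\Sigma}_2^{-1}(\bm{\mu}_1-\bm{\mu}_2)-n,
\]
and letting $\lambda_{1,[i]}$, $\lambda'_{2,[i]}$ denote the eigenvalues of $\bm{\Sigma}_1$ and $\bm{\Sigma}_2^{-1}$ in decreasing order, Lemma \ref{thm:trace_AB_supremum} together with the basis-invariance of the log-determinant collapses the covariance part to $\sum_{i=1}^{n} f(\lambda_{1,[i]}\lambda'_{2,[i]})$.

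For the covariance part, I would translate the two hypotheses (dropping the non-negative mean contributions, giving a valid upper bound) into $f(\lambda_{1,[i]})\le 1+\varepsilon_{1,[i]}$ and $f(\lambda'_{2,[i]})\le 1+\varepsilon_{2,[i]}$ with $\sum_i\varepsilon_{1,[i]}\le 2\varepsilon_1$ and $\sum_i\varepsilon_{2,[i]}\le 2\varepsilon_2$. Lemma \ref{thm:sup_fxy} lifts each summand to $f(w_2(\varepsilon_{1,[i]})w_2(\varepsilon_{2,[i]}))$, and iterated application of Lemma \ref{thm:kl_triangle_fw2w2_extreme_distribute_e_max} merges the $\varepsilon$-budgets one pair at a time: after $n-1$ merges, since $f(w_2(0)w_2(0))=1$ appears as the additive constant in \eqref{equ:f_w2x_w2y_max_target}, I would arrive at $\sum_i f(w_2(\varepsilon_{1,[i]})w_2(\varepsilon_{2,[i]}))\le f(w_2(2\varepsilon_1)w_2(2\varepsilon_2))+(n-1)$. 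Setting $A=-W_{-1}(-e^{-(1+2\varepsilon_1)})$, $B=-W_{-1}(-e^{-(1+2\varepsilon_2)})$ and using $w_2(t)-\log w_2(t)=1+t$ from Lemma \ref{thm:solution_f_x}, this expands to $AB-A-B+2+2\varepsilon_1+2\varepsilon_2+(n-1)$.

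For the Mahalanobis part, I would use the Euclidean triangle inequality $\|\bm{\mu}_1-\bm{\mu}_2\|\le\|\bm{\mu}_1\|+\|\bm{\mu}_2\|$ and convert to the $\bm{\Sigma}_2^{-1}$-norm via $\lambda'^*_2\le w_2(2\varepsilon_2)=B$, extracted from the eigenvalue sandwich $w_1(2\varepsilon_2)\le\lambda'_{2,i}\le w_2(2\varepsilon_2)$ of Lemma \ref{prp:sup_f_x_invers_1_d}. The bound $\|\bm{\mu}_1\|^2\le 2\varepsilon_1$ is immediate because the identity covariance in the first hypothesis makes $\bm{\mu}_1^\top\bm{\mu}_1$ appear directly; the bound $\|\bm{\mu}_2\|^2\le 2\varepsilon_2/(-W_0(-e^{-(1+2\varepsilon_2)}))$ follows from $\bm{\mu}_2^\top\bm{\Sigma}_2^{-1}\bm{\mu}_2\le 2\varepsilon_2$ combined with the lower eigenvalue bound $\lambda'_{2,i}\ge w_1(2\varepsilon_2)=-W_0(-e^{-(1+2\varepsilon_2)})$ from the same sandwich. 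Combining gives
\[
(\bm{\mu}_1-\bm{\mu}_2)^\top\bm{\Sigma}_2^{-1}(\bm{\mu}_1-\bm{\mu}_2)\le B\Bigl(\sqrt{2\varepsilon_1}+\sqrt{2\varepsilon_2/(-W_0(-e^{-(1+2\varepsilon_2)}))}\Bigr)^2.
\]

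Adding the two bounds, the dimensional pieces $+(n-1)$ and $-n$ cancel to $-1$; halving then yields the claimed expression once one rewrites $-A,-B$ as $W_{-1}(-e^{-(1+2\varepsilon_1)}),W_{-1}(-e^{-(1+2\varepsilon_2)})$ respectively and factors off $\varepsilon_1+\varepsilon_2$ from $\tfrac{1}{2}(2\varepsilon_1+2\varepsilon_2)$. The hardest step is the iterated merging in the covariance part: although Lemma \ref{thm:kl_triangle_fw2w2_extreme_distribute_e_max} delivers only a single merge, one must align the two sequences $\{\varepsilon_{1,[i]}\}$ and $\{\varepsilon_{2,[i]}\}$ consistently so that the hypothesis $\varepsilon_{x,1}\ge\varepsilon_{x,2},\varepsilon_{y,1}\ge\varepsilon_{y,2}$ is satisfied at every merge step, and track that the strict "$<$" in \eqref{equ:f_w2x_w2y_max_target} is inherited whenever at least one $\varepsilon_{(\cdot),[i]}$ is strictly positive, producing the strict inequality in the conclusion.
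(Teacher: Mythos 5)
Your proposal is correct and follows essentially the same route as the paper's proof: bound the covariance term via Lemma \ref{thm:trace_AB_supremum} and Lemma \ref{thm:sup_fxy}, merge the per-coordinate budgets with iterated use of Lemma \ref{thm:kl_triangle_fw2w2_extreme_distribute_e_max} (sorted so the accumulated budget always dominates the next summand), and bound the Mahalanobis term with the eigenvalue sandwich from Lemma \ref{prp:sup_f_x_invers_1_d} plus the norm triangle inequality, including the same relaxation that grants both the covariance and mean constraints the full $2\varepsilon_1$ (resp.\ $2\varepsilon_2$) budget. The resulting expression matches the paper's bound exactly.
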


\begin{proof}
	In the proofs of Lemma \ref{thm:lemma_sup_fx_invs_nd} (and Lemma \ref{thm:lemma_inf_fx_inv_nd} in Appendix), we construct equivalent optimization problems by introducing new variables in the constraints. Unfortunately, in the proof of Lemma \ref{thm:triangle_n1_n2_standard}, we cannot use the same step. Otherwise, the bound would be too complicated to resolve. To obtain a bound independent of the dimension $n$, we need to relax the constraint in the beginning.

	Our aim is to find an upper bound of $KL((\mathcal{N}(\bm{\mu}_1,\bm{\Sigma}_1)||\mathcal{N}(\bm{\mu}_2,\bm{\Sigma}_2))$ under the constraints $KL(\mathcal{N}(\bm{\mu}_1,\bm{\Sigma}_1)||\mathcal{N}(0,I))\leq \varepsilon_1$, $KL(\mathcal{N}(0,I)||\mathcal{N}(\bm{\mu}_2,\bm{\Sigma}_2))\leq \varepsilon_2$.
	In the following, we first relax the constraints and then find an upper bound under the relaxed constraints.
	
	According to the definition of KL divergence, we have
	\begin{equation}\label{equ:def_kl_n1_n2}
		\begin{aligned}\nonumber
			&KL((\mathcal{N}(\bm{\mu}_1,\bm{\Sigma}_1)||\mathcal{N}(\bm{\mu}_2,\bm{\Sigma}_2))\\
			=&\dfrac{1}{2}\left(\log \dfrac{|\bm{\bm{\Sigma}_2}|}{|\bm{\Sigma}_1|}+\mathop{\mathrm{Tr}}(\bm{\bm{\Sigma}_2}^{-1}\bm{\Sigma}_1)+(\bm{\mu}_2-\bm{\mu}_1)^\top\bm{\Sigma}_2^{-1}(\bm{\mu}_2-\bm{\mu}_1)-n\right) 
		\end{aligned} 
	\end{equation}
	
	In the following two steps, we first find an upper bound for the first two items, then we find an upper bound for the rest items.
	
	\textbf{Step 1}. According to Lemma \ref{thm:trace_AB_supremum}, we have
	\begin{align}
		&\log \dfrac{|\bm{\bm{\Sigma}_2}|}{|\bm{\Sigma}_1|}+\mathop{\mathrm{Tr}}(\bm{\bm{\Sigma}_2}^{-1}\bm{\Sigma}_1)\nonumber\\
		=& \mathop{\mathrm{Tr}}(\bm{\bm{\Sigma}_2}^{-1}\bm{\Sigma}_1) - \log \dfrac{|\bm{\bm{\Sigma}_1}|}{|\bm{\Sigma}_2|}\nonumber\\
		 =& \mathop{\mathrm{Tr}}(\bm{\bm{\Sigma}_2}^{-1}\bm{\Sigma}_1) - \log (|\bm{\bm{\Sigma}_2}^{-1}||\bm{\bm{\Sigma}_1}|)\nonumber\\
		= & \mathop{\mathrm{Tr}}(\bm{\bm{\Sigma}_2}^{-1}\bm{\Sigma}_1)- \log \prod_{i=1}^{n}\lambda_{1,i}\lambda{'}_{2,i}\nonumber\\
		 \leq &\sum_{i=1}^{n}\lambda_{1,[i]}\lambda{'}_{2,[i]}- \log \prod_{i=1}^{n}\lambda_{1,[i]}\lambda{'}_{2,[i]}\nonumber\\
		 =& \sum_{i=1}^{n}\lambda_{1,[i]}\lambda{'}_{2,[i]}- \log \lambda_{1,[i]}\lambda{'}_{2,[i]} \label{equ:left_item_kl_n1_n2_bound_to_core_function}
	\end{align}
	where $\lambda_{1,i}, \lambda{'}_{2,i}$ are the eigenvalues of $\bm{\Sigma}_1, \bm{\Sigma}^{-1}_2$ arranged in decreasing order, respectively. In the following, we find an upper bound for Equation \eqref{equ:left_item_kl_n1_n2_bound_to_core_function}.

	By the definition of KL divergence, the constraint $KL(\mathcal{N}(\bm{\mu}_1,\bm{\Sigma}_1)||\mathcal{N}(0,I))\leq \varepsilon_1$ is equal to 
	\begin{align}\label{equ:precondition_kl_n1_standard}
		-\log|\bm{\Sigma}_1|+\mathop{\mathrm{Tr}}(\bm{\Sigma}_1)+\bm{\mu}_1^{\top}\bm{\mu}_1-n\leq 2\varepsilon_1
	\end{align}
	Combining Lemma \ref{equ:f_convex_minimum}, Equation \eqref{equ:det_to_multi} and \eqref{equ:tr_sigma}, 
	we relax the constraint in Inquality \eqref{equ:precondition_kl_n1_standard} as follows.
	\begin{align}
		-\log|\bm{\Sigma}_1|+\mathop{\mathrm{Tr}}(\bm{\Sigma}_1)=\sum_{i=1}^n\lambda_{1,i}-\log \lambda_{1,i}\leq n+2\varepsilon_1\label{equ:bound_kl_1_part1_all_d}\\
		\bm{\mu}_1^{\top}\bm{\mu}_1\leq 2\varepsilon_1 \label{equ:bound_kl_1_part2}
	\end{align}
	where $\lambda_{1,i}$ are the eigenvalues of $\bm{\Sigma}_1$. 
	For simplicity, we modify the constraint in Inquality \eqref{equ:bound_kl_1_part1_all_d} to the following constraint.
	\begin{align}
		-\log|\bm{\Sigma}_1|+\mathop{\mathrm{Tr}}(\bm{\Sigma}_1)=\sum_{i=1}^n\lambda_{1,i}-\log \lambda_{1,i}= n+2\varepsilon_1\label{equ:bound_kl_1_part1_all_d_modify}
	\end{align}
	In the following, we find the upper bound for Equation \eqref{equ:left_item_kl_n1_n2_bound_to_core_function} under constraints \eqref{equ:bound_kl_1_part1_all_d_modify}. Then we will see that the upper bound is increasing with $\varepsilon_1$. So there is no difference between constraints \eqref{equ:bound_kl_1_part1_all_d}  and  \eqref{equ:bound_kl_1_part1_all_d_modify}.
	
	Form the perspective of optimization, 
	the constraint in Inequality \eqref{equ:bound_kl_1_part1_all_d_modify} can be replaced by the following constraints
	\begin{align}
		\lambda_{1,i}-\log \lambda_{1,i}= 1+\varepsilon_{1,i}\ (1\leq i\leq n) \label{equ:bound_each_lambda_1_d}\\
		\bigwedge\limits_{i=1}^n\varepsilon_{1,i}\geq 0\wedge \sum\limits_{i=1}^{n}\varepsilon_{1,i}=2\varepsilon_1 \label{equ:bound_each_e1i}
	\end{align}	
	Similarly, the constraint $KL(\mathcal{N}(0,I)||\mathcal{N}(\bm{\mu}_2,\bm{\Sigma}_2))\leq \varepsilon_2$ is equal to 
	\begin{align}
		\log|\bm{\Sigma}_2|+\mathop{\mathrm{Tr}}(\bm{\Sigma}^{-1}_2)+\bm{\mu}_2^{\top}\bm{\Sigma}_2^{-1}\bm{\mu}_2-n\leq 2\varepsilon_2
	\end{align}
	which implies the following constraints
	\begin{align}
		\log|\bm{\Sigma}_2|+\mathop{\mathrm{Tr}}(\bm{\Sigma}^{-1}_2)=\sum_{i=1}^n\lambda{'}_{2,i}-\log \lambda{'}_{2,i}\leq n+2\varepsilon_2 \label{equ:bound_kl_2_part1_all_d}\\
		\bm{\mu}_2^{\top}\bm{\Sigma}_2^{-1}\bm{\mu}_2\leq 2\varepsilon_2 \label{equ:bound_kl_2_part2}
	\end{align}
	where $\lambda{'}_{2,i}$ are the eigenvalues of $\bm{\Sigma}_2^{-1}$.
	We also modify the constraint in Inequality \eqref{equ:bound_kl_2_part1_all_d} to the following constraint which does not affect the upper bound.
	\begin{align}
		\log|\bm{\Sigma}_2|+\mathop{\mathrm{Tr}}(\bm{\Sigma}^{-1}_2)=\sum_{i=1}^n\lambda{'}_{2,i}-\log \lambda{'}_{2,i}= n+2\varepsilon_2 \label{equ:bound_kl_2_part1_all_d_modify}
	\end{align}
	Furthermore, constraint \eqref{equ:bound_kl_2_part1_all_d_modify} can be replaced by the following constraints.
	\begin{align}
		\lambda{'}_{2,i}-\log \lambda{'}_{2,i}= 1+\varepsilon_{2,i}\ (1\leq i\leq n) \label{equ:bound_each_lambda_2_d}\\
		\bigwedge\limits_{i=1}^n\varepsilon_{2,i}\geq 0\wedge\sum\limits_{i=1}^{n}\varepsilon_{2,i}=2\varepsilon_2\label{equ:bound_each_e2i} 
	\end{align}
	In the following, we find an upper bound of Equation \eqref{equ:left_item_kl_n1_n2_bound_to_core_function} under constraints 
	\eqref{equ:bound_each_lambda_1_d},
	\eqref{equ:bound_each_e1i},
	\eqref{equ:bound_each_lambda_2_d}, and
	\eqref{equ:bound_each_e2i}. 
	
	
	Applying Lemma \ref{thm:sup_fxy} to Equation \eqref{equ:left_item_kl_n1_n2_bound_to_core_function}  with conditions \eqref{equ:bound_each_lambda_1_d} and \eqref{equ:bound_each_lambda_2_d}, we can obtain 
	\begin{align}\label{equ:bound_lambda_prod_to_fw2_w2}
		\sum_{i=1}^{n}\lambda_{1,[i]}\lambda{'}_{2,[i]}- \log \lambda_{1,[i]}\lambda{'}_{2,[i]} \leq \sum_{i=1}^{n}f(w_2(\varepsilon_{1,[i]})w_2(\varepsilon_{2,[i]}))
	\end{align}
	where $\varepsilon_{1,[i]}$ and $\varepsilon_{2,[i]}$ are also arranged in decreasing order.
	
	%
	Now we apply Lemma \ref{thm:kl_triangle_fw2w2_extreme_distribute_e_max} to the right hand side of Inequality \eqref{equ:bound_lambda_prod_to_fw2_w2} repeatedly on the first two dimensions as follows. Here we use notations 
	$E_{1,k}=\sum_{i=1}^{k}\varepsilon_{1,[i]}, E_{2,k}=\sum_{i=1}^{k}\varepsilon_{2,[i]}$ for brevity. 
	
	\begin{DispWithArrows}
		& \log \dfrac{|\bm{\bm{\Sigma}_2}|}{|\bm{\Sigma}_1|}+\mathop{\mathrm{Tr}}(\bm{\bm{\Sigma}_2}^{-1}\bm{\Sigma}_1) \Arrow{by \eqref{equ:left_item_kl_n1_n2_bound_to_core_function}}\nonumber\\
		\leq & \sum_{i=1}^{n}\lambda_{1,[i]}\lambda{'}_{2,[i]}- \log \lambda_{1,[i]}\lambda{'}_{2,[i]} \Arrow{by \eqref{equ:bound_lambda_prod_to_fw2_w2}}\nonumber\\
		\leq & \sum_{i=1}^{n}f(w_2(\varepsilon_{1,[i]})w_2(\varepsilon_{2,[i]}))\nonumber\\
		= & f(w_2(\varepsilon_{1,[1]})w_2(\varepsilon_{2,[1]})) + f(w_2(\varepsilon_{1,[2]})w_2(\varepsilon_{2,[2]})) \nonumber\\
		&+ \sum_{i=3}^{n}f(w_2(\varepsilon_{1,[i]})w_2(\varepsilon_{2,[i]}))\Arrow{Lemma \ref{thm:kl_triangle_fw2w2_extreme_distribute_e_max}}\nonumber\\
		\leq & f(w_2(\varepsilon_{1,[1]}+\varepsilon_{1,[2]})w_2(\varepsilon_{2,[1]}+\varepsilon_{2,[2]}))+1\nonumber\\
		&+ \sum_{i=3}^{n}f(w_2(\varepsilon_{1,[i]})w_2(\varepsilon_{2,[i]}))\nonumber\\
		= & f(w_2(E_{1,2})w_2(E_{2,2})) +f(w_2(\varepsilon_{1,[3]})w_2(\varepsilon_{2,[3]}))\nonumber\\
		&+ \sum_{i=4}^{n}f(w_2(\varepsilon_{1,[i]})w_2(\varepsilon_{2,[i]}))+1\Arrow{Lemma \ref{thm:kl_triangle_fw2w2_extreme_distribute_e_max}}\nonumber\\
		\leq & f(w_2(E_{1,2}+\varepsilon_{1,[3]})w_2(E_{2,2}+\varepsilon_{2,[3]}))+ 1\nonumber\\ &+\sum_{i=4}^{n}f(w_2(\varepsilon_{1,[i]})w_2(\varepsilon_{2,[i]}))+1\nonumber\\
		=& f(w_2(E_{1,3})w_2(E_{2,3})) + \sum_{i=4}^{n}f(w_2(\varepsilon_{1,[i]})w_2(\varepsilon_{2,[i]})) \nonumber\\
		&+2 \nonumber \\
		\dots & \nonumber\\
		\leq & f(w_2(E_{1,n})w_2(E_{2,n}))+n-1\nonumber\\
		= & f(w_2(\sum_{i=1}^{n}\varepsilon_{1,[i]})w_2(\sum_{i=1}^{n}\varepsilon_{2,[i]}))+n-1\nonumber\\
		= & f(w_2(2\varepsilon_1)w_2(2\varepsilon_2))+n-1 \Arrow{ Lemma \ref{thm:f_w1w1_w2w2_plugin}}\nonumber\\
		= & 2\varepsilon_1 + 2\varepsilon_2 + 2 + w_2(2\varepsilon_1)w_2(2\varepsilon_2)-w_2(2\varepsilon_1)\nonumber\\
		&-w_2(2\varepsilon_2)+n-1\nonumber\\
		= & 2\varepsilon_1 + 2\varepsilon_2 + w_2(2\varepsilon_1)w_2(2\varepsilon_2)-w_2(2\varepsilon_1)-w_2(2\varepsilon_2)\nonumber\\
		&+n+1\label{equ:bound_kl_n1_n2_part1}
	\end{DispWithArrows}
	The bound in Equation \eqref{equ:bound_kl_n1_n2_part1} is increasing with $\varepsilon_1$ and $\varepsilon_2$. Therefore, the constraints \eqref{equ:bound_kl_1_part1_all_d_modify} and \eqref{equ:bound_kl_2_part1_all_d_modify}  can be modified back to \eqref{equ:bound_kl_1_part1_all_d} and \eqref{equ:bound_kl_2_part1_all_d}, respectively .

	%
	%
	
	\textbf{Step 2}. from  Equation \eqref{equ:bound_kl_1_part2}, we know 
	\begin{align}\label{equ:bound_mu_1_norm}
		\vert\bm{\mu}_1\vert\leq \sqrt{2\varepsilon_1}
	\end{align}
	where $\vert\cdot\vert$ denotes the $L_2$ norm of vector.
	From Inequlity \eqref{equ:bound_kl_2_part2}, we also know $\lambda_{2*}'\bm{\mu}_2^\top\bm{\mu}_2\leq \bm{\mu}_2^{\top}\bm{\Sigma}_2^{-1}\bm{\mu}_2\leq 2\varepsilon_2 $, where $\lambda_{2*}'$ is the minimum eigenvalue of $\bm{\Sigma}_2^{-1}$. Now combining the condition \eqref{equ:bound_kl_2_part1_all_d} and Lemma \ref{prp:sup_f_x_invers_1_d}, we get 
	\begin{align}
		\bm{\mu}_2^\top\bm{\mu}_2\leq \dfrac{2\varepsilon_2}{\lambda_{2*}'}\leq \dfrac{2\varepsilon_2}{w_1(2\varepsilon_2)}\implies  \vert\bm{\mu}_2\vert \leq  \sqrt{\dfrac{2\varepsilon_2}{w_1(2\varepsilon_2)}} \label{equ:bound_mu_2_norm}
	\end{align}
	Combining Inequalities \eqref{equ:bound_mu_1_norm}, \eqref{equ:bound_mu_2_norm} and using the triangle inequality for norms of vectors, we have
	\begin{align}
		\vert\bm{\mu}_2-\bm{\mu}_1\vert \leq \vert\bm{\mu}_2\vert + \vert\bm{\mu}_1\vert \leq   \sqrt{2\varepsilon_1}+\sqrt{\dfrac{2\varepsilon_2}{w_1(2\varepsilon_2)}}
	\end{align}
	Again, we have $(\bm{\mu}_2-\bm{\mu}_1)^\top\bm{\Sigma}_2^{-1}(\bm{\mu}_2-\bm{\mu}_1)\leq \lambda{'}_2^{*}\vert\bm{\mu}_2-\bm{\mu}_1\vert^2$, where $\lambda{'}_2^{*}$ is the maximum eigenvalue of $\bm{\Sigma}_2^{-1}$. From Lemma \ref{prp:sup_f_x_invers_1_d} and condition \eqref{equ:bound_kl_2_part1_all_d}, we know $\lambda{'}_2^{*}\leq w_2(2\varepsilon_2)$. Thus, we can conclude that 
	\begin{align}\label{equ:bound_kl_n1_n2_part2}
		(\bm{\mu}_2-\bm{\mu}_1)^\top\bm{\Sigma}_2^{-1}(\bm{\mu}_2-\bm{\mu}_1)
		\leq & w_2(2\varepsilon_2)\vert\bm{\mu}_2-\bm{\mu}_1\vert^2 \nonumber\\
		\leq & w_2(2\varepsilon_2)\left(\sqrt{2\varepsilon_1}+\sqrt{\dfrac{2\varepsilon_2}{w_1(2\varepsilon_2)}}\right)^2
	\end{align}
	Finally, combining Inequalities \eqref{equ:bound_kl_n1_n2_part1} and \eqref{equ:bound_kl_n1_n2_part2}, we can conclude that 
	\begin{align}\label{equ:bound_kl_relexed_triangle_one_std_Gaussian}
		\small
		&KL((\mathcal{N}(\bm{\mu}_1,\bm{\Sigma}_1)||\mathcal{N}(\bm{\mu}_2,\bm{\Sigma}_2))\nonumber\\
		<& \dfrac{1}{2}\left(2\varepsilon_1 + 2\varepsilon_2 + w_2(2\varepsilon_1)w_2(2\varepsilon_2)-w_2(2\varepsilon_1)-w_2(2\varepsilon_2)\vphantom{\left(\sqrt{2\varepsilon_1}+\sqrt{\dfrac{2\varepsilon_2}{w_1(2\varepsilon_2)}}\right)^2}\right. 
	 \nonumber\\
		&	+n+1 + \left.w_2(2\varepsilon_2)\left(\sqrt{2\varepsilon_1}+\sqrt{\dfrac{2\varepsilon_2}{w_1(2\varepsilon_2)}}\right)^2-n\right)\nonumber\\
		= &  \varepsilon_1 + \varepsilon_2 + \dfrac{1}{2}\left(W_{-1}(-e^{-(1+2\varepsilon_1)})W_{-1}(-e^{-(1+2\varepsilon_2)}) \vphantom{\left(\sqrt{2\varepsilon_1}+\sqrt{\dfrac{2\varepsilon_2}{-W_{0}(-e^{-(1+2\varepsilon_2)})}}\right)^2}\right.  \nonumber\\ 
		&   +W_{-1}(-e^{-(1+2\varepsilon_1)})+W_{-1}(-e^{-(1+2\varepsilon_2)})+1\nonumber \\
		&  \left. -W_{-1}(-e^{-(1+2\varepsilon_2)})\left(\sqrt{2\varepsilon_1}+\sqrt{\dfrac{2\varepsilon_2}{-W_{0}(-e^{-(1+2\varepsilon_2)})}}\right)^2\right)
	\end{align}
	
	$\hfill\square$	
\end{proof}
\begin{remark} The bound in Equation \eqref{equ:bound_kl_relexed_triangle_one_std_Gaussian} has the following properties. 
	\begin{enumerate}
		\item The bound becomes 0 when $\varepsilon_1=\varepsilon_2=0$.
		\item When both $\varepsilon_1$ and $\varepsilon_2$ are small, all the items in the bound are small and hence the bound is small.
		\item The bound is independent of the dimension $n$ because we have eliminated the impact of dimension $n$ by Lemma \ref{thm:kl_triangle_fw2w2_extreme_distribute_e_max}. This is the most tricky part in this proof.
		\item When $\varepsilon_2$ is large, the bound is mostly dominated by the last item in the branket.
		\item In fact, we can distribute $2\varepsilon_1$ into two parts in Equation \eqref{equ:bound_kl_1_part1_all_d} and \eqref{equ:bound_kl_1_part2}. However, this will lead to a complicated expression which is very hard to solve a supremum as like what we do on Equation \eqref{equ:bound_bKL}. Numerical experiments show that the supremum varies with how $2\varepsilon_1$ ($2\varepsilon_2$) are allocated into two parts in the left hand sides of Equation \eqref{equ:bound_kl_1_part1_all_d} and \eqref{equ:bound_kl_1_part2} (\eqref{equ:bound_kl_2_part1_all_d} and \eqref{equ:bound_kl_2_part2}).
		Therefore, in constraints 
		\eqref{equ:bound_kl_1_part1_all_d}, \eqref{equ:bound_kl_1_part2}, \eqref{equ:bound_kl_2_part1_all_d}, and \eqref{equ:bound_kl_2_part2}, we relax the conditions and get an relaxed upper bound with a simpler form in Inequality \eqref{equ:bound_kl_relexed_triangle_one_std_Gaussian}.
	\end{enumerate}
\end{remark}

\noindent\textbf{Proof of Theorem \ref{thm:triangle_n1_n2_n3}}.
\begin{proof}
	Theorem \ref{thm:triangle_n1_n2_n3} extends Lemma \ref{thm:triangle_n1_n2_standard} to three general Gaussians.  
	We can use linear invertible transformation to convert one Gaussian into standard Gaussian and then apply Lemma \ref{thm:triangle_n1_n2_standard}. Please see Appendix \ref{sec:appen_proof_thm:triangle_n1_n2_n3} for details.
	
		$\hfill\square$	
\end{proof}

In Theorem \ref{thm:triangle_n1_n2_n3}, we try to find an upper bound as tight as possible. So the bound seems a little complicated.  
We can expand Lambert $W$ function by series \cite{2016PrincetonCompanion, corless1996lambertw} and simplify the bound as follows \cite{liu2022constrainedSafeRL} \footnote{After we post our last version of manuscript on Arxiv \cite{onThePropertiesOfKL2021Yufeng}, Liu \textit{et al.} cited our manuscript in their work \cite{liu2022constrainedSafeRL} in which they simplify the bound in Theorem \ref{thm:triangle_n1_n2_n3} by using series in simpler case $\varepsilon_1=\varepsilon_2$.}.
\begin{theorem}
	\label{thm:triangle_n1_n2_n3_simplify}
	For any three $n$-dimensional Gaussians $\mathcal{N}(\bm{\mu}_i,\bm{\Sigma}_i)(i\in\{1,2,3\})$  
	such that $KL(\mathcal{N}(\bm{\mu}_1,\bm{\Sigma}_1)||\mathcal{N}(\bm{\mu}_2,\bm{\Sigma}_2))\leq \varepsilon_1$ and 
	$ KL(\mathcal{N}(\bm{\mu}_2,\bm{\Sigma}_2)||\mathcal{N}(\bm{\mu}_3,\bm{\Sigma}_3))\leq \varepsilon_2$ for small $\varepsilon_1, \varepsilon_2\ge 0$, then 
	\begin{align}
		&KL(\mathcal{N}(\bm{\mu}_1,\bm{\Sigma}_1)||\mathcal{N}(\bm{\mu}_3,\bm{\Sigma}_3))\nonumber \\
		< & 3\varepsilon_1+3\varepsilon_2+2\sqrt{\varepsilon_1\varepsilon_2}+o(\varepsilon_1)+o(\varepsilon_2)
	\end{align}
\end{theorem}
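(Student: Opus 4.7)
The plan is to obtain the simplified bound by taking the closed-form inequality of Theorem \ref{thm:triangle_n1_n2_n3} and expanding the Lambert $W$ evaluations in their Puiseux series at the branch point $-e^{-1}$. The two relevant expansions, which follow from the well-known series in \cite{corless1996lambertw, 2016PrincetonCompanion}, are
\begin{align*}
-W_{0}(-e^{-(1+t)}) &= 1 - \sqrt{2t} + \tfrac{2t}{3} + O(t^{3/2}), \\
-W_{-1}(-e^{-(1+t)}) &= 1 + \sqrt{2t} + \tfrac{2t}{3} + O(t^{3/2}),
\end{align*}
which I would apply with $t = 2\varepsilon_1$ and $t = 2\varepsilon_2$. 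Abbreviate $A_i := W_{-1}(-e^{-(1+2\varepsilon_i)})$ and $B_2 := W_0(-e^{-(1+2\varepsilon_2)})$. A convenient preliminary move is to regroup the bracket in Theorem \ref{thm:triangle_n1_n2_n3} as
\[
A_1 A_2 + A_1 + A_2 + 1 - A_2\bigl(\sqrt{2\varepsilon_1}+\sqrt{2\varepsilon_2/(-B_2)}\bigr)^2 = (A_1+1)(A_2+1) + (-A_2)\bigl(\sqrt{2\varepsilon_1}+\sqrt{2\varepsilon_2/(-B_2)}\bigr)^2,
\]
which makes the orders of growth transparent: $A_i+1 = O(\sqrt{\varepsilon_i})$ while $-A_i = 1 + O(\sqrt{\varepsilon_i})$.

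Next I would substitute the expansions and collect everything up to order $\varepsilon_1 + \varepsilon_2 + \sqrt{\varepsilon_1 \varepsilon_2}$. Using $A_i+1 = -2\sqrt{\varepsilon_i} - \tfrac{4\varepsilon_i}{3} + O(\varepsilon_i^{3/2})$ gives $(A_1+1)(A_2+1) = 4\sqrt{\varepsilon_1\varepsilon_2} + o(\varepsilon_1) + o(\varepsilon_2)$. For the quadratic factor, expanding $-B_2 = 1 - 2\sqrt{\varepsilon_2} + O(\varepsilon_2)$ yields $2\varepsilon_2/(-B_2) = 2\varepsilon_2 + O(\varepsilon_2^{3/2})$ and hence $\sqrt{2\varepsilon_2/(-B_2)} = \sqrt{2\varepsilon_2} + O(\varepsilon_2)$, so that $(\sqrt{2\varepsilon_1}+\sqrt{2\varepsilon_2/(-B_2)})^2 = 2\varepsilon_1 + 2\varepsilon_2 + 4\sqrt{\varepsilon_1\varepsilon_2} + o(\varepsilon_1)+o(\varepsilon_2)$. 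Multiplying by $-A_2 = 1 + O(\sqrt{\varepsilon_2})$ and combining with the $(A_1+1)(A_2+1)$ contribution, the bracket simplifies to $2\varepsilon_1 + 2\varepsilon_2 + 8\sqrt{\varepsilon_1\varepsilon_2} + o(\varepsilon_1) + o(\varepsilon_2)$. Adding $\varepsilon_1+\varepsilon_2$ from the outside and dividing by $2$ then collapses the upper bound to the clean intermediate form
\[
KL(\mathcal{N}_1\|\mathcal{N}_3) < 2\varepsilon_1 + 2\varepsilon_2 + 4\sqrt{\varepsilon_1\varepsilon_2} + o(\varepsilon_1) + o(\varepsilon_2) = 2\bigl(\sqrt{\varepsilon_1}+\sqrt{\varepsilon_2}\bigr)^2 + o(\varepsilon_1)+o(\varepsilon_2).
\]

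Finally, to match the statement, I would apply AM--GM in the form $2\sqrt{\varepsilon_1\varepsilon_2} \le \varepsilon_1 + \varepsilon_2$ to one of the two copies of $2\sqrt{\varepsilon_1\varepsilon_2}$ hidden inside $4\sqrt{\varepsilon_1\varepsilon_2}$, which produces $4\sqrt{\varepsilon_1\varepsilon_2} \le (\varepsilon_1+\varepsilon_2) + 2\sqrt{\varepsilon_1\varepsilon_2}$ and therefore $2\varepsilon_1 + 2\varepsilon_2 + 4\sqrt{\varepsilon_1\varepsilon_2} \le 3\varepsilon_1 + 3\varepsilon_2 + 2\sqrt{\varepsilon_1\varepsilon_2}$; the two forms agree exactly when $\varepsilon_1 = \varepsilon_2$, which is consistent with the special case already handled in \cite{liu2022constrainedSafeRL}. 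The only genuine difficulty is careful bookkeeping of the cross terms in the series expansion, since $\varepsilon_1$ and $\varepsilon_2$ may be of very different sizes; each ignored monomial (for example $\varepsilon_i \sqrt{\varepsilon_j}$ or $\varepsilon_i^{3/2}$) must be checked to be genuinely $o(\varepsilon_1) + o(\varepsilon_2)$ as both parameters tend to zero, which it is because every such monomial carries at least one factor of $\sqrt{\varepsilon_i}$ beyond the relevant linear or half-integer order.
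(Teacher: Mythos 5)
Your proposal is correct and follows essentially the same route as the paper: expand the closed-form bound of Theorem \ref{thm:triangle_n1_n2_n3} using the Puiseux series of $W_0$ and $W_{-1}$ at the branch point, note $(A_1+1)(A_2+1)=4\sqrt{\varepsilon_1\varepsilon_2}+o(\varepsilon_1)+o(\varepsilon_2)$, and control the remaining quadratic term. The only (minor) difference is where the AM--GM relaxation enters: the paper bounds the square via $(a+b)^2\le 2(a^2+b^2)$ inside the expansion and lands directly on $3\varepsilon_1+3\varepsilon_2+2\sqrt{\varepsilon_1\varepsilon_2}$, whereas you expand the square exactly, obtain the slightly tighter intermediate form $2\bigl(\sqrt{\varepsilon_1}+\sqrt{\varepsilon_2}\bigr)^2+o(\varepsilon_1)+o(\varepsilon_2)$, and only then relax $2\sqrt{\varepsilon_1\varepsilon_2}\le\varepsilon_1+\varepsilon_2$ to match the stated bound, which is equivalent bookkeeping and equally valid.
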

\begin{proof}
	See Appendix \ref{sec:proof_thm:triangle_n1_n2_n3_simplify} for the details of the proof.

	$\hfill\square$ 
\end{proof}

Finally, in the proof of Theorem \ref{thm:triangle_n1_n2_n3}, we use invertible linear transformation to convert $\mathcal{N}_2$ to standard Gaussian with preserving KL divergence. This still holds  when $\mathcal{N}(\bm{\mu}_2,\bm{\Sigma}_2)$ is fixed. So we get the the following corollary.

\begin{corollary}
	Theorem \ref{thm:triangle_n1_n2_n3} and \ref{thm:triangle_n1_n2_n3_simplify} hold when $\mathcal{N}(\bm{\mu}_2,\bm{\Sigma}_2)$ is fixed.
\end{corollary}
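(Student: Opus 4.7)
The plan is to show that the main proof of Theorem \ref{thm:triangle_n1_n2_n3} (which reduces to Lemma \ref{thm:triangle_n1_n2_standard} via an invertible linear transformation converting $\mathcal{N}_2$ to the standard Gaussian) goes through without modification when $\mathcal{N}(\bm{\mu}_2,\bm{\Sigma}_2)$ is held fixed. The key observation is that the transformation is uniquely determined by the parameters of $\mathcal{N}_2$, so fixing $\mathcal{N}_2$ merely pins down the transformation without restricting the freedom to choose $\mathcal{N}_1$ and $\mathcal{N}_3$. The bound itself in Theorem \ref{thm:triangle_n1_n2_n3} depends only on $\varepsilon_1$ and $\varepsilon_2$, not on any parameters of $\mathcal{N}_2$, so there is a natural hope that the fixed-$\mathcal{N}_2$ version is just a corollary.

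Concretely, I would take the affine map $T(\bm{x}) = \bm{\Sigma}_2^{-1/2}(\bm{x} - \bm{\mu}_2)$, which sends $\mathcal{N}(\bm{\mu}_2,\bm{\Sigma}_2)$ to $\mathcal{N}(0,I)$. Let $\mathcal{N}_1'$ and $\mathcal{N}_3'$ denote the push-forwards of $\mathcal{N}_1$ and $\mathcal{N}_3$ under $T$; both remain Gaussian because $T$ is an invertible affine map. Proposition \ref{thm:preserve_KL} then gives
\begin{align}
KL(\mathcal{N}_1 || \mathcal{N}_2) &= KL(\mathcal{N}_1' || \mathcal{N}(0,I)) \leq \varepsilon_1,\nonumber\\
KL(\mathcal{N}_2 || \mathcal{N}_3) &= KL(\mathcal{N}(0,I) || \mathcal{N}_3') \leq \varepsilon_2,\nonumber\\
KL(\mathcal{N}_1 || \mathcal{N}_3) &= KL(\mathcal{N}_1' || \mathcal{N}_3').\nonumber
\end{align}
Applying Lemma \ref{thm:triangle_n1_n2_standard} to the pair $(\mathcal{N}_1', \mathcal{N}_3')$ yields precisely the bound asserted in Theorem \ref{thm:triangle_n1_n2_n3}.

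The remaining point is to verify that fixing $\mathcal{N}_2$ does not invalidate the argument. Fixing $\mathcal{N}_2$ fixes $T$, but $\mathcal{N}_1$ and $\mathcal{N}_3$ still range freely over Gaussians satisfying the KL hypotheses. Since $T$ is a bijection on the space of $n$-dimensional Gaussians, their push-forwards $\mathcal{N}_1'$ and $\mathcal{N}_3'$ correspondingly range over all Gaussians satisfying the hypotheses of Lemma \ref{thm:triangle_n1_n2_standard}, so no generality is lost and the supremum in that lemma remains the correct supremum under the added constraint. Theorem \ref{thm:triangle_n1_n2_n3_simplify} then follows immediately in the fixed-$\mathcal{N}_2$ case because its derivation from Theorem \ref{thm:triangle_n1_n2_n3} is a purely analytic Lambert-$W$ series expansion that does not reference any Gaussian parameters. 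There is no serious obstacle here; the only work is bookkeeping to confirm that the reduction step used in the general proof is compatible with the new side condition.
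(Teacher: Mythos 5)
Your proposal is correct and follows essentially the same route as the paper: the paper likewise observes that the whitening transformation used in the proof of Theorem \ref{thm:triangle_n1_n2_n3} depends only on $\mathcal{N}(\bm{\mu}_2,\bm{\Sigma}_2)$, preserves KL divergence by Proposition \ref{thm:preserve_KL}, and therefore remains valid when $\mathcal{N}_2$ is fixed, with Theorem \ref{thm:triangle_n1_n2_n3_simplify} following by the parameter-free series expansion. The only cosmetic difference is your choice of $\bm{\Sigma}_2^{-1/2}$ in place of the paper's $B_2^{-1}$ with $B_2=P_2D_2^{1/2}$, which is immaterial.
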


\begin{remark} \label{remark_different_bregman_triangle}
\noindent\textbf{Comparison with existing general Pythagoras inequalities}\\
It is known that KL divergence satisfies some general Pythagoras inequalities which seem similar to our relaxed triangle inequality. We note that they are different in the follows.

The bound in our relaxed triangle inequality is independent of the parameters of Gaussians and only related to  $\varepsilon_1$ and $\varepsilon_2$. 
Our theorem is different from the several existing generalized Pythagoras inequalities satisfied by KL divergence, where the bounds are functions of the given distributions. 
We list them as follows.
	\begin{enumerate}
		\item The generalized Pythagoras inequality for KL divergence \cite{cover2012elements,Renyi-and-KL2014} states that for a convex set of distributions $\mathcal{P}$, any distribution $Q$ not in $\mathcal{P}$, and $D_{min}=\inf_{P\in\mathcal{P}}KL(P||Q)$, there exists a distribution $P^*$ such that 
		$$KL(P||Q)\ge KL(P||P^*)+D_{min}\ \ \ \text{for all}\ \ \ P \in \mathcal{P}$$
		
		\item Erven \textit{et al.} generalize the Pythagoras inequality for KL divergence to R\'{e}nyi divergence which includes KL divergence with order 1. See \cite{Renyi-and-KL2014} for details.
		
		\item Functional Bregman divergence also satisfies a generalized Pythagoras theorem \cite{functional_bregman_divergence}. 
		Let $(\mathbb{R}^d,\Omega,v)$ be a measure space, where $d$ is a positive integer and $v$ is a Borel measure. Let $\mathcal{A}$ be a convex subset of $L^p(v)$.  
		For any $f,g,h\in \mathcal{A}$, functional Bregman divergence $d_\phi$ satisfies
		\begin{equation}\label{equ:triangle-bregman}
			d_{\phi}[f,h]=d_{\phi}[f,g]+d_{\phi}[g,h]+\delta\phi[g;f-g]-\delta\phi[h;f-g]	
		\end{equation}
		where $\phi:L^p(v)\rightarrow\mathbb{R}$ is a strictly convex, twice-continuously Fr\'{e}chet-differentiable functional.  $\delta\phi[g;\cdot]$ is the Fr\'{e}chet derivative of $\phi$ at $g$.
		KL divergence is a special form of functional Bregman divergence when $\phi=\int p(x)\log p(x)\dif x$ whose Fr\'{e}chet derivative at $g$ is $\delta\phi[g;t]=\int (\log g(x)+1)t(x)\dif x$. Plugging $\phi$ and $\delta\phi$ into Equation \eqref{equ:triangle-bregman}, we get
		\begin{align}
			\small
			&KL(f||h) \nonumber \\
			=& KL(f||g)+KL(g||h) 
			+\int (\log g(x)+1)(f(x)-g(x))\dif x \nonumber \\
			& -\int (\log h(x)+1)(f(x)-g(x))\dif x \nonumber\\
			=& KL(f||g)+ \int f(x)\log \dfrac{g(x)}{h(x)}\dif x
		\end{align}
		
	\end{enumerate}

All the bounds in the above inequalities are dependent on the parameters of the given distributions. 
	
	
\end{remark}
In our theorem, we allow all parameters are unknown or one Gaussian is fixed. Therefore, our theorems are suitable for contexts where the re can vary. This is common in deep learning where the parameters are learned by the model. Therefore, it is impossible to identify the parameters or the KL divergence before the model is trained. In some cases, we only know that some bound is guaranteed. 
In the next section, we discuss the applications of our theorems in deep learning.

\section{Applications}\label{sec:application}

\subsection{Anomaly Detection with Flow-based Model}
The research question in this paper comes from our research on deep anomaly detection using flow-based model \cite{dinh2014nice,dinh2016realnvp,kingma2018glow, flow-tpami-2021}. 
Flow-based model constructs diffeomorphism between data space to latent space.  Compared with other generative models such as generative adversarial networks, flow-based model has the advantage of providing explicit likelihood $p_{\theta}(x)$ to input $x$, where $\theta$ refer to model parameters. Usually, flow-based model is trained by maximum likelihood estimate with Gaussian prior. Intuitively, it is natural to believe that samples from the training (in-distribution, ID in short) dataset should have higher likelihoods than out-of-distribution (OOD) data (\textit{i.e.}, anomalies). However, Nalisnick \textit{et al.} reveal that deep generative models including flow-based models may assign higher likelihoods to OOD data \cite{nalisnick2018deep}. 
For example, Glow \cite{kingma2018glow} assigns higher likelihoods for SVHN when trained on CIFAR-10.
This observation is also verified by many other researchers including ourselves \cite{shafaei2018digitnotcat, choi2018generative, vskvara2018generative,nalisnick2019detecting,whyflowfailood}. 
This brings obstacles to anomaly detection in flow-based model according to model likelihood \cite{nalisnick2019detecting, whyflowfailood}. However, we can not sample these OOD data from the model although they may have higher likelihoods than training data. Nalisnick et al. explain this phenomenon by the discrepancy of typical set and high probability density regions of model distribution \cite{nalisnick2019detecting}. This can explain why we can not sample OOD data that have higher likelihoods than ID data. But their explanation fails when OOD data has coinciding likelihoods with ID data. Before our analysis, this counterintuitive phenomenon has not been satisfactorily explained. 

In this context, we want to explain \textit{why we can not sample OOD data from flow-based model with prior regardless of when OOD data have higher, lower, or coinciding likelihoods}.
We investigate this problem from a statistical divergence perspective. 
Let $z=f(x)$ be the flow-based model which maps data $x$ in data space to $z$ in latent space. Assume that the prior distribution $p_Z^r$ is the most commonly used Gaussian distribution.  
Suppose that $X_1\sim p_X(x)$, $X_2\sim q_X(x)$ represent distributions of ID and OOD datasets, respectively. We note $Z_1=f(X_1)\sim p_Z(z)$, $Z_2=f(X_2)\sim q_Z(z)$ to represent the distributions of representations of ID and OOD datasets, respectively. We also note the model induced distribution $p_X^r$ such that $Z_r\sim p_Z^r$ and $X_r=f^{-1}(Z_r)\sim p^r_X$. 
Flow-based model is usually trained by maximum likelihood estimation. 	This is equal to minimizing  forward KL divergence $KL(p_X||p^r_X)$ \cite{papamakarios2019flow_model_survey, goodfellow2016deep}.	
In our experiments, we conduct generalized Shapiro-Wilk test for multivariate normality. Results demonstrate that $p_Z$ is Gaussian-like for all datasets. Surprisingly, $q_Z$ is also Gaussian-like for OOD datasets with higher or coinciding likelihoods except for just one case. These results allow us to approximate $p_Z$ and $q_Z$ with Gaussians.
Note that, it seems that the normality of represetations of ID and OOD dataset is a characteristic of flow-based model. We did not get similar observations in variational autoencoders.

The theorems proved in this paper can help us to analyze the KL divergences between $p_Z$, $p_Z^r$, and $q_Z$. On one hand, according to Proposition \ref{thm:preserve_KL}, we can know $KL(p_X||p^r_X)=KL(p_Z||p^r_Z)$, so $KL(p_Z||p^r_Z)$ is trained to be small. By Theorem \ref{thm:duality_small_KL_general}, we can know  $KL(p^r_Z||p_Z)$ is small too. So we can assume $p^r_Z\approx  p_Z$ when $KL(p_Z||p^r_Z)$ is sufficiently small. 	
On the other hand, we can also assume that the distributions of ID and OOD data are far from each other. This implies that $KL(p_X||q_X)=KL(p_Z||q_Z)$ can be any large. By the relaxed triangle inequality, we can infer that $KL(p_Z^r||q_Z)$ must be large.
This answers the question why we can not sample OOD data from flow-based model with prior. 
Furthermore, we decompose the large divergence  $KL(q_Z||p_Z^r)$ into dimensional-wise KL divergence and total correlation (generalized mutual information) measuring the mutual dependence between dimensions. We demonstrate that the representations of OOD data are more correlated than that of ID data. From a geometric perspective, strong correlation indicates that the representations of OOD data locate in specific directions. In high dimensional space, it is hard to sample data residing in specific directions from prior. This gives the second explanation to the above question.
Based on the theoretical analysis and further observation on the local pixel dependence in the representation of OOD dataset, we propose a KL divergence-based anomaly detection algorithm. Experimental results have shown the effectiveness of our method. More details of the application of our theorems in deep anomaly detection research can be referred to in our manuscript \cite{zhang2021outofdistribution}, which is submitted independently.

Importantly, flow-based model constructs diffeomorphism between data space to latent space with thousands of dimensions. It is important that the bounds found in this paper are independent of the dimension. 
Furthermore, since both $p_Z$ and $q_Z$ are dependent on model parameters and $q_Z$ is also dependent on the input OOD dataset, it is impossible to determine the parameters of $p_Z$ and $q_Z$ in advance. Our theorems do not dependent on the parameters of distributions and only requires some bound is restricted. This is why we need to prove the theorems in this paper rather than using existing theorems. 


\subsection{Safety Guarantee in Reinforcement Learning}
The theorems proved in this paper can also be used as general conclusions in related fields.
Since we post the last version of this manuscript on Arxiv \cite{onThePropertiesOfKL2021Yufeng}, our manuscript has been cited by other researchers. For example, the relaxed triangle inequality (Theorem \ref{thm:triangle_n1_n2_n3}) has been used in the research of constrained variational policy optimization for safe reinforcement learning \cite{liu2022constrainedSafeRL}. In their work, Liu \textit{et al.} propose an Expectation-Maximization style approach for learning safe policy in reinforcement learning. After achieving one-step robutness guarantee, a natural question is extending to multiple steps policy updating robustness guarantee. This requires triangle inequality for consecutive updated policies. It is known that KL divergence does not has such property in general case. However, multivariate Gaussian is commonly used as policy in continuous action space tasks. In such context, our relaxed triangle inequality (Theorem \ref{thm:triangle_n1_n2_n3}) is used to extend one-step robustness guarantee to multiple steps.  Particularly, Liu \textit{et al.} use big-$O$ to simplify the bound in Theorem \ref{thm:triangle_n1_n2_n3}  in case $\varepsilon_1=\varepsilon_2$ . Please see \cite{liu2022constrainedSafeRL} for more details about the application.

\section{Related work}\label{sec:related-work}
KL divergence is an important divergence and has a wide range of applications \cite{cover2012elements,PRML,goodfellow2016deep,pardo2018statistical,opti-reinforce-KL2010, MDP-KL-cost2014,  unifying-entropies-KL}. 
Researchers have investigated KL divergence between many distributions including Markov sources \cite{rached2004kullback}, GMM models \cite{low_upper_bound_kl_gmm,hershey2007approximating}, multivariate generalized Gaussians \cite{KLD_MGGD}, univariate mixtures \cite{nielsen2016guaranteed}, discrete normal distributions \cite{nielsen2021kullback}, \textit{etc}. In \cite{pardo2018statistical}, a bound of KL divergence between Gaussians is given. 
As we discussed in Remark \ref{remark_different_bregman_triangle}, existing  generalized Pythagorean inequality satisfied by KL divergence are all dependent on the parameters of distributions \cite{cover2012elements, Renyi-and-KL2014}. 
As far as we know, there is no related work that focuses on the similar properties of KL divergence between Gaussians as this paper. 

KL divergence is one member of more general divergences such as Bregman divergence \cite{bregman1967relaxation,cluster-bregman-div, functional_bregman_divergence, on-bregmen-distance2012}, $f$-divergence \cite{ali1966general, pardo2018statistical, JMLR-bound-f-divergence-2021}, R\'{e}nyi divergence \cite{renyi1961measures, pardo2018statistical,Renyi-and-KL2014}, and recently proposed $(f,\Gamma)$-divergence \cite{f-gamma-divergence}.
Bregman divergence defines a class of divergences \cite{cluster-bregman-div} in vector space. KL divergence between multinomial distributions is a special form of Bregman divergence when the convex function for Bregman divergence is chosen as $\sum_{i=1}^{n}p_i\log p_i$, where $p_i\ge 0$ for $i=1,\dots,n$ define a multinomial distribution. 
Frigyik \textit{et. al.} \cite{functional_bregman_divergence} extends vector Bregman divergence to functional Bregman divergence in $L^p$. Similarly, KL divergence is a special form of functional Bregman divergene.
(functional) Bregman divergence also satisfies generalized Pythagoras theorem \cite{cluster-bregman-div, functional_bregman_divergence}. We note that our relaxed triangle inequality has a different meaning in Remark \ref{remark_different_bregman_triangle}. 

$f$-divergence also defines a class of divergences based on convex functions \cite{ali1966general,pardo2018statistical,2006-f-divergence-and-pro, f-divergence-inequalities}. Many commonly used measures including the KL divergence, Jensen-Shannon divergence, and squared Hellinger distance are special cases of $f$-divergence. Many $f$-divergences are not proper distance metrics and do not satisfy the triangle inequality. KL divergence is the unique divergence belong to both $f$-divergence and Bregman divergence \cite{2009alpha-bregman}.

R\'{e}nyi divergence defines another class of divergences \cite{renyi1961measures, pardo2018statistical,Renyi-and-KL2014, bobkov2019renyi}. R\'{e}nyi divergence with order of 1 becomes KL divergence. As discussed in Remark \ref{remark_different_bregman_triangle}, R\'{e}nyi divergence also satisfies a generalized Pythogras theorem \cite{Renyi-and-KL2014}.

KL divergence between general distributions does not have a closed form. In application, it is not easy to estimate KL divergence when only samples of distributions are available especially in high dimensional problems. A line of research is dedicated to the estimation of divergences \cite{universal-estimation-div2006, estimation2009QingWang,estimate2010Nguyen, f-div-esti2012, gil2013renyi,  ensembleEstimation2014, NIPS2014f-div-esti,estimate-KL-minimax2018, Rubenstein2019estimation, minimax-estimate-kl2020, neilsen2021approJeff, neural-estimation-div-JMLR}. Unlike other distributions, the KL divergence between Gaussians has a closed form. The theorems presented in this paper can deepen our understanding of KL divergence between Gaussians.

The asymmetry of KL divergence has restricted the application of KL divergence in practical applications. Many other divergences have been investigated \cite{rached2001renyi, davis2007information, abou2012note, fGAN, improved_WGANs, towardsRepresentationDistance, cluster-bregman-div, 2018-new-family, 2019-robust-KL-divergence, pro-generalized-divergence-2020}.  Pardo gives a comprehensive survey on a wide range of statistical divergences in his book \cite{pardo2018statistical}. 



\section{Conclusion}\label{sec:conclusion}
In this paper, we research the properties of KL divergences between Gaussians. First, we find the supremum of reverse KL divergence $KL(\mathcal{N}_2||\mathcal{N}_1)$ if the forward KL divergence $KL(\mathcal{N}_1||\mathcal{N}_2)\leq \varepsilon$ ($\varepsilon>0$).  This conclusion quantifies the approximate symmetry of small KL divergence between Gaussians. We also find the infimum of $KL(\mathcal{N}_2||\mathcal{N}_1)$ if $KL(\mathcal{N}_1||\mathcal{N}_2)\geq M$ ($M>0$).  We give the conditions when the supremum and infimum can be attained.
Second, we find a bound for $KL(\mathcal{N}_1||\mathcal{N}_3)$ when $KL(\mathcal{N}_1||\mathcal{N}_2)$ and $KL(\mathcal{N}_2||\mathcal{N}_3)$ are bounded. This indicates that KL divergence between Gaussians follows a relaxed triangle inequality.
Importantly, all the bounds in the theorems in this paper are independent of the dimension of distributions.
The theorems presented in this paper is suitable especially for contexts where parameters may vary or can not be identified in advance (e.g., machine learning).
Finally, we discuss the applications of our theorems in deep anomaly detection and safe reinforcement learning.
We hope our research can shed light on more research in related field.
In the future, we plan to explore the properties of KL divergence between more general distributions such as Gaussian mixture models and exponential family of distributions.







\ifCLASSOPTIONcompsoc
  \section*{Acknowledgments}
\else
  \section*{Acknowledgment}
\fi
This work is supported by NSFC Program (No. 62002107, 62172429, 61872371).


\bibliographystyle{IEEEtran}
\bibliography{main}
%
%
%
\newpage

\appendices
\renewcommand\appendix{\par
	\setcounter{section}{0}
	\setcounter{subsection}{0}
	\gdef\thesection{附录 \Alph{section}}}
	
\section{Proof of Lemma \ref{thm:lemma_1_d_fx}}\label{sec:app_proof_thm:lemma_1_d_fx}
\begin{proof}
	\begin{enumerate}[label=(\alph*), ref=\ref{thm:lemma_1_d_fx}\alph*]
		\item This is because $	f'(x)=1-\frac{1}{x},\ f''(x)=\frac{1}{x^2}>0	$
		\item 
		We note $\Delta(x)=f(\frac{1}{x})-f(x)=\frac{1}{x}-x+2\log x$. Then $\Delta'(x) =-(\frac{1}{x}-1)^2\leq 0 \ \text{and}\ 
		\Delta(1)=0$
		So it is easy to know Lemma \ref{prp:f_x_invs_x_1_d} holds.
		\item 
		We can verify this by definition.
		\begin{align}\nonumber
			y-\log y=x  
			\Longleftrightarrow & e^{y-x}=y 
			\Longleftrightarrow  (-y)e^{-y}=-e^{-x} \\
			\Longleftrightarrow & y=-W(-e^{-x}) 
		\end{align}
		
		\item We can get Lemma \ref{thm:solution_f_x} from \ref{thm:reverse_f_def} immediately.
		
		\item According to Equation \eqref{equ:deriv_W}, we can have
		\begin{align}
			\dfrac{\dif w_1(t)}{\dif t}
			& = -\dfrac{\dif (W_{0}(-e^{-(1+t)}))}{\dif t}
			= \dfrac{-W_{0}(-e^{-(1+t)})}{-e^{-(1+t)}(1+W_{0}(-e^{-(1+t)}))} \nonumber\\
			&\times \dfrac{\dif (-e^{-(1+t)})}{\dif t} 
			= \dfrac{W_{0}(-e^{-(1+t)})}{W_{0}(-e^{-(1+t)})+1}
			= \dfrac{-w_1(t)}{1-w_1(t)}\nonumber
		\end{align}
		The derivative of $w_2(t)$ can be computed in a similar way.
		
		\item From Lemma \ref{prp:f_x_invs_x_1_d}, we can know Lemma \ref{thm:inequality_solution_fx}.
		
		\item 
		This is because 
		\begin{equation}
			\begin{aligned}\nonumber
				f(x)\leq 1+t
				\Longrightarrow & w_1(t) < x < w_2(t)
				\Longrightarrow & \dfrac{1}{w_2(t)} < \dfrac{1}{x} < \dfrac{1}{w_1(t)}
			\end{aligned}
		\end{equation}
		Combining Lemma \ref{prp:f_x_invs_x_1_d}, we have 
		\begin{align}\nonumber
			f(\dfrac{1}{w_2(t)}) < f(w_2(t))=1+t=f(w_1(t))<f(\dfrac{1}{w_1(t)})
		\end{align}
		Thus Equation \eqref{equ:sup_f_x_invs} holds. It is also easy to know that $S(t)=f(\frac{1}{w_1(t)})$ is continuous and strictly increasing with $t$.
		
		\item 
		We have
		\begin{equation}
			\begin{aligned}
				f(x)\geq 1+t
				\implies & x\leq w_1(t) \vee x\geq w_2(t)\nonumber\\
				\implies & \dfrac{1}{x}\leq \dfrac{1}{w_2(t)} \vee \dfrac{1}{x}\geq \dfrac{1}{w_1(t)}
			\end{aligned}
		\end{equation}
		Combining Lemma \ref{prp:f_x_invs_x_1_d}, we have $	f(\frac{1}{w_2(t)})< f(\frac{1}{w_1(t)})$, so we have Lemma \ref{prp:inf_f_x_invers_1_d}.
		%
		
		\item Since $f'(x)=1-\frac{1}{x}$ and $w_2(t)\geq 1$ for $t\geq 0$, we have
		\begin{align}
			f'(w_2(t))=&1-\dfrac{1}{w_2(t)}
			=  \dfrac{w_2(t)-1}{w_2(t)} \nonumber\\
			\leq & w_2(t)-1
			=  -\left(1-\dfrac{1}{\dfrac{1}{w_2(t)}}\right)
			=  -f'(\dfrac{1}{w_2(t)})
		\end{align}
		
		\item 
		\begin{align}
			& f(w_1(t_1)w_1(t_2))	\nonumber\\
			= &w_1(t_1)w_1(t_2)-\log w_1(t_1)w_1(t_2)\nonumber\\
			= &  w_1(t_1)w_1(t_2) + (w_1(t_1)-\log w_1(t_1)) \nonumber\\
			&+ (w_1(t_2)
			-\log w_1(t_2))-w_1(t_1)-w_1(t_2)\nonumber\\
			= & w_1(t_1)w_1(t_2) + 1 + t_1 + 1 + t_2  -w_1(t_1)-w_1(t_2) \nonumber\\
			= & t_1 + t_2 + 2+ w_1(t_1)w_1(t_2) - w_1(t_1)-w_1(t_2)
		\end{align}
		where the third equation follows from Lemma \ref{thm:solution_f_x}.
		Equation \eqref{equ:plugin_fw2w2} can be proved in a similar way.
	\end{enumerate}
	$\hfill\square$ 
\end{proof}

\section{Proof of Lemma \ref{thm:backward_to_forward_KL_small_indendent_n}}\label{sec:proof_thm:backward_to_forward_KL_small_indendent_n}

\begin{proof}
	
	The condition $KL(\mathcal{N}(0,I)||\mathcal{N}(\bm{\mu},\bm{\Sigma}))\leq \varepsilon$ is equal to the following conditions
	\begin{align}
		\log|\bm{\Sigma}|+\mathop{\mathrm{Tr}}(\bm{\Sigma}^{-1})&\leq n+\varepsilon_1 \label{equ:small_bKL_right_e}\\
		\bm{\mu}^{\top}\bm{\Sigma}^{-1}\bm{\mu}&\leq 2\varepsilon-\varepsilon_1 \label{equ:small_bKL_left_e}\\
		0\leq \varepsilon_1 & \leq 2\varepsilon
	\end{align}
	We can apply Lemma \ref{thm:lemma_sup_fx_invs_nd} on Equation \eqref{equ:small_bKL_right_e} and get 
	\begin{equation}\label{equ:small_bKL_to_fKL_left_item_bound}
		\begin{aligned}
			&-\log|\bm{\Sigma}|+\mathop{\mathrm{Tr}}(\bm{\Sigma})\nonumber\\
			\leq &  \dfrac{1}{-W_{0}(-e^{-(1+\varepsilon_1)})}-\log \dfrac{1}{-W_{0}(-e^{-(1+\varepsilon_1)})} +n -1
		\end{aligned}
	\end{equation}
	Applying Lemma \ref{prp:sup_f_x_invers_1_d} on Equation \eqref{equ:small_bKL_right_e}, we get 
	\begin{align}\label{equ:bound_lambda_prime}
		w_1(\varepsilon_1)<\lambda'<w_2(\varepsilon_1)
	\end{align}
	
	From Equation \eqref{equ:small_bKL_left_e} we know  $\bm{\mu}^{\top}\bm{\Sigma}^{-1}\bm{\mu}\leq 2\varepsilon-\varepsilon_1$. Since $\bm{\mu}^{\top}\bm{\Sigma}^{-1}\bm{\mu}\geq \lambda'_*\bm{\mu}^{\top}\bm{\mu}$ where $\lambda'_*$ is the minimum eigenvalue of $\bm{\Sigma}^{-1}$, combining Equation \eqref{equ:bound_lambda_prime}, we can know 
	\begin{gather}\label{equ:bound_mumu_bKL_to_fKL_right_item_bound}
		\bm{\mu}^{\top}\bm{\mu}\leq \dfrac{2\varepsilon-\varepsilon_1}{\lambda'_*}\leq \dfrac{2\varepsilon-\varepsilon_1}{w_1(\varepsilon_1)}
	\end{gather}
	
	Adding the two sides of Inequalities \eqref{equ:small_bKL_to_fKL_left_item_bound}, and \eqref{equ:bound_mumu_bKL_to_fKL_right_item_bound}, we get the same result as Equation \eqref{equ:bound_bKL}.
	Therefore, we can get the same supremum  as follows.
	\begin{equation}\label{equ:supremum_b_to_f_kl}
		\begin{aligned}
			&KL(\mathcal{N}(\bm{\mu},\bm{\Sigma})||\mathcal{N}(0,I)) \nonumber\\
			\leq & \dfrac{1}{2}\left(
			\dfrac{1}{-W_{0}(-e^{-(1+2\varepsilon)})}-\log \dfrac{1}{-W_{0}(-e^{-(1+2\varepsilon)})} -1 \right) \\
		\end{aligned}
	\end{equation}
	Inequality \eqref{equ:supremum_b_to_f_kl} is tight only when there exists one $\lambda'_j=-W_{0}(-e^{-(1+2\varepsilon)})$ and all other $\lambda'_i=1$ for $i\neq j$, and $|\bm{\mu}|=0$.
	
	$\hfill\square$ 
\end{proof}

\section{Proof of Theorem \ref{thm:duality_small_KL_general}}\label{sec:app_proof_thm:duality_small_KL_general}
\begin{proof}
	
	For  $X \sim \mathcal{N}(\bm{\mu},\bm{\Sigma})$, there exists an invertible matrix $B$ such that $X'=B^{-1}(X-\bm{\mu})\sim \mathcal{N}(0,I)$ \cite{PRML}. Here $B=PD^{1/2}$, $P$ is an orthogonal matrix whose columns are the eigenvectors of $\bm{\Sigma}$, $D=diag(\lambda_1,\dots,\lambda_n)$ whose diagonal elements are the corresponding eigenvalues. 
	For $X_1\sim \mathcal{N}(\bm{\mu}_1,\bm{\Sigma}_1)$ and $X_2\sim \mathcal{N}(\bm{\mu}_2,\bm{\Sigma}_2)$, we define the following linear transformations $T_1$, $T_2$
	\begin{align}\label{equ:linear_tranform_for_symmetry}
		X^1_1=T_1(X_1)=B_1^{-1}(X_1-\bm{\mu}_1)\ \text{such that} \ X^1_1\sim \mathcal{N}(0,I)\\
		X^2_2=T_2(X_2)=B_2^{-1}(X_2-\bm{\mu}_2)\ \text{such that} \ X^2_2\sim \mathcal{N}(0,I)
	\end{align}
	and the reverse transformations $T^{-1}_1$, $T^{-1}_2$ such that 
	$X_1=T^{-1}_1(X^1_1)=B_1 X^1_1+\bm{\mu}_1$ and $X_2=T^{-1}_2(X^2_2)=B_2 X^2_1 + \bm{\mu}_2$,
	where $p_{X_1^1}=p_{X_2^2}=\mathcal{N}(0,I)$. Besides, it is easy to know $	X^2_1=T_2(X_1)=B_2^{-1}(X_1-\bm{\mu}_2)$ and $X^1_2=T_1(X_2)=B_1^{-1}(X_2-\bm{\mu}_1)$
	are both Gaussian variables.
	We also have 
	\begin{align}
		X^2_1\sim \mathcal{N}(B_2^{-1}(\bm{\mu}_1-\bm{\mu}_2), B_2^{-1}\bm{\Sigma}_1(B_2^{-1})^\top)\\
		X^1_2\sim \mathcal{N}(B_1^{-1}(\bm{\mu}_2-\bm{\mu}_1), B_1^{-1}\bm{\Sigma}_2(B_1^{-1})^\top)
	\end{align}
	
	With the help of invertible linear transformations, we can convert the KL divergence between two arbitrary Gaussians into that between one Gaussian and standard Gaussian.
	According to Proposition \ref{thm:preserve_KL}, diffeomorphisms preserve KL divergence.
	If we apply $T_2$ simultaneously on $X_1$, $X_2$, we can have
	\begin{align}
		KL(\mathcal{N}(\bm{\mu}_1,\bm{\Sigma}_1)||\mathcal{N}(\bm{\mu}_2,\bm{\Sigma}_2))	= & KL(p_{X_1^2}||p_{X^2_2}) \nonumber \\
		= & KL(p_{X_1^2}||\mathcal{N}(0,I))
	\end{align}
	Then we can apply $T_2^{-1}$ on $X_1^2$, $X_2^2$ and also have
	\begin{align}
		KL(\mathcal{N}(0,I)||p_{X_1^2}) = & KL(p_{X^2_2}||p_{X_1^2}) \\
		=& KL(\mathcal{N}(\bm{\mu}_2,\bm{\Sigma}_2)||\mathcal{N}(\bm{\mu}_1,\bm{\Sigma}_1))
	\end{align}
	According to precondition, it is easy to know $KL(\mathcal{N}(\bm{\mu}_1,\bm{\Sigma}_1)||\mathcal{N}(\bm{\mu}_2,\bm{\Sigma}_2)) 
	= KL(p_{X_1^2}||\mathcal{N}(0,I))$.
	Applying Theorem \ref{thm:forward_to_backward_KL_small_indendent_n} on $KL(p_{X_1^2}||\mathcal{N}(0,I))$, we can prove  
	\begin{equation}	
		\begin{aligned}\label{equ:small_KL_general_bound_1}
			&KL(\mathcal{N}(0,I)||p_{X_1^2})
			= KL(\mathcal{N}(\bm{\mu}_2,\bm{\Sigma}_2)||\mathcal{N}(\bm{\mu}_1,\bm{\Sigma}_1))\nonumber \\
			\leq & \dfrac{1}{2}\left(
			\dfrac{1}{-W_{0}(-e^{-(1+2\varepsilon)})}-\log \dfrac{1}{-W_{0}(-e^{-(1+2\varepsilon)})} -1 \right) 
		\end{aligned}
	\end{equation}
	Similarly, if we use $T_1$ simultaneously on $X_1$ and $X_2$, we can get the same result.

	Inequality \eqref{equ:small_KL_general_bound_1} is tight when there exists only one eigenvalue $\lambda_j$ of $B_2^{-1}\bm{\Sigma}_1(B_2^{-1})^\top$ or $B_1^{-1}\bm{\Sigma}_2(B_1^{-1})^\top$ is equal to $-W_{0}(-e^{-(1+2\varepsilon)})$ and all other eigenvalues $\lambda_i$ ($i\neq j$) are equal to $1$, and $\bm{\mu}_1=\bm{\mu}_2$.
	
	$\hfill\square$ 
\end{proof}

\section{Proof of Theorem \ref{thm:symmetry_KL_simplify}} \label{sec:appendix_proof_symmetry_kl_simplify}
\begin{proof}
	
	When $\varepsilon$ is small, we can use the series expanding $W_0$ (see Section III.17 in \cite{2016PrincetonCompanion}) to simplify the bound in Theorem \ref{thm:duality_small_KL_general}.
	
	Notice that when $\varepsilon$ is small, $-W_0(-e^{-(1+2\varepsilon)})$ is close to 1. According to the series expanding $W_0$ (see Section III.17 in \cite{2016PrincetonCompanion}), we have
	\begin{align}\label{equ:expand_W0}
		W_0(-e^{-(1+2\varepsilon)})=-1+2\sqrt{\varepsilon}-\dfrac{4}{3}\varepsilon+\dfrac{2}{9}\varepsilon^{1.5}+O(\varepsilon^{2}) 
	\end{align} 
	Now expand the $\log$ term around $-W_0(-e^{-(1+2\varepsilon)})=1$ using Taylor series for small $\varepsilon$.
	\begin{align}
		& \log (-W_0(-e^{-(1+2\varepsilon)}))\nonumber\\
		= & \log (1-W_0(-e^{-(1+2\varepsilon)})-1) \nonumber\\
		= & -W_0(-e^{-(1+2\varepsilon)})-1-\dfrac{1}{2}\left(-W_0(-e^{-(1+2\varepsilon)})-1\right)^2 \nonumber\\
		&+ \dfrac{1}{3}\left(-W_0(-e^{-(1+2\varepsilon)})-1\right)^3 + O\left((-W_0(-e^{-(1+2\varepsilon)})-1)^4\right)  \nonumber\\
		= & -2\sqrt{\varepsilon}+\dfrac{4}{3}\varepsilon-\dfrac{2}{9}\varepsilon^{1.5}+O(\varepsilon^{2})\nonumber\\
		&-\dfrac{1}{2}\left(-2\sqrt{\varepsilon}+\dfrac{4}{3}\varepsilon-\dfrac{2}{9}\varepsilon^{1.5}+O(\varepsilon^{2})\right)^2 \nonumber \\
		& + \dfrac{1}{3} \left(-2\sqrt{\varepsilon}+\dfrac{4}{3}\varepsilon-\dfrac{2}{9}\varepsilon^{1.5}+O(\varepsilon^{2})\right)^3  + O(\varepsilon^{2})  \\
		=&  -2\sqrt{\varepsilon}-\dfrac{2}{3}\varepsilon-\dfrac{2}{9}\varepsilon^{1.5} + O(\varepsilon^2) \label{equ:symmetry_bound_simplify_logpart}
	\end{align} 
	
	Plugging Equation \eqref{equ:expand_W0} and \eqref{equ:symmetry_bound_simplify_logpart} into the bound in Theorem \ref{thm:duality_small_KL_general}, we can have 
	\begin{align}
		& KL(\mathcal{N}(\bm{\mu}_2,\bm{\Sigma}_2)||\mathcal{N}(\bm{\mu}_1,\bm{\Sigma}_1)) \nonumber\\ 
		\leq &\dfrac{1}{2}\left(
		\dfrac{1}{1-2\sqrt{\varepsilon}+\dfrac{4}{3}\varepsilon-\dfrac{2}{9}\varepsilon^{1.5}+O(\varepsilon^{2})} \right. \nonumber\\
		&\left. \vphantom{\dfrac{1}{1-2\sqrt{\varepsilon}+\dfrac{4}{3}\varepsilon-\dfrac{2}{9}\varepsilon^{1.5}+O(\varepsilon^{2})}} + \left(-2\sqrt{\varepsilon}-\dfrac{2}{3}\varepsilon-\dfrac{2}{9}\varepsilon^{1.5} + O(\varepsilon^2)\right) -1 \right) \nonumber\\
		= &  \dfrac{1}{2}\dfrac{2\varepsilon-\dfrac{4}{3}\varepsilon^{1.5}+O(\varepsilon^2)}{1-2\sqrt{\varepsilon}+\dfrac{4}{3}\varepsilon-\dfrac{2}{9}\varepsilon^{1.5}+O(\varepsilon^{2})}  \nonumber\\
		= & \varepsilon + \dfrac{2\varepsilon^{1.5}+O(\varepsilon^2)}{1-2\sqrt{\varepsilon}+\dfrac{4}{3}\varepsilon-\dfrac{2}{9}\varepsilon^{1.5}+O(\varepsilon^{2})}  \nonumber\\
		= & \varepsilon + 2\varepsilon^{1.5} + O(\varepsilon^2) \label{equ:simplify_symmetry_bould}
	\end{align}
	
	$\hfill\square$ 
\end{proof}

\section{The first proof of Theorem \ref{thm:duality_big_KL_general}}\label{app:big_KL_general_proof1}


Theorem \ref{thm:duality_big_KL_general} can be proved using the similar method as that of Theorem \ref{thm:duality_small_KL_general}, except that the proof uses $W_{-1}$.
We put the key steps of the proof of Theorem \ref{thm:duality_big_KL_general} in Lemma \ref{thm:lemma_inf_fx_inv_nd} and Lemma \ref{thm:duality_big_KL}.
\begin{lemma}\label{thm:lemma_inf_fx_inv_nd}
	Given $n$-ary function $\bar{\bm{f}}(\bm{x})=\bar{\bm{f}}(x_1,\dots,x_n)=\sum_{i=1}^{n}x_i-\log x_i	\ (x_i\in \mathbb{R}^{++})$, if
	$\bar{\bm{f}}(x_1,\dots,x_n)\geq n+M (M>0)$, then 
	\begin{align}\label{equ:n_ary_f_inverse_inf}			
		&\inf \bar{\bm{f}}\big(\dfrac{1}{x_1},\dots,\dfrac{1}{x_n}\big)\nonumber\\
		= &\dfrac{1}{-W_{-1}(-e^{-(1+M)})}-\log \dfrac{1}{-W_{-1}(-e^{-(1+M)})} +n-1\nonumber
	\end{align}
\end{lemma}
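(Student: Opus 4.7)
The plan is to mirror the structure of the proof of Lemma \ref{thm:lemma_sup_fx_invs_nd}, dualizing every step by swapping the role of $w_1$ with $w_2$, swapping $W_0$ with $W_{-1}$, swapping $\sup$ with $\inf$, and replacing the convexity argument with a concavity argument. First, since $f(x)\geq 1$, the constraint $\bar{\bm f}(x_1,\dots,x_n)\geq n+M$ can be rewritten via auxiliary variables $M_i\geq 0$ with $\sum_i M_i\geq M$ and $f(x_i)\geq 1+M_i$ for each $i$. For fixed $M_i$, Lemma \ref{prp:inf_f_x_invers_1_d} gives the per-coordinate infimum $I(M_i)=\inf_{f(x)\geq 1+M_i} f(1/x)=f(1/w_2(M_i))$. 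Thus I would define
\begin{align}
\bar{\bm I}(M_1,\dots,M_n)=\sum_{i=1}^{n} f\big(1/w_2(M_i)\big)
\end{align}
and reduce the problem to $\inf_{M_i\geq 0,\ \sum M_i\geq M}\bar{\bm I}$. Because $w_2$ is strictly increasing and $f(1/y)$ is strictly increasing in $y>1$, each $I(M_i)$ is strictly increasing, so the constraint can be tightened to $\sum_i M_i=M$ without loss.

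The key technical step, corresponding to the ``concentrating $\varepsilon$'' step in Lemma \ref{thm:lemma_sup_fx_invs_nd}, is to show that the infimum is attained when all of $M$ is placed in a single coordinate. Define $J(M)=f(1/w_2(M))-1$, so $J(0)=0$ (since $w_2(0)=1$). Using Lemma \ref{thm:deriv_w1_w2} together with $\tfrac{\dif}{\dif y}f(1/y)=-\tfrac{1}{y^2}+\tfrac{1}{y}$, a short computation gives
\begin{align}
J'(M)=\frac{1}{w_2(M)},\qquad J''(M)=-\frac{w_2'(M)}{w_2(M)^2}=-\frac{1}{w_2(M)\big(w_2(M)-1\big)}<0
\end{align}
for $M>0$, so $J$ is strictly concave on $(0,+\infty)$. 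A concave function vanishing at $0$ is subadditive: for any $M_1,\dots,M_n\geq 0$ with $\sum M_i=M$,
\begin{align}
J(M_i)=J\!\left(\tfrac{M_i}{M} M+\tfrac{M-M_i}{M}\cdot 0\right)\geq \tfrac{M_i}{M}J(M),
\end{align}
so $\sum_i J(M_i)\geq J(M)$, with equality precisely when one $M_j=M$ and the rest vanish. Translating back, $\bar{\bm I}(M_1,\dots,M_n)=\sum_i (J(M_i)+1)\geq J(M)+n=f(1/w_2(M))+n-1$, and this lower bound is attained at the ``extreme allocation''.

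Finally, substituting $w_2(M)=-W_{-1}(-e^{-(1+M)})$ from Lemma \ref{thm:solution_f_x} yields the claimed expression. I expect the main obstacle to be verifying the concavity direction cleanly; in the supremum case one had convexity of the analogous $\Delta(\varepsilon)=f(1/w_1)-f(w_1)$, and here the sign of the second derivative flips because $w_2(M)>1$ instead of $w_1(\varepsilon)<1$. The derivative computation should be spelled out carefully using the identity $w_2'(t)=w_2(t)/(w_2(t)-1)$ from Lemma \ref{thm:deriv_w1_w2} so that the cancellation producing the elegant form $J'(M)=1/w_2(M)$ is visible; the rest of the argument (subadditivity, tightness, back-substitution of the Lambert $W$ identity) then proceeds mechanically.
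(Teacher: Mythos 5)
Your proposal is correct and takes essentially the same route as the paper's proof: the same reduction to an allocation $(M_1,\dots,M_n)$ with $\sum_i M_i=M$, the same use of Lemma \ref{prp:inf_f_x_invers_1_d} for the per-coordinate infimum $f(1/w_2(M_i))$, the same concentration-of-budget step, and the same Lambert-$W$ substitution; the only difference is that you analyze $J(M)=f(1/w_2(M))-1$ (concave, $J(0)=0$, superadditive) where the paper analyzes $\Delta(M)=f(w_2(M))-f(1/w_2(M))$ (convex, subadditive), and since $J(M)=M-\Delta(M)$ these are equivalent bookkeeping. Your derivative computations $J'(M)=1/w_2(M)$ and $J''(M)=-1/\bigl(w_2(M)(w_2(M)-1)\bigr)$ are correct, and the extreme allocation indeed attains the bound.
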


\begin{proof}
	
	The structure of proof of Lemma \ref{thm:lemma_inf_fx_inv_nd} is similar to that of Lemma \ref{thm:lemma_sup_fx_invs_nd}.
	The  constraint in the following optimization problem 
	\begin{align}
		\text{minimize}\ & \bar{\bm{f}}\big(\dfrac{1}{x_1},\dots,\dfrac{1}{x_n}\big)\\
		\text{\textit{s.t.}}\ &   \sum_{i=1}^{n}x_i-\log x_i\geq n+M
	\end{align}
	%
	%
	can be replaced by the following constraints
	\begin{gather}
		\bigwedge_{i=1}^{n}f(x_i)=x_i-\log x_i\geq 1+M_i \wedge
		\bigwedge_{i=1}^{n}M_i\geq 0 \wedge \sum_{i=1}^{n}M_i\geq M
	\end{gather}
	Given fixed $M_1,\dots,M_n$ such that $\bigwedge_{i=1}^{n}M_i\geq 0 \wedge \sum_{i=1}^{n}M_i\geq M$, we define 
	\begin{align}\label{equ:inf_all_d_factor_multi_inf}
		\bar{\bm{I}}(M_1,\dots,M_n)
		=&\inf\limits_{\bigwedge_{i=1}^{n}f(x_i)\geq 1+M_i} \bar{\bm{f}}\big(\dfrac{1}{x_1},\dots,\dfrac{1}{x_n}\big)\nonumber\\
		=&\sum_{i=1}^{n} \inf\limits_{f(x_i)\geq 1+M_i} f\big(\dfrac{1}{x_i}\big)
		= \sum_{i=1}^{n}I(M_i)
	\end{align}
	So we have 
	\begin{align}\label{equ:inf_factor}
		\inf \bar{\bm{f}}\big(\dfrac{1}{x_1},\dots,\dfrac{1}{x_n}\big) = \inf\limits_{\bigwedge_{i=1}^{n}M_i\geq 0 \atop \sum_{i=1}^{n}M_i\geq M} \bar{\bm{I}}(M_1,\dots,M_n)
	\end{align}	
	It is easy to know that $\bar{\bm{I}}(M_1,\dots,M_n)$ is continuous and strictly increasing with $M_1,\dots,M_n$. So the condition $\sum_{i=1}^{n}M_i\geq M$ in Equation \eqref{equ:inf_factor} can be changed to $\sum_{i=1}^{n}M_i=M$.
	
	The remaining proof consists of two steps.  We find $\bar{\bm{I}}(M_1,\dots,M_n)$ for fixed $M_1,\dots,M_n$ in the first step, and then find $\inf \bar{\bm{I}}(M_1,\dots,M_n)$ for any  $M_1,\dots,M_n$ satisfying $\bigwedge_{i=1}^{n}M_i\geq 0 \wedge \sum_{i=1}^{n}M_i= M$ in the second step.
	
	\textbf{Step 1}: 
	According to Lemma \ref{prp:inf_f_x_invers_1_d}, for fixed $M_i$, we get 
	\begin{align}
		I(M_i)=\inf\limits_{f(x)\geq1+M_i} f\big(\dfrac{1}{x}\big)=f\big(\dfrac{1}{w_2(M_i)}\big)
	\end{align}
	Combining Equation \eqref{equ:inf_all_d_factor_multi_inf}, we know 
	\begin{gather}\label{equ:def_infi_fixed_Mi}
		\bar{\bm{I}}(M_1,\dots,M_n) = \sum_{i=1}^{n}f\big(\dfrac{1}{w_2(M_i)}\big)
	\end{gather}
	
	\textbf{Step 2}: We define function 
	\begin{align}\label{equ:def_delta_M}
		\Delta(M)  = & f(w_2(M))-f\big(\dfrac{1}{w_2(M)}\big)\nonumber\\
		= & w_2(M)-\dfrac{1}{w_2(M)}-2\log w_2(M)
	\end{align}
	Similarly, we can prove $\Delta(tM)\leq t\Delta(M)\ (0\leq t<1)$
	by showing $\Delta(0)=0$ (apparently) and $\Delta(M)$ is strictly increasing and strictly  convex.
	Combining Lemma \ref{thm:deriv_w1_w2}, we get the derivative of $\Delta(M)$ as
	\begin{align}
		\dfrac{\dif \Delta(M)}{\dif M}
		= \left(1+\dfrac{1}{w_2(M)^2}-\dfrac{2}{w_2(M)}\right)\times \dfrac{\dif w_2(M)}{\dif M}
		= 1-\dfrac{1}{w_2(M)}
	\end{align}
	The second order derivative of $\Delta(M)$ is 
	\begin{align}
		\dfrac{\dif^2 \Delta(M)}{\dif M^2}
		=  \dfrac{1}{w_2(M)^2}\times \dfrac{w_2(M)}{w_2(M)-1}
		=  \dfrac{1}{w_2(M)(w_2(M)-1)}
	\end{align}
	Since $w_2(M)\in (1,+\infty)$ for $M>0$, so it is easy to know $\frac{\dif \Delta(M)}{\dif M}>0, \frac{\dif^2 \Delta(M)}{\dif M^2}>0$ for $M>0$.
	This implies that $\Delta(M)$ is strictly increasing and strictly  convex. 
	We can use the similar deduction as Lemma \ref{thm:lemma_sup_fx_invs_nd} to prove $\Delta(tM)\leq t\Delta(M)$.		Thus, we have 
	\begin{align}
		\small
		\bar{\bm{\Delta}}(M_1,\dots,M_n)=&\sum_{i=1}^{n}f(w_2(M_i))-f\big(\dfrac{1}{w_2(M_i)}\big)
		= \sum_{i=1}^{n}\Delta(M_i) \nonumber\\
		= &\sum_{i=1}^{n}\Delta(\dfrac{M_i}{M}M)
		\leq  \sum_{i=1}^{n}\dfrac{M_i}{M}\Delta(M)
		= \Delta(M) \label{equ:sup_Delta_M}
	\end{align}
	Inequality \eqref{equ:sup_Delta_M} is tight when there exists only one $M_j=M$ and all other $M_i=0$ for $i\neq j$.
	Therefore, from Inequality \eqref{equ:sup_Delta_M}, we can obtain
	\begin{DispWithArrows}
		&\bar{\bm{I}}(M_1,\dots,M_n)\Arrow{ \eqref{equ:def_infi_fixed_Mi}}	\nonumber\\
		= & \sum_{i=1}^{n}f\big(\dfrac{1}{w_2(M_i)}\big) \Arrow{\eqref{equ:sup_Delta_M}} \nonumber \\
		\geq &  \sum_{i=1}^{n}f(w_2(M_i))-\Delta(M) \Arrow{$\overset{\text{Lemma \ref{thm:solution_f_x}}}{\text{\eqref{equ:def_delta_M}}}$} \nonumber \\
		= & \sum_{i=1}^{n}(1+M_i) -\big(f(w_2(M))-f\big(\dfrac{1}{w_2(M)}\big)\big)  \nonumber \\
		= &  n+M-(1+M)+f\big(\dfrac{1}{w_2(M)}\big) \nonumber\\
		= & \dfrac{1}{-W_{-1}(-e^{-(1+M)})}-\log \dfrac{1}{-W_{-1}(-e^{-(1+M)})} \nonumber\\
		&+n-1
	\end{DispWithArrows}
	Finally, we can conclude that 
	\begin{align}
		&\inf \bar{\bm{f}}\big(\dfrac{1}{x_1},\dots,\dfrac{1}{x_n}\big)
		= \inf\limits_{\bigwedge_{i=1}^{n}M_i\geq 0 \atop \sum_{i=1}^{n}M_i= M} \bar{\bm{I}}(M_1,\dots,M_n)  \nonumber\\
		= &\dfrac{1}{-W_{-1}(-e^{-(1+M)})}-\log \dfrac{1}{-W_{-1}(-e^{-(1+M)})} +n-1	
	\end{align}
	Similarly, $\bar{\bm{f}}(1/x_1,\dots,1/x_n)$ reaches its infimum when there exists only one $j$ such that $x_j=-W_{-1}(-e^{-(1+M)})$ and $f(x_i)=1$ for $i\neq j$. 
	$\hfill\square$ 
\end{proof}	

The following Lemma \ref{thm:duality_big_KL}  gives the infimum of KL divergence when one Gaussian is standard. 
\begin{lemma}\label{thm:duality_big_KL}
	For any  $n$-dimensional Gaussian distribution $\mathcal{N}(\bm{\mu},\bm{\Sigma})$,
	\begin{enumerate}[label=(\alph*), ref=\ref{thm:duality_small_KL}\alph*]
		\setlength{\itemsep}{0pt}
		\setlength{\parskip}{0pt}
		
		\item
		\label{thm:forward_to_backward_KL_large_indendent_n}
		If $KL(\mathcal{N}(\bm{\mu},\bm{\Sigma})||\mathcal{N}(0,I))\geq  M (M>0)$, then
		\begin{equation}
			\begin{aligned}
				&KL(\mathcal{N}(0,I)||\mathcal{N}(\bm{\mu},\bm{\Sigma}))\nonumber\\
				\geq & \dfrac{1}{2}\left(\dfrac{1}{-W_{-1}(-e^{-(1+2M)})}-\log \dfrac{1}{-W_{-1}(-e^{-(1+2M)})} -1\right)
			\end{aligned}
		\end{equation}
		
		\item \label{thm:backward_to_forward_KL_large_indendent_n}
		If $KL(\mathcal{N}(0,I)||\mathcal{N}(\bm{\mu},\bm{\Sigma}))\geq  M$, then
		\begin{equation}
			\begin{aligned}
				&KL(\mathcal{N}(\bm{\mu},\bm{\Sigma})||\mathcal{N}(0,I))\nonumber\\
				\geq & \dfrac{1}{2}\left(\dfrac{1}{-W_{-1}(-e^{-(1+2M)})}-\log \dfrac{1}{-W_{-1}(-e^{-(1+2M)})} -1\right)
			\end{aligned}
		\end{equation}
		%
	\end{enumerate}
\end{lemma}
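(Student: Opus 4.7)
\section*{Proof proposal for Lemma \ref{thm:duality_big_KL}}

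The plan is to mirror the structure of the proof of Lemma \ref{thm:duality_small_KL} but replace Lemma \ref{thm:lemma_sup_fx_invs_nd} with its dual Lemma \ref{thm:lemma_inf_fx_inv_nd}, and replace \emph{suprema} and $w_1$ with \emph{infima} and $w_2$. I give the plan for part (a); part (b) will follow by the same argument, exchanging the roles of $\bm{\Sigma}$ and $\bm{\Sigma}^{-1}$ (equivalently $\lambda_i$ and $\lambda_i'=1/\lambda_i$) throughout.

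First, using the eigendecomposition $\bm{\Sigma}=PDP^\top$ exactly as in Equations \eqref{equ:det_to_multi}--\eqref{equ:tr_sigma}, the hypothesis $KL(\mathcal{N}(\bm{\mu},\bm{\Sigma})\,\|\,\mathcal{N}(0,I))\geq M$ becomes
\begin{equation*}
\Bigl(-\log|\bm{\Sigma}|+\mathop{\mathrm{Tr}}(\bm{\Sigma})-n\Bigr)+\bm{\mu}^\top\bm{\mu}\ \geq\ 2M.
\end{equation*}
Since both parenthesized terms are nonnegative, I introduce a split parameter $M_1\in[0,2M]$ and write $\sum_i f(\lambda_i)\geq n+M_1$ together with $\bm{\mu}^\top\bm{\mu}\geq 2M-M_1$; taking the infimum over all feasible $\bm{\mu},\bm{\Sigma}$ amounts to taking the infimum over $M_1$ of the tightest bound available for each fixed split.

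For fixed $M_1$, I would apply Lemma \ref{thm:lemma_inf_fx_inv_nd} directly to $\sum_i f(\lambda_i)\geq n+M_1$ to get
\begin{equation*}
\log|\bm{\Sigma}|+\mathop{\mathrm{Tr}}(\bm{\Sigma}^{-1})\ =\ \sum_{i=1}^n f(1/\lambda_i)\ \geq\ \tfrac{1}{w_2(M_1)}-\log\tfrac{1}{w_2(M_1)}+n-1.
\end{equation*}
For the mean quadratic, Lemma \ref{prp:sup_f_x_invers_1_d} applied to each $f(\lambda_i)\leq 1+M_1$ gives $\lambda_i\leq w_2(M_1)$, hence the largest eigenvalue of $\bm{\Sigma}$ is at most $w_2(M_1)$, so $\bm{\mu}^\top\bm{\Sigma}^{-1}\bm{\mu}\geq \bm{\mu}^\top\bm{\mu}/w_2(M_1)\geq (2M-M_1)/w_2(M_1)$. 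Importantly, both these bounds are simultaneously tight at the configuration where exactly one eigenvalue equals $w_2(M_1)$, the rest equal $1$, and $\bm{\mu}$ is aligned with the corresponding eigenvector with $\bm{\mu}^\top\bm{\mu}=2M-M_1$. Summing and using the definition of $KL$, I obtain a lower bound $L(M_1)=\tfrac12\bigl(\tfrac{1+2M-M_1}{w_2(M_1)}+\log w_2(M_1)-1\bigr)$.

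It remains to minimize $L(M_1)$ over $M_1\in[0,2M]$ (and also consider $M_1\geq 2M$ with $M_2=0$, trivially). Using $w_2'(M_1)=w_2(M_1)/(w_2(M_1)-1)$ from Lemma \ref{thm:deriv_w1_w2}, a short calculation gives $L'(M_1)=(M_1-2M)/\bigl(2\,w_2(M_1)(w_2(M_1)-1)\bigr)$, which is nonpositive on $[0,2M]$ (and nonnegative beyond); so the minimum is attained at $M_1=2M$, $M_2=0$, yielding exactly the stated bound $\tfrac12\bigl(\tfrac{1}{-W_{-1}(-e^{-(1+2M)})}-\log\tfrac{1}{-W_{-1}(-e^{-(1+2M)})}-1\bigr)$, and characterizing the attainment conditions via $\bm{\mu}=0$ together with a single eigenvalue $\lambda_j=-W_{-1}(-e^{-(1+2M)})$ and $\lambda_i=1$ otherwise. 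The main obstacle is the derivative computation and the verification that the two inequalities ($\sum f(1/\lambda_i)$ and the mean term) can be made tight simultaneously at the same eigenvalue configuration; once that is in place, the concentration argument for $M_1$ is routine, and part (b) is obtained by the symmetric argument applied to the eigenvalues $\lambda_i'$ of $\bm{\Sigma}^{-1}$, with the mean term bounded by $\bm{\mu}^\top\bm{\mu}\geq \bm{\mu}^\top\bm{\Sigma}^{-1}\bm{\mu}/\lambda'^{\,*}\geq(2M-M_1)/w_2(M_1)$.
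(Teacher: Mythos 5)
Your overall route is the same as the paper's: split the hypothesis with a parameter $M_1$, apply Lemma \ref{thm:lemma_inf_fx_inv_nd} to the $\log|\bm{\Sigma}|+\mathop{\mathrm{Tr}}(\bm{\Sigma}^{-1})$ part, bound the mean term through the largest eigenvalue, and minimize $L(M_1)$ over $[0,2M]$ with the same derivative computation. However, there is a genuine gap in the mean-term step. You split the hypothesis as the two \emph{inequalities} $\sum_i f(\lambda_i)\ge n+M_1$ and $\bm{\mu}^\top\bm{\mu}\ge 2M-M_1$, and then invoke Lemma \ref{prp:sup_f_x_invers_1_d} ``applied to each $f(\lambda_i)\le 1+M_1$''. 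Under your split no such upper bound is available: a lower bound on the sum $\sum_i f(\lambda_i)$ gives no upper bound on any individual $f(\lambda_i)$, hence no bound $\lambda^*\le w_2(M_1)$ on the largest eigenvalue of $\bm{\Sigma}$. Concretely, on your feasible set one may take a single eigenvalue arbitrarily large, the rest equal to $1$, and $\bm{\mu}^\top\bm{\mu}=2M-M_1$; then $\bm{\mu}^\top\bm{\Sigma}^{-1}\bm{\mu}$ is far below $(2M-M_1)/w_2(M_1)$, so the intermediate bound $L(M_1)$ is false on that set. (The final statement is rescued only because $\log|\bm{\Sigma}|+\mathop{\mathrm{Tr}}(\bm{\Sigma}^{-1})$ then blows up, a trade-off your argument never makes.)

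The paper avoids this by first treating the equality case $KL(\mathcal{N}(\bm{\mu},\bm{\Sigma})||\mathcal{N}(0,I))=M$ and splitting with \emph{equalities}, $\sum_i f(\lambda_i)=n+M_1$ and $\bm{\mu}^\top\bm{\mu}=2M-M_1$: since $f\ge 1$ (Lemma \ref{equ:f_convex_minimum}), the sum equality does force $f(\lambda_j)\le 1+M_1$ for every $j$, hence $\lambda^*\le w_2(M_1)$, and the two lower bounds are simultaneously valid (and tight at the one-spiked configuration); the case $KL\ge M$ is recovered only at the end from monotonicity of the final bound in $M$. You can repair your version in the same spirit: either reduce to the equality case first, or set $M_1=\min\{\sum_i f(\lambda_i)-n,\ 2M\}$ so that the first constraint holds with equality whenever $M_1<2M$, and handle $\sum_i f(\lambda_i)-n>2M$ separately, where the mean term may simply be dropped because Lemma \ref{thm:lemma_inf_fx_inv_nd} alone already yields at least $\tfrac{1}{2}\left(f(1/w_2(2M))-1\right)$, which is the stated bound. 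With that fix, the rest of your computation (the formula $L'(M_1)=(M_1-2M)/\bigl(2w_2(M_1)(w_2(M_1)-1)\bigr)$, the minimum at $M_1=2M$, and the symmetric treatment of part (b) in terms of the eigenvalues of $\bm{\Sigma}^{-1}$) coincides with the paper's argument.
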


\begin{proof}
	(a) We first consider the case when $KL(\mathcal{N}(\bm{\mu},\bm{\Sigma})||\mathcal{N}(0,I))=M$. At the end of the proof, we deal with the case when $KL(\mathcal{N}(\bm{\mu},\bm{\Sigma})||\mathcal{N}(0,I))\geq  M$.
	
	The condition $-\log|\bm{\Sigma}|+\mathop{\mathrm{Tr}}(\bm{\Sigma})+\bm{\mu}^{\top}\bm{\mu}-n= 2M$
	is equal to 
	\begin{align}
		-\log|\bm{\Sigma}|+\mathop{\mathrm{Tr}}(\bm{\Sigma})=\sum_{i=1}^{n}\lambda_i-\log \lambda_i= n+M_1\label{equ:item_1_large}\\ 
		\bm{\mu}^{\top}\bm{\mu}= 2M-M_1\label{equ:item_2_large}
	\end{align}
	where $0\leq M_1\leq 2M$.
	
	Applying Lemma \ref{thm:lemma_inf_fx_inv_nd} on Equation \eqref{equ:item_1_large},
	we can get
	\begin{equation}
		\begin{aligned}\label{equ:item_1_revers_inf}
			&\log|\bm{\Sigma}|+\mathop{\mathrm{Tr}}(\bm{\Sigma}^{-1}) 
			= \sum_{i=1}^{n}\dfrac{1}{\lambda_i}-\log \dfrac{1}{\lambda_i} \nonumber\\
			\geq & \dfrac{1}{-W_{-1}(-e^{-(1+M_1)})}-\log \dfrac{1}{-W_{-1}(-e^{-(1+M_1)})} +n-1
		\end{aligned}
	\end{equation}
	Inequality \eqref{equ:item_1_revers_inf} is tight when all eigenvalues $\lambda_i$ of $\bm{\Sigma}$ are equal to 1 except for one $\lambda_j=-W_{-1}(-e^{-(1+M_1)})$.
	
	From Equation \eqref{equ:item_2_large}, we know $\bm{\mu}^{\top}\bm{\Sigma}^{-1}\bm{\mu}\geq \lambda_*'\bm{\mu}^{\top}\bm{\mu}= \lambda_*'(2M-M_1)$ where $\lambda'_*$ is the smallest eigenvalue of $\bm{\Sigma}^{-1}$.
	Here $\lambda^*=1/\lambda'_*$ is the largest eigenvalue of $\bm{\Sigma}$. 
	From Equation \eqref{equ:item_1_large}, Lemma \ref{equ:f_convex_minimum} and \ref{prp:sup_f_x_invers_1_d}, we know $\lambda^*\leq -W_{-1}(-e^{-(1+M_1)})$.	So we obtain
	\begin{align}\label{equ:item2_inf}
		\bm{\mu}^{\top}\bm{\Sigma}^{-1}\bm{\mu}\geq \dfrac{2M-M_1}{-W_{-1}(-e^{-(1+M_1)})}
	\end{align}
	
	Note that, inequalities \eqref{equ:item2_inf} and \eqref{equ:item_1_revers_inf} become tight simultanously when the same condition holds. Now combining Equation \eqref{equ:item_1_revers_inf} and \eqref{equ:item2_inf}, we obtain
	\begin{align} \label{equ:inf_on_hM}
		&\log|\bm{\Sigma}|+\mathop{\mathrm{Tr}}(\bm{\Sigma}^{-1})+\bm{\mu}^{\top}\bm{\Sigma}^{-1}\bm{\mu}-n\nonumber \\
		\geq &  \dfrac{1}{-W_{-1}(-e^{-(1+M_1)})}-\log \dfrac{1}{-W_{-1}(-e^{-(1+M_1)})}\nonumber\\
		&+  \dfrac{2M-M_1}{-W_{-1}(-e^{-(1+M_1)})} -1 \nonumber\\
		= & \dfrac{1+2M-M_1}{w_2(M_1)}-\log \dfrac{1}{w_2(M_1)}-1
		=  L(M_1)\ (0\leq M_1 \leq 2M) 
	\end{align}
	It is easy to know that $L'(M_1)=\frac{M_1-2M}{w_2(M_1)(w_2(M_1)-1)}$.
	Since $M_1\leq 2M$ and $w_2(M_1)>1$ for $M_1>0$, so $L'(M_1)<0\ (M_1>0)$.
	This indicates that $L(M_1)>L(2M)$ for $0<M_1<2M$. Thus, we can conclude
	\begin{equation} \label{equ:inf_bKL}
		\begin{aligned}
			& KL(\mathcal{N}(0,I)||\mathcal{N}(\bm{\mu},\bm{\Sigma}))\nonumber\\
			\geq &\dfrac{1}{2}L(2M)\nonumber\\
			= 	 & \dfrac{1}{2}\left(\dfrac{1}{-W_{-1}(-e^{-(1+2M)})}-\log \dfrac{1}{-W_{-1}(-e^{-(1+2M)})}
			-1\right) 
		\end{aligned}
	\end{equation}
	
	Inequality \eqref{equ:inf_bKL} is tight when there exist only one eigenvalue $\lambda_j$ of $\bm{\Sigma}$  equal to $-W_{-1}(-e^{-(1+2M)})$ and all other eigenvalues $\lambda_i (i\neq j)$ are equal to 1, and $\bm{\mu}=0$.
	
	Finally, we can consider the case when $KL(\mathcal{N}(\bm{\mu},\bm{\Sigma})||\mathcal{N}(0,I))\geq  M$.
	The bound in Equation \eqref{equ:inf_bKL} is strictly increasing with $M$.
	Therefore, the precondition $KL(\mathcal{N}(\bm{\mu},\bm{\Sigma})||\mathcal{N}(0,I))=  M$ can be changed to $KL(\mathcal{N}(\bm{\mu},\bm{\Sigma})||\mathcal{N}(0,I))\geq  M$.

	(b) The proof of Lemma \ref{thm:backward_to_forward_KL_large_indendent_n} is the similar to that of Lemma \ref{thm:forward_to_backward_KL_large_indendent_n}. We list it here for clarity.
	
	The condition $KL(\mathcal{N}(0,I)||\mathcal{N}(\bm{\mu},\bm{\Sigma}))= M$ is equal to 
	\begin{align}
		\log|\bm{\Sigma}|+\mathop{\mathrm{Tr}}(\bm{\Sigma}^{-1})=n+M_1\label{equ:bKL_item1_large}\\
		\bm{\mu}^{\top}\bm{\Sigma}^{-1}\bm{\mu}=2M-M_1\label{equ:bKL_item2_large}
	\end{align}
	where $0\leq M_1\leq 2M$.
	Applying Lemma \ref{thm:lemma_inf_fx_inv_nd} on Equation \eqref{equ:bKL_item1_large}, we can obtain
	\begin{equation}
		\begin{aligned}\label{equ:bKL_item_1_revers_inf}
			&-\log|\bm{\Sigma}|+\mathop{\mathrm{Tr}}(\bm{\Sigma})\nonumber\\
			\geq & \dfrac{1}{-W_{-1}(-e^{-(1+M_1)})}-\log \dfrac{1}{-W_{-1}(-e^{-(1+M_1)})}
			+n-1
		\end{aligned}
	\end{equation}
	
	From Equation \eqref{equ:bKL_item1_large} and Lemma \ref{equ:f_convex_minimum} and \ref{prp:sup_f_x_invers_1_d},  we have $\lambda'\leq -W_{-1}(-e^{-(1+M_1)})$ where $\lambda'$ is the eigenvalue of $\bm{\Sigma}^{-1}$.
	Now let $\lambda'^*$ be the largest eigenvalues of $\bm{\Sigma}^{-1}$. It is easy to know
	\begin{align}\label{equ:bKL_large_item2_inf}
		&\lambda'^*\bm{\mu}^{\top}\bm{\mu}\geq \bm{\mu}^{\top}\bm{\Sigma}^{-1}\bm{\mu}=2M-M_1\nonumber\\
		\Longrightarrow & \bm{\mu}^{\top}\bm{\mu} \geq \dfrac{2M-M_1}{-W_{-1}(-e^{-(1+M_1)})}
	\end{align}
	Inequalities \eqref{equ:bKL_item_1_revers_inf} and \eqref{equ:bKL_large_item2_inf} are tight simutanously when there exist only one eigenvalue $\lambda'_j=-W_{-1}(-e^{-(1+M_1)})$ and all other eigenvalues are equal to 1, and $|\bm{\mu}|=0$. Therefore, combining Equation \eqref{equ:bKL_item_1_revers_inf} and  \eqref{equ:bKL_large_item2_inf}, we obtain
	\begin{align}
		&-\log|\bm{\Sigma}|+\mathop{\mathrm{Tr}}(\bm{\Sigma})+\bm{\mu}^{\top}\bm{\mu}-n\nonumber \\
		\geq &\dfrac{1}{-W_{-1}(-e^{-(1+M_1)})}-\log \dfrac{1}{-W_{-1}(-e^{-(1+M_1)})} \nonumber\\
		&+ \dfrac{2M-M_1}{-W_{-1}(-e^{-(1+M_1)})}-1 
	\end{align}
	Finally, using the similar analysis as Equation \eqref{equ:inf_bKL}, we can conclude that 
	\begin{equation} \label{equ:inf_fKL}
		\begin{aligned}
			&KL(\mathcal{N}(\bm{\mu},\bm{\Sigma})||\mathcal{N}(0,I))\nonumber\\
			\geq & \dfrac{1}{2}\left(\dfrac{1}{-W_{-1}(-e^{-(1+2M)})}-\log \dfrac{1}{-W_{-1}(-e^{-(1+2M)})} \right.
			\left. \vphantom{\dfrac{1}{-W_{-1}(-e^{-(1+2M)})}}-1\right) 
		\end{aligned}
	\end{equation}
	Similarly, the precondition $KL(\mathcal{N}(0,I)||\mathcal{N}(\bm{\mu},\bm{\Sigma}))= M$ can be changed to $KL(\mathcal{N}(0,I)||\mathcal{N}(\bm{\mu},\bm{\Sigma}))\geq  M$ because the bound in Equation \eqref{equ:inf_fKL} is strictly increasing with $M$.
	
	$\hfill\square$ 
\end{proof}

\textbf{Notes}.  
It needs strict conditions to reach the infimum in Lemma \ref{thm:duality_big_KL}. 

Now we can also obtain Theorem \ref{thm:duality_big_KL_general} on two general Gaussians. We can use linear transformation on Gaussians and apply Lemma \ref{thm:duality_big_KL} on them as what we do in the main proof of Theorem \ref{thm:duality_small_KL_general}. The key steps have been proven in Lemma \ref{thm:lemma_inf_fx_inv_nd} and \ref{thm:duality_big_KL}. More details are ommited.

\section{Proof of Lemma \ref{thm:w2_farther_from_1_than_w1}}\label{sec:app_proof_thm:w2_farther_from_1_than_w1}
\begin{proof}
	
	With the helper functions $f_l, f_r$ (Equation \eqref{equ:flfr}), we define function 
	\begin{align}
		\Delta_w(\varepsilon)=g_r(\varepsilon)-g_l(\varepsilon)=(w_2(\varepsilon)-1)-(1-w_1(\varepsilon))
	\end{align}
	It is straightforward to know 
	\begin{align}\label{equ:delta_w_e_0}
		\Delta_w(0)=w_2(0)-1-(1-w_1(0))=0
	\end{align}
	In the following, we prove $\Delta_w{'}(\varepsilon)>0$ for $\varepsilon>0$.
	Plugging Equation \eqref{equ:deriv_flr}, we have 
	\begin{align}
		g_r'(\varepsilon)=&f_r^{-1}{'}(\varepsilon) = \dfrac{1}{f_r'(f_r^{-1}(\varepsilon))}
		=\dfrac{1}{f_r'(w_2(\varepsilon)-1)} \nonumber\\
		= & \dfrac{1}{1-\dfrac{1}{w_2(\varepsilon)}}
		=  \dfrac{1}{f'(w_2(\varepsilon))}\\
		g_l'(\varepsilon)=&f_l^{-1}{'}(\varepsilon)= \dfrac{1}{f_l'(f_l^{-1}(\varepsilon))}
		=   \dfrac{1}{f_l'(1-w_1(\varepsilon))} \nonumber\\
		=&  \dfrac{1}{\dfrac{1}{w_1(\varepsilon)}-1}
		=  \dfrac{1}{-f'(w_1(\varepsilon))}
	\end{align}
	According to Lemma \ref{thm:lemma_1_d_fx} and Lemma \ref{thm:inequality_solution_fx}, $f(x)$ is strictly  decreasing in $(0,1)$ and $f(w_1(\varepsilon))>f(\frac{1}{w_2(\varepsilon)})$. So we can know $w_1(\varepsilon)<\frac{1}{w_2(\varepsilon)}$. Since $f(x)$ is convex and $f'(x)<0$ in $(0,1)$, we can know 
	$f'(w_1(\varepsilon))<f'(\frac{1}{w_2(\varepsilon)})$. 
	Now combining Lemma \ref{thm:deriv_w2_inv_w2_compare}, we can obtain  $0<f'(w_2(\varepsilon))\leq -f'(\frac{1}{w_2(\varepsilon)})< -f'(w_1(\varepsilon))\ (\varepsilon>0)$. 
	This leads to
	\begin{align}
		g_r'(\varepsilon) =\dfrac{1}{f'(w_2(\varepsilon))}>\dfrac{1}{-f'(w_1(\varepsilon))}=g_l'(\varepsilon)
	\end{align}
	for $\varepsilon>0$, which means $\Delta_w{'}(\varepsilon)=g_r'(\varepsilon)-g_l'(\varepsilon)>0\ (\varepsilon>0)$.
	Now combining Equation \eqref{equ:delta_w_e_0}, we can conclude 
	\begin{align}
		\Delta_w(\varepsilon)=g_r(\varepsilon)-g_l(\varepsilon)=(w_2(\varepsilon)-1)-(1-w_1(\varepsilon))\geq 0 
	\end{align}
	
	$\hfill\square$ 
\end{proof}

\section{Proof of Lemma \ref{thm:sup_fxy}}\label{sec:app_proof_thm:sup_fxy}
\begin{proof}
	
	From Lemma \ref{prp:sup_f_x_invers_1_d}, we know $w_1(\varepsilon_x)\leq x\leq w_2(\varepsilon_x)$ and  $w_1(\varepsilon_y)\leq y\leq w_2(\varepsilon_y)$. So we have  $w_1(\varepsilon_x)w_1(\varepsilon_x)\leq xy\leq w_2(\varepsilon_x)w_2(\varepsilon_y)$.
	According to Lemma \ref{equ:f_convex_minimum}, it suffices to show  $f(w_1(\varepsilon_x)w_1(\varepsilon_y))\leq f(w_2(\varepsilon_x)w_2(\varepsilon_y))$. By the definition of $f(x)$, we have
	\begin{align} 
		&f(w_2(\varepsilon_x)w_2(\varepsilon_y))-f(w_1(\varepsilon_x)w_1(\varepsilon_y)) \nonumber\\
		= & w_2(\varepsilon_x)w_2(\varepsilon_y)-\log (w_2(\varepsilon_x)w_2(\varepsilon_y))\nonumber\\
		&-(w_1(\varepsilon_x)w_1(\varepsilon_y)-\log (w_1(\varepsilon_x)w_1(\varepsilon_y)))  \nonumber\\
		= & w_2(\varepsilon_x)w_2(\varepsilon_y) -\log w_2(\varepsilon_x)-\log w_2(\varepsilon_y) \nonumber\\
		& - (w_1(\varepsilon_x)w_1(\varepsilon_y)- \log w_1(\varepsilon_x)-\log w_1(\varepsilon_y)) \nonumber\\
		= & w_2(\varepsilon_x)w_2(\varepsilon_y) -w_2(\varepsilon_x)+w_2(\varepsilon_x)-\log w_2(\varepsilon_x)\nonumber\\
		&-w_2(\varepsilon_y)+w_2(\varepsilon_y)-\log w_2(\varepsilon_y) \nonumber\\
		& - (w_1(\varepsilon_x)w_1(\varepsilon_y)-w_1(\varepsilon_x)+w_1(\varepsilon_x)- \log w_1(\varepsilon_x)\nonumber\\
		&-w_1(\varepsilon_y)+w_1(\varepsilon_y)-\log w_1(\varepsilon_y))\nonumber \\
		= & w_2(\varepsilon_x)w_2(\varepsilon_y)-w_2(\varepsilon_x)+\varepsilon_x-w_2(\varepsilon_y)+\varepsilon_y \nonumber\\ 
		& - (w_1(\varepsilon_x)w_1(\varepsilon_y)-w_1(\varepsilon_x)+\varepsilon_x-w_1(\varepsilon_y)+\varepsilon_y) \nonumber\\
		=& w_2(\varepsilon_x)w_2(\varepsilon_y)-w_2(\varepsilon_x)-w_2(\varepsilon_y)\nonumber\\
		&- (w_1(\varepsilon_x)w_1(\varepsilon_y)-w_1(\varepsilon_x)-w_1(\varepsilon_y)) \nonumber\\
		=& w_2(\varepsilon_x)w_2(\varepsilon_y)-w_2(\varepsilon_x)-w_2(\varepsilon_y)+1\nonumber\\
		&- (w_1(\varepsilon_x)w_1(\varepsilon_y)-w_1(\varepsilon_x)-w_1(\varepsilon_y)+1) \nonumber\\
		=& (w_2(\varepsilon_x)-1)(w_2(\varepsilon_y)-1)-(w_1(\varepsilon_x)-1)(w_1(\varepsilon_y)-1) \label{equ:fw2exw2ey_fw1exw1ey}
	\end{align}
	From Lemma \ref{thm:w2_farther_from_1_than_w1}, it is easy to know $w_2(\varepsilon_x)-1 \geq 1- w_1(\varepsilon_x)$ and $w_2(\varepsilon_y)-1 \geq 1- w_1(\varepsilon_y)$. Thus we can conclude 
	\begin{align}
		f(w_2(\varepsilon_x)w_2(\varepsilon_y))-f(w_1(\varepsilon_x)w_1(\varepsilon_y))\geq 0
	\end{align}
	
	$\hfill\square$ 
\end{proof}

\section{Proof of Theorem \ref{thm:triangle_n1_n2_n3}}\label{sec:appen_proof_thm:triangle_n1_n2_n3}
\begin{proof}
	
	For $X_2\sim \mathcal{N}(\bm{\mu}_2,\bm{\Sigma}_2)$, there exists an invertible matrix $B_2$ such that $X_2'=B_2^{-1}(X_2-\bm{\mu}_2)\sim \mathcal{N}(0,I)$ \cite{PRML}. Here $B_2=P_2D_2^{1/2}$, $P_2$ is an orthogonal matrix whose columns are the eigenvectors of $\bm{\Sigma}_2$, $D_2=diag(\lambda_{2,1},\dots,\lambda_{2,n})$ whose diagonal elements are the corresponding eigenvalues. 
	We define the following two invertible linear transformations $T$, $T^{-1}$ on random vectors.
	\begin{gather}
		X'=T(X)=B_2^{-1}(X-\bm{\mu}_2),\ 
		X=T^{-1}(X')=B_2 X'+\bm{\mu}_2
	\end{gather}
	Applying transformation $T$ on $X_1,X_2,X_3$, we can get three Gaussians.
	\begin{align}
		X_1'=T(X_1)\sim \mathcal{N}(\bm{\mu}'_1,\bm{\Sigma}'_1)\nonumber\\
		X_2'=T(X_2)\sim \mathcal{N}(0,I)\nonumber\\
		X_3'=T(X_3)\sim \mathcal{N}(\bm{\mu}'_3,\bm{\Sigma}'_3)\nonumber
	\end{align}
	According to Proposition \ref{thm:preserve_KL},  $T$ and $T^{-1}$ preserve KL divergence. Thus, we have
	\begin{align}
		&KL(\mathcal{N}(\bm{\mu}_1,\bm{\Sigma}_1)||\mathcal{N}(\bm{\mu}_2,\bm{\Sigma}_2))=KL(\mathcal{N}(\bm{\mu}'_1,\bm{\Sigma}'_1)||\mathcal{N}(0,I))\label{equ:kl_n1_n2_equals_n1'_n2'}\\
		&KL(\mathcal{N}(\bm{\mu}_2,\bm{\Sigma}_2)||\mathcal{N}(\bm{\mu}_3,\bm{\Sigma}_3))=KL(\mathcal{N}(0,I)||\mathcal{N}(\bm{\mu}'_3,\bm{\Sigma}'_3))\label{equ:kl_n2_n3_equals_n2'_n3'}\\
		&KL(\mathcal{N}(\bm{\mu}_1,\bm{\Sigma}_1)||\mathcal{N}(\bm{\mu}_3,\bm{\Sigma}_3))=KL(\mathcal{N}(\bm{\mu}'_1,\bm{\Sigma}'_1)||\mathcal{N}(\bm{\mu}'_3,\bm{\Sigma}'_3))\label{equ:kl_n1_n3_equals_n1'_n3'}
	\end{align}
	Combining the preconditions and Equations \eqref{equ:kl_n1_n2_equals_n1'_n2'}, \eqref{equ:kl_n2_n3_equals_n2'_n3'}, we can know 
	\begin{align}
		KL(\mathcal{N}(\bm{\mu}'_1,\bm{\Sigma}'_1)||\mathcal{N}(0,I))\leq \varepsilon_1,\ 
		KL(\mathcal{N}(0,I)||\mathcal{N}(\bm{\mu}'_2,\bm{\Sigma}'_2))\leq \varepsilon_2
	\end{align}
	Now we can apply Lemma \ref{thm:triangle_n1_n2_standard} on $\mathcal{N}(\bm{\mu}'_1,\bm{\Sigma}'_1)$, $\mathcal{N}(0,I)$ and $\mathcal{N}(\bm{\mu}'_3,\bm{\Sigma}'_3))$ and get the bound of $KL(\mathcal{N}(\bm{\mu}'_1,\bm{\Sigma}'_1)||\mathcal{N}(\bm{\mu}'_3,\bm{\Sigma}'_3))$. Finally, combining Equation \eqref{equ:kl_n1_n3_equals_n1'_n3'}, we can prove Theorem \ref{thm:triangle_n1_n2_n3}.
	
	$\hfill\square$ 
\end{proof}

\section{Proof of Theorem \ref{thm:triangle_n1_n2_n3_simplify}}\label{sec:proof_thm:triangle_n1_n2_n3_simplify}
\begin{proof}
	
	Suppose that $\varepsilon_1,\varepsilon_2$ are sufficiently small. According to the series expanding $W_0$ and $W_1$ (Section III.17 in \cite{2016PrincetonCompanion}), we have
	\begin{align}
		W_0(-e^{-(1+2\varepsilon)})=&-1+2\sqrt{\varepsilon}+O(\varepsilon)\\ W_{-1}(-e^{-(1+2\varepsilon)})=&-1-2\sqrt{\varepsilon}+O(\varepsilon)
	\end{align}
	So we can obtain 
	\begin{align}\label{equ:simplify_part1}
		& W_{-1}(-e^{-(1+2\varepsilon_1)})W_{-1}(-e^{-(1+2\varepsilon_2)})+ W_{-1}(-e^{-(1+2\varepsilon_1)})\nonumber\\
		&+ W_{-1}(-e^{-(1+2\varepsilon_2)})+1 \nonumber\\
		= & (W_{-1}(-e^{-(1+2\varepsilon_1)})+1)(W_{-1}(-e^{-(1+2\varepsilon_2)})+1) \nonumber\\
		= & (2\sqrt{\varepsilon_1}+O(\varepsilon_1))(2\sqrt{\varepsilon_2}+O(\varepsilon_2))\nonumber\\
		= & 4\sqrt{\varepsilon_1\varepsilon_2}+o(\varepsilon_1)+o(\varepsilon_2)
	\end{align}
	and
	
	\begin{align}\label{equ:simplify_part2}
		& -W_{-1}(-e^{-(1+2\varepsilon_2)})\left(\sqrt{2\varepsilon_1}+\sqrt{\dfrac{2\varepsilon_2}{-W_{0}(-e^{-(1+2\varepsilon_2)})}}\right)^2 \nonumber \\
		= &   (1+2\sqrt{\varepsilon_2}+O(\varepsilon_2))\left(\sqrt{2\varepsilon_1} + \sqrt{\dfrac{2\varepsilon_2}{1-2\sqrt{\varepsilon_2}+O(\varepsilon_2)}}   \right)^2 \nonumber \\
		\leq & (1+2\sqrt{\varepsilon_2}+O(\varepsilon_2))\left( 4\varepsilon_1 +  \dfrac{4\varepsilon_2}{1-2\sqrt{\varepsilon_2}+O(\varepsilon_2)} \right) \nonumber \\
		= & 4\varepsilon_1 + o(\varepsilon_1) + o(\varepsilon_2) + \dfrac{4\varepsilon_2 (1+2\sqrt{\varepsilon_2}+O(\varepsilon_2))}{1-2\sqrt{\varepsilon_2}+O(\varepsilon_2)} \nonumber \\
		= & 4\varepsilon_1 + o(\varepsilon_1) + o(\varepsilon_2)  +  4\varepsilon_2 +  \dfrac{4\varepsilon_2(4\sqrt{\varepsilon_2}+O(\varepsilon_2))}{1-2\sqrt{\varepsilon_2}+O(\varepsilon_2)} \nonumber \\
		= &  4\varepsilon_1 +  4\varepsilon_2 + o(\varepsilon_1) + o(\varepsilon_2)  + O(\varepsilon_2^{1.5})
	\end{align}	
	Using Equations \eqref{equ:simplify_part1} and \eqref{equ:simplify_part2}, we can rewrite the bound in Theorem \ref{thm:triangle_n1_n2_n3} as 
	\begin{align}
		&KL((\mathcal{N}(\bm{\mu}_1,\bm{\Sigma}_1)||\bm{\Sigma}(\bm{\mu}_3,\bm{\Sigma}_3))\nonumber \\
		<&3\varepsilon_1+3\varepsilon_2+2\sqrt{\varepsilon_1\varepsilon_2}+o(\varepsilon_1)+o(\varepsilon_2)
	\end{align}
	
	$\hfill\square$ 
\end{proof}


%

%
%
%
%

%

\ifCLASSOPTIONcaptionsoff
  \newpage
\fi

\end{document}